\DeclareMathOperator{\Tr}{Tr}
\DeclareMathOperator{\conv}{conv}
\DeclareMathOperator{\eps}{\varepsilon}
\def\ra{\rangle}
\def\la{\langle}
\def\R {\mathbb{R}}
\def\Z {\mathbb{Z}}
\def\F {\mathcal{F}}
\def\T {\mathbb{T}}
\def\N {\mathbb{N}}
\def \D {\mathcal{D}}
\def\h{\mathcal{H}}
\def\R {\mathbb{R}}
\def\Fc{\mathcal{F}^{\text{can}}}
\def\Fcr{\mathcal{F}^{\text{can}}_\rho}
\def\fc{f^{\text{can}}}
\newtheorem{maintheorem}{Theorem}
\newtheorem{theorem}{Theorem}[section]
\newtheorem{prop}[theorem]{Proposition}
\newtheorem{cor}[theorem]{Corollary}
\newtheorem{lemma}[theorem]{Lemma}
\newtheorem*{defi}{Definition}
\theoremstyle{remark}
\newtheorem{remark}[theorem]{Remark}
\numberwithin{equation}{section}
\title[The Bogoliubov-Bose-Hubbard model]{The Bogoliubov-Bose-Hubbard model: existence of minimizers and absence of quantum phase transition}
\author[N. Mokrza\'nski]{Norbert Mokrza\'nski}
\address{Department of Mathematical Methods in Physics, Faculty of Physics, University of Warsaw,  Pasteura 5, 02-093 Warszawa, Poland}
\email{norbert.mokrzanski@fuw.edu.pl} 
\author[M. Napi\'orkowski]{Marcin Napi\'orkowski}
\address{Department of Mathematical Methods in Physics, Faculty of Physics, University of Warsaw,  Pasteura 5, 02-093 Warszawa, Poland}
\email{marcin.napiorkowski@fuw.edu.pl}
\begin{document}

\begin{abstract}
We consider a variational approach to the Bose-Hubbard model based on Bogoliubov theory. We introduce the grand canonical and canonical free energy functionals for which we prove the existence of minimizers. By analyzing their structure we show the existence of a thermally driven phase transition by showing that the system is superfluid  at sufficiently  low  temperatures and insulating at high temperatures. In particular, we show that this model does not exhibit a quantum phase transition.
\end{abstract}

\maketitle

\tableofcontents

\section{Introduction}

The Bose-Hubbard Model is a simple yet effective lattice model of locally interacting bosons. The model was introduced in \cite{GerKnol} as a bosonic version of the already established fermionic Hubbard Model. It became popular after it was realized in \cite{FWGF} that the model exhibits a phase transition, nowadays called the \textit{superfluid-Mott insulator} phase transition. Since then it has become one of the most important models in quantum many-body physics. 

Loosely speaking, the \textit{superfluid phase} occurs when the strength of interaction between the particles is sufficiently small. It is  related to the Bose-Einstein condensation phenomenon (macroscopic occupation of a single quantum state). One of possible characterizations of this phase is that the $U(1)$ symmetry generated by the particle number operator is spontaneously broken in the thermodynamic limit. The other -  \textit{Mott insulator} - phase is expected when the repulsion between the particles is sufficiently large -- in this situation it is energetically favorable for particles not to move between the lattice sites. It can be characterized by the existence of an energy gap in the spectrum. We refer to, e.g., \cite{Sachdev} for more details.

Proving the existence of the phase transition in many-body quantum systems is usually a very hard task. This remains true for the Bose--Hubbard model for which a proof of the superfluid-Mott insulator phase transition remains an open problem (see \cite{ALSSY,FerFrohUe,StaPuszWoj} for some rigorous results).

In this paper, we analyze the properties of the Bose-Hubbard model by introducing a variational perspective on the problem. In the case of continuous bosonic systems such approach  was first introduced by Critchley and Solomon in \cite{CritSolo} and was later rather successfully  expanded by Napiórkowski, Reuvers and Solovej in \cite{NapReuSol1, NapReuSol2, NapReuSol3} (see also \cite{FouNapReuSol-19,MokPal-24}). A similar approach was also considered by Bach, Lieb and Solovej in the context of the fermionic Hubbard model \cite{BachLiebSol}. The main idea behind this approach is based on Bogoliubov theory, first presented in Bogoliubov's famous paper \cite{Bogo}, where the Hamiltonian of the system is replaced by an "effective" one which is quadratic in creation and annihilation operators and therefore can be solved exactly \cite{NamNapSol-16}. We know that ground and Gibbs states of such Hamiltonians are quasi-free (or coherent) states. Here, we reverse the idea, and retain the full Hamiltonian while only varying over Gaussian states (which include the aforementioned classes of states - see Appendix \ref{app:derivation} for relevant definitions and a derivation).

This approach leads to considering the following expression, further referred to as the \textit{Bogoliubov-Bose-Hubbard (BBH) functional}:
\begin{equation}
\label{BBH}
\begin{split}
\F(\gamma, \alpha, \rho_0) &= \int_{\T^3}(\eps(p) - \mu)\gamma(p) dp - \mu \rho_0 - TS(\gamma, \alpha) \\&+ \frac{U}{2}\left(\int_{\T^3} \alpha(p)dp\right)^2 + U\left(\int_{\T^3}\gamma(p)dp\right)^2 \\&+ U\rho_0\int_{\T^3} \alpha(p) dp + 2U\rho_0\int_{\T^3} \gamma(p) dp + \frac{U}{2}\rho_0^2.
\end{split}
\end{equation}
Here $\T^3 = [-\pi,\pi]^3$ is the $3$-dimensional torus, $\eps(p)$ is the dispersion relation given by
$$\eps(p) = 2\sum_{j=1}^3 (1- \cos p_j) = 4\sum_{j=1}^3 \sin^2 \frac{p_j}{2}, \; \qquad p = (p_1,p_2,p_3) \in \T^3,$$
the parameter $U>0$ is the strength of the on-site repulsion between particles and $S$ is the entropy density functional given by
\begin{align*}
S(\gamma,\alpha) &= \int_{\T^3} s(\gamma(p),\alpha(p))dp =\int_{\T^3}\left[\left(\beta(p)+\frac{1}{2}\right)\ln\left(\beta(p)+\frac{1}{2}\right)-\left(\beta(p)-\frac{1}{2}\right)\ln\left(\beta(p)-\frac{1}{2}\right)\right]dp,
\end{align*}
where
\begin{equation}
\label{beta_def}
\beta(p):=\sqrt{\left(\frac{1}{2}+\gamma(p)\right)^{2}-\alpha(p)^{2}}.
\end{equation}
In every integral $dp$ denotes normalized Haar measure on torus $\T^3$, where normalization means that the measure of the whole space is equal to one.

The functional is defined on the domain $\mathcal{D}$ given by
\begin{equation}
\label{domain}
\mathcal{D}=\{(\gamma,\alpha,\rho_{0}) \colon \gamma \in L^1(\T^3),\gamma(p)\geq 0, \alpha(p)^{2}\leq\gamma(p)(1+\gamma(p)), \rho_{0}\geq 0\}.
\end{equation}
It will be convenient to also introduce the notation
\begin{equation}
\label{domain_slice}
\D' = \{(\gamma,\alpha) \colon \gamma \in L^1(\T^3),\gamma(p)\geq 0, \alpha(p)^{2}\leq\gamma(p)(1+\gamma(p))\},
\end{equation}
that is $\D'$ is a layer of the domain $\D$ containing the functions $\gamma$ and $\alpha$.

The physical interpretation of the variables $(\gamma,\alpha,\rho_{0})$ is as follows. The non-negative number $\rho_0$ is the density of particles in the zero-momentum state and its strictly positive value will be interpreted as the presence of Bose--Einstein condensation (BEC). The function $\gamma(p)$ describes the (thermal) distribution of particles in momentum space. Finally, the function $\alpha (p)$ describes the so-called {\it pairing} or {\it anomalous density} in the system.

With this in mind, different phases of the system are characterized by different structures of the variational minimizers. In particular, a non-zero $\rho_0$ (interpreted as the existence of BEC) corresponds to Off Diagonal Long-Range Order (ODLRO) in the one-body density matrix. Within our variational framework, the superfluid phase is identified by a non-vanishing anomalous density $\alpha(p)\not \equiv 0$ in the minimizer. This nonzero pairing amplitude is a manifestation of ODLRO and reflects the aforementioned broken global $U(1)$ gauge symmetry associated with particle-number conservation. A different perspective that supports this notion is provided by the Landau criterion of superfluidity: as shown by Bogoliubov \cite{Bogo}, the existence of a superfluid response requires the presence of pairing correlations, which give rise to a gapless, linear excitation spectrum. Finally, the insulator phase is characterized by $\alpha(p)\equiv 0$. Note that we use the term \textit{insulator phase} rather than \textit{Mott insulator} since the variational model considered here does not capture the full many-body structure of a true Mott-insulating state of the (approximate) form  $|\psi \rangle = \prod_{x \in \Lambda} \frac{1}{\sqrt{n_0!}}(a^*_x)^{n_0} |0 \rangle$ for some $n_0 \in \N$ (see Appendix \ref{app:derivation} for the relevant definitions concerning this formula). While it is believed that the zero-temperature Bose–Hubbard model exhibits only two quantum phases, the situation at positive temperatures becomes more complicated (see e.g. \cite{CapProSvi} for some physical overview and numerics). Therefore the insulating phase that we obtain should rather be viewed as a generic insulating (non-superfluid) state rather than the Mott-insulator.

Rather recently, our model, although considered only for zero temperature $T=0$ and formulated in a slightly different manner, appeared in the physics literature in \cite{Cirac} where numerical investigations showed that at zero temperature there is no quantum phase transition (see also \cite{Stoof}). To be more precise, it means that for fixed $\mu>0$ and $T=0$ there is no phase transition as one varies $U$. We confirm these findings rigorously, by showing that at zero temperature and for $\mu>0$ the system is always superfluid. From a physical point of view, the chemical potential $\mu$ determines the average particle density: negative values correspond to an almost empty lattice, while positive values make it energetically favorable to add particles and hence lead to a finite filling. In our variational description, a positive chemical potential relative to the single-particle band minimum (which in our case is zero) implies that the lowest mode is macroscopically occupied, marking the onset of Bose–Einstein condensation and superfluidity.
Furthermore, we show that the system exhibits a (thermally driven) phase transition by proving that at sufficiently low temperatures the system remains superfluid while for sufficiently high temperatures it is an insulator (for fixed $U>0$ and $\mu>0$). The latter remains true for $\mu \leq 0$ and all $T$ and $U$. We stress that the established phase transition occurs when the temperature is changed. It would be interesting to prove or disprove the existence of a phase transition in $U$ with both $T\neq 0$ and $\mu$ fixed. This remains an open problem.




The functional can be also considered in the {\it canonical} formulation, that is for fixed total density $\rho:=\rho_0+\int\gamma$. This formulation imposes additional constraints on the variables and the proofs of the results in the grand canonical setting require some adjustments. However, the main conclusions regarding the existence of a phase transition remain the same. 


\section{Main results}

\subsection{Grand canonical approach.} Let us first consider the BBH functional in the grand canonical setting, i.e. the functional \eqref{BBH} on the domain $\mathcal{D}$ given in \eqref{domain}. We have the following fundamental well-posedness result.

\begin{maintheorem}
\label{main_existence}
For any $U > 0$, $\mu \in \R$ and $T \ge 0$ there exists a minimizer $(\gamma^{\min}, \alpha^{\min}, \rho_0^{\min})\in\mathcal{D}$ of the BBH functional \eqref{BBH}. If $\mu\leq 0$, then the minimizer is unique. Furthermore, for every minimizer the following equivalence holds: $\rho_0^{\min} = 0$ if and only if $\alpha^{\min}\equiv 0$.
\end{maintheorem}
Using this result and according to the discussion in the introduction, it is possible to define a state of the system.

\begin{defi}
\label{definition_phase}
Let $(\gamma^{\min}, \alpha^{\min}, \rho_0^{\min})$ be a minimizer of the BBH functional \eqref{BBH} for given values of the parameters $U$, $\mu$ and $T$. We will say that the system is
\begin{itemize}
    \item in the \textbf{superfluid phase} if $\alpha^{\min} \not \equiv  0$,
    \item in the \textbf{insulator phase} if $\alpha^{\min} \equiv 0$.
\end{itemize}
We will say that there occurs a phase transition with respect to a chosen parameter $U$, $\mu$ or $T$ if there exist values of such parameter where the system is in different phases (for other parameters being fixed). The lack of uniqueness of minimizers implies the possibility of the coexistence of phases.
\end{defi}

The second main result is about the phase diagram of the system. To state the result we define the function $J(T,\theta)$ for $T>0$ and $\theta \ge 0$ as
\begin{equation}
\label{J_func}
J(T,\theta) = \int_{\T^3}\left(e^{\frac{1}{T}\left(\eps(p) + \theta\right)}-1\right)^{-1} dp.  
\end{equation}
As the integrand is a pointwise increasing function with respect to $T$ (for fixed $\theta)$ and a decreasing function with respect to $\theta$ (for fixed $T$), it follows that the function $J(T,\theta)$ has the same monotinicity properties and, due to the Monotone Convergence Theorem, it is continuous in each variable (it is not necessarily jointly continuous). It is also easy to check that for fixed $\theta$ we have
\begin{equation}\label{eq:J(T)asymp}
\lim_{T \to 0^+}J(T,\theta) = 0, \quad \lim_{T \to +\infty}J(T,\theta) = +\infty
\end{equation}
and for fixed $T$
$$\lim_{\theta \to +\infty}J(T,\theta) = 0.$$
This function will be important in the analysis of structure of the minimizers.

We have the next main theorem of this paper.

\begin{maintheorem}
\label{main_structure}
Let $(\gamma^{\min}, \alpha^{\min}, \rho_0^{\min})$ be a minimizing triple of the BBH functional \eqref{BBH} for fixed values of $U$, $\mu$ and $T$. We have the following characterization of the phases:
\begin{enumerate}
    
    \item If $T = 0$ and $\mu \le 0$, then the minimizer is the vacuum: $\gamma ^{\min}\equiv 0$, $\alpha^{\min} \equiv 0$, $\rho_0^{\min} = 0$.
    
    \item If $T = 0$ and $\mu > 0$, then $\rho_0^{\min} > 0$, that is for zero temperature the minimizer exhibits condensation. Moreover $\gamma^{\min} > 0$ and $\alpha^{\min} < 0$.

    \item If $T>0$ and $\mu \le 0$ then the minimizer is unique and satisfies $\gamma^{\min} > 0$, $\alpha^{\min} \equiv 0$ and $\rho_0 = 0$.
    
    \item If $T>0$, $\mu > 0$ and $J(T,0) < \frac{\mu}{2U}$, then $\gamma^{\min} > 0$, $\alpha^{\min} < 0$ and $\rho_0^{\min} > 0$.

    \item If $T > 0$, $\mu > 0$ and $J(T,4\mu+2U) > \frac{\mu}{2U}$ then the minimizer is unique and satisfies $\gamma^{\min} > 0$, $\alpha^{\min} \equiv 0$ and $\rho_0^{\min} = 0$.
    \end{enumerate}
\end{maintheorem}
\noindent A qualitative phase diagram in the parameter space  $(T,\mu)$  is shown in Fig. \ref{fig:phase_diagram}. Notably, this diagram resembles the phase diagram of the continuous Bose gas presented in \cite[Fig. 1]{NapReuSol1}. However, in our case, we provide an explicit (albeit non-optimal) description of the curves that bound the unknown region of the phase diagram.

\begin{figure}[ht]
\centering
\includegraphics[width=0.6\textwidth]{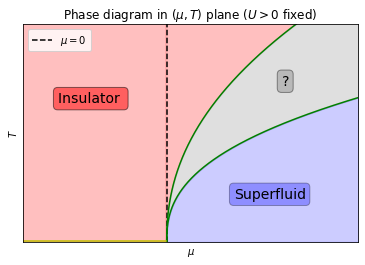}
\caption{Qualitative grand canonical phase diagrams of the model. The boundaries of the unknown area (marked as $"?"$) are given by the relations $J(T,0) = \frac{\mu}{2U}$ and $J(T,4\mu + 2U) = \frac{\mu}{2U}$.}
\label{fig:phase_diagram}
\end{figure}

\begin{remark}  Let us make the following remarks about the results stated above.
\begin{enumerate} 
\item Recall that we interpret the $\alpha(p)\equiv 0$ phase as the insulator phase. Thus, point $(2)$ of Theorem \ref{main_structure} implies the absence of a (quantum) phase transition at $T=0$ as discussed in the introduction.
\item Cases $(4)$ and $(5)$ show that the system exhibits, given fixed $\mu, U>0$, a phase transition as one varies the temperature $T$ from $0$ to $\infty$. This follows from the fact that  $J(T,0) > J(T,4\mu+2U)$ and the monotonicity given in \eqref{eq:J(T)asymp}.
\item Similarly, points $(1)$ and $(4)$ prove that there exists  a superfluid-insulator phase transition as one varies $\mu$. 
\item   Recall that $\rho_0$ is condensate density. Theorem \ref{main_existence} proves that BEC and superfluidity are equivalent in this model. The same was true for the Bose gas (see \cite[Theorem 2.5.]{NapReuSol1}).  
  \end{enumerate}
\end{remark} 
 
\noindent Proofs of Theorems \ref{main_existence} and \ref{main_structure} will be given in   Sections \ref{existence_section} and \ref{section_phase}. Right now we will briefly sketch the main ideas behind them. The general strategy follows \cite{NapReuSol1}. Concerning the first result, we start by proving that the functional \eqref{BBH} is bounded from below and convex in $(\gamma, \alpha)$ for fixed $\rho_0$. This allows to split the minimization procedure into two parts: first minimization with respect to $(\gamma, \alpha)$ for fixed $\rho_0$, then minimization of the result with respect to $\rho_0$. To overcome the occurring problem of non-reflexivity of the $L^1$ space, we also introduce the cut-off for the $\gamma$ function, that is we consider the domain on which $\gamma \le \kappa$ for some $\kappa > 0$. It will be a technical tool that will allow to use standard methods of the calculus of variations. This approach will give us a sequence of minimizers of the restricted problem -- we will then show that it is possible to extract a subsequence that is convergent (in a proper sense) to some $(\gamma, \alpha, \rho_0)$ and that this triple is a desired minimizer of the full problem. Our main tool at this stage will be analysis of the Euler-Lagrange equations and inequalities that are satisfied by the restricted minimizers. In order to obtain the final result, we will separate cases concerning zero and non-zero temperature, as we will use some different methods in each of them. In the non-zero temperature case we will further separate the problem into two sub-cases whether the minimizing $\rho_0$, appearing in the double minimization method, is positive or not. This will cover all the possibilities and eventually lead to a proof of Theorem \ref{main_existence}. All the details are presented in the Section \ref{existence_section}.

In order to prove Theorem \ref{main_structure} we will further analyse the Euler-Lagrange inequalities that arise from the existence of the unrestricted minimizers. In Section \ref{section_phase} we will use them to investigate the properties of the minimizers, which will then allow us to directly describe the relation between the values of the system's parameters and occurrence of different phases.

As already mentioned, the main steps of reasoning stem from \cite{NapReuSol1}. There are, however, some differences concerning certain aspects of the proofs. The first difference is that the integrals in \eqref{BBH} are defined over the compact set $\T^3$. For some arguments (mainly concerning the existence of minimizers) this is an advantage. For other parts of the analysis (e.g. concerning the phase diagram) this required modifications and refinements of the proofs.  Another difference is that the dispersion relation $\eps(p)$ is not spherically symmetric. One of the key properties of the system considered in \cite{NapReuSol1} was spherical symmetry of the minimizers (including restricted ones), which followed from the spherical symmetry of the dispersion relation of the model considered therein. This property was essential in the proof of existence of minimizers, therefore the method presented therein cannot be directly adapted to our situation. This issue led to a new strategy of the proof which didn't require such property. The main idea of the new method is to consider (in the double minimization procedure) the values of $\rho_0$ which are the minimizers of the restricted problems $\rho_{0,\kappa}$. This allows to use the variational derivative condition in this variable. A similar idea (for the continuous Bose gas) was used by Oldenburg in \cite{Old} to simplify the proof of existence of minimizers at zero temperature. Here this approach was adapted to the BBH functional, also at positive temperatures, which allowed not only to deduce existence of the minimizers, but also derive some of their properties used to prove the existence (or nonexistence) of the phase transition in certain parameter regimes. In particular our method allowed us to derive much more precise bounds on the critical region of the phase diagram than in \cite{NapReuSol1}.

\subsection{Canonical approach.} 
 
Let us now discuss the canonical setting. In the canonical formulation of the problem we are interested in finding minimizers of the BBH functional \eqref{BBH} under the condition that total density of particles in the system $\rho=\rho_0+\int\gamma$ is fixed. More precisely, we consider the canonical version of the functional \eqref{BBH} defined as
\begin{equation}
\label{BBHcan}
\begin{split}
\Fc(\gamma, \alpha, \rho_0) &= \int_{\T^3}\eps(p)\gamma(p) dp - TS(\gamma, \alpha) + \frac{U}{2}\rho^2
\\&+ \frac{U}{2}\left[\left(\int_{\T^3} \alpha(p)dp\right)^2 + \left(\int_{\T^3}\gamma(p)dp\right)^2\right]
\\&+ U\rho_0\left(\int_{\T^3} \alpha(p) dp + \int_{\T^3} \gamma(p) dp\right).
\end{split}
\end{equation}
Notice that this functional has the same form as \eqref{BBH} with chemical potential $\mu = 0$. We investigate existence of minimizers of the problem
\begin{equation}
\label{can_min}
\inf_{\D(\rho)} \Fc(\gamma, \alpha, \rho_0),   
\end{equation}
where, for fixed $\rho \ge 0$, we denoted
$$\D(\rho) = \left\{(\gamma, \alpha, \rho_0) \in \D \colon \int_{\T^3} \gamma(p) dp + \rho_0 = \rho\right\}.$$
Since for $\rho = 0$ this problem is trivial we will further assume that $\rho > 0$. 

As in this formulation $\rho_0$ is fully determined by $\rho$ and $\gamma$, we can reformulate the above minimization problem as follows: let
$$\D'(\rho) = \left\{ (\gamma, \alpha) \in \D' \colon \int_{\T^3}\gamma(p)dp \le \rho\right\}$$
and, for $(\gamma, \alpha) \in \D'(\rho)$, let
$$\Fcr(\gamma, \alpha) = \Fc\left(\gamma, \alpha, \rho - \int \gamma\right).$$
The minimization problem \eqref{can_min} is equivalent to finding a minimizer of
\begin{equation}
\label{can_min2}
\inf_{\D'(\rho)}\Fcr(\gamma,\alpha).  
\end{equation}
We shall prove the following result, which is a canonical setting analogue of Theorems \ref{main_existence} and  \ref{main_structure}.
\begin{maintheorem}
\label{can_existence}
For any $U > 0$, $\rho \ge 0$ and $T \ge 0$ there exist a minimizer of the problem \eqref{can_min} (or equivalently for the problem \eqref{can_min2}). For given $\rho >0$ denote $T_1$ as the temperature such that $J(T_1,0) = \rho$ and $T_2$ as the temperature such that $J(T_2, 2U\rho) = \rho$. If $(\gamma^{\min}, \alpha^{\min}, \rho_0^{\min}) \in \D(\rho)$ is a minimizing triple, then
\begin{enumerate}
    \item if $T < T_1$ (including $T=0$) we have $\rho_0^{\min} > 0$,
    \item if $T \ge T_2$ we have $\rho_0^{\min} = 0$ and the minimizer is unique.
\end{enumerate}
\end{maintheorem}

Thus, similarly as proven in Theorem \ref{main_structure}, the canonical BBH model exhibits a thermal phase transition and has no quantum phase transition.

\begin{remark}
We believe that in the canonical (resp. grand canonical) setup, the system has one critical temperature given by  the equation $J(T_c,0) = \rho$  (resp. $J(T_c,0) = \frac{\mu}{2U}$). 
\end{remark}
The proof of Theorem \ref{can_existence} will be given in Section \ref{canonical}. It is a modification of the proofs in the grand canonical approach and therefore only the main changes will be highlighted.

\section{Existence of minimizers}
\label{existence_section}

\subsection{Boundedness from below}
Let us introduce the  notation for the density of particles outside the condensate
$$\int_{\T^3}\gamma(p) dp = \rho_\gamma.$$
With this, we can rewrite the functional $\F$ as
\begin{equation}
\label{func2}
\begin{split}
\F(\gamma, \alpha, \rho_0) &= \int_{\T^3}\eps(p)\gamma(p) dp - \mu (\rho_\gamma + \rho_0) - TS(\gamma, \alpha) \\&+ \frac{U}{2}\left(\int_{\T^3} \alpha(p)dp\right)^2 + U\rho_0\int_{\T^3} [\gamma(p) + \alpha(p)] dp \\& + U\rho_\gamma^2 + U\rho_0\rho_\gamma + \frac{U}{2}\rho_0^2.
\end{split}
\end{equation}
We start with a simple yet fundamental proposition.
\begin{prop}
\label{boundedness}
The BBH functional $\F$ is bounded from below.
\end{prop}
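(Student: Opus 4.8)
The plan is to decompose $\F$, in the form \eqref{func2}, into a sum of contributions each of which is manifestly bounded from below; the only genuinely dangerous term is the entropy $-TS$, and the whole argument hinges on controlling it by the kinetic term $\int\eps\gamma$ and the interaction $U\rho_\gamma^2$.

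First I would reduce the entropy to that of an ideal Bose gas. Since $s(\gamma,\alpha)$ enters only through $\beta=\sqrt{(1/2+\gamma)^2-\alpha^2}$ and $s$ is increasing in $\beta$ (one checks $\partial_\beta s=\ln\frac{\beta+1/2}{\beta-1/2}>0$ on $\beta>1/2$), while $\beta$ is maximal at $\alpha=0$, the constraint $\alpha^2\le\gamma(1+\gamma)$ (which also guarantees $\beta\ge 1/2$, so that $S\ge 0$ is well defined) gives $S(\gamma,\alpha)\le\int_{\T^3} s_0(\gamma)\,dp$ with $s_0(\gamma)=(1+\gamma)\ln(1+\gamma)-\gamma\ln\gamma$. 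Then I would treat $\int\eps\gamma-TS$ as a free-energy density and minimize pointwise in $\gamma$: the minimizer is the Bose--Einstein profile $\gamma^\ast(p)=(e^{\eps(p)/T}-1)^{-1}$ and the value is $T\ln(1-e^{-\eps(p)/T})$, so that $\int\eps\gamma-TS\ge\int_{\T^3}T\ln(1-e^{-\eps/T})\,dp$.

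The step I expect to be the main obstacle is precisely showing that this last integral is finite, since it is where the entropy competes with the vanishing of $\eps$ at the origin: near $p=0$ one has $\eps(p)\sim|p|^2$, the integrand behaves like $2T\ln|p|$, and finiteness relies on the integrability of $\ln|p|$ in three dimensions. (For $T=0$ there is no entropy and $\int\eps\gamma\ge0$, so this is immediate.)

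Finally I would dispose of the chemical-potential and condensate terms by elementary estimates. The non-condensate part satisfies $U\rho_\gamma^2-\mu\rho_\gamma\ge-\mu^2/(4U)$; here the interaction is indispensable when $\mu>0$, because on the set $\{\eps<\mu\}$ the linear term $-\mu\rho_\gamma$ cannot be absorbed into the ideal-gas bound above. For the remaining terms I would bound the pairing cross term via $|\int\alpha|\le\int\sqrt{\gamma(1+\gamma)}\le\rho_\gamma+\tfrac12$, which gives $U\rho_0\int[\gamma+\alpha]\ge-\tfrac U2\rho_0$, so that together with $\tfrac U2\rho_0^2$ and $-\mu\rho_0$ one obtains the parabola $\tfrac U2\rho_0^2-(\mu+\tfrac U2)\rho_0\ge-(\mu+U/2)^2/(2U)$, while $\tfrac U2(\int\alpha)^2$ and $U\rho_0\rho_\gamma$ are simply discarded as nonnegative. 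Adding these three lower bounds produces a finite constant independent of $(\gamma,\alpha,\rho_0)$, which proves the proposition.
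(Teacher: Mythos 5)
Your proposal is correct and follows essentially the same route as the paper's proof: the same decomposition \eqref{func2}, the pointwise bound $\gamma+\alpha\ge-\tfrac12$ from the domain constraint, completion of squares in $\rho_0$ and $\rho_\gamma$, and the ideal-gas lower bound $\int_{\T^3}\eps\gamma\,dp-TS\ge T\int_{\T^3}\ln\left(1-e^{-\eps(p)/T}\right)dp$ with finiteness checked via $\eps(p)\sim|p|^2$ near the origin. The only (harmless) difference is that you first reduce the entropy to $\alpha=0$ by monotonicity of $s$ in $\beta$ and then minimize pointwise in $\gamma$, whereas the paper obtains the same bound directly from the variational derivative.
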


\begin{proof} The condition $\alpha^2 \le \gamma(\gamma+1)$ implies that for almost all $p \in \T^3$ we have
$$\gamma(p) + \alpha(p) \ge -\frac12.$$
This implies that
\begin{equation}
\label{constantC}
\begin{split}
&U\rho_0\int_{\T^3} [\gamma(p) + \alpha(p)] dp + U\rho_\gamma^2 +U\rho_0\rho_\gamma + \frac{U}{2}\rho_0^2 - \mu (\rho_\gamma + \rho_0)
\\&\ge -\frac{U}2 \rho_0 + \frac{U}{2}\rho_0^2 + \frac{U}{2}\rho_\gamma^2 - \mu(\rho_0 + \rho_\gamma)
\\& = \left[\frac{U}{2}\rho_0^2 - \left(\frac{U}{2} + \mu\right)\rho_0\right] + \left[\frac{U}{2}\rho_\gamma^2 - \mu\rho_\gamma\right]
\\&= \frac{U}{2}\left(\rho_0 - \frac{U + 2\mu}{U}\right)^2 + \left(\rho_\gamma - \frac{2\mu}{U}\right)^2 - \left(\frac{U}{8} + \frac{\mu}{2} + \frac{\mu^2}{U}\right).
\end{split}
\end{equation}
Now let us focus on kinetic and entropy terms of the functional \eqref{func2}:
$$\int_{\T^3}\eps(p)\gamma(p) dp - TS(\gamma, \alpha).$$
For $T=0$ this expression is positive, so it is trivially bounded below. For $T>0$, by computing the variational derivative of the above functional, we get
\begin{equation}
\label{C_T}
\int_{\T^3} \eps(p)\gamma(p)dp - TS(\gamma, \alpha)  \ge T\int_{\T^3} \ln\left(1 - e^{-\eps(p)/T}\right)dp
\end{equation}
with minimizer
$$\gamma_0(p) = (e^{\eps(p)/T} -1)^{-1}.$$
Note that the integral in \eqref{C_T} is finite as in the neighborhood of $p=0$ we have $\eps(p) = p^2 + o(p^2)$ and therefore $\ln (1 - e^{-\eps(p)/T}) \sim \ln (p^2/T)$, which is an integrable function on $\T^3$. Thus we can write
\begin{equation}
\label{kin_ent_est}
T\int_{\T^3} \ln\left(1 - e^{-\eps(p)/T}\right)dp =: -C_T > -\infty, 
\end{equation}
where $C_T$ is the absolute value of this integral. Since every remaining term in functional \eqref{func2} is positive, this ends the proof.
\end{proof}
From the above proof we have obtained an explicit lower bound on the functional.
\begin{cor}
\label{bounded_seq}
For any $(\gamma, \alpha, \rho_0) \in \D$ we have
$$\F(\gamma, \alpha, \rho_0) \ge \frac{U}{2}\left(\rho_0 - \frac{U + 2\mu}{U}\right)^2 + \left(\rho_\gamma - \frac{2\mu}{U}\right)^2 - \left(\frac{U}{8} + \frac{\mu}{2} + \frac{\mu^2}{U}\right) - C_T,$$
where $C_T$ was defined in \eqref{kin_ent_est}. In particular, for sequences $(\gamma_n, \alpha_n, \rho_{0,n})$ minimizing the BBH functional $\F$ the sequence $\rho_{0,n}$ is bounded and $\gamma_n$ is bounded in $L^1(\T^3)$. Additionally, since $\alpha_n \le \sqrt2 \max \{1, \gamma_n\}$, sequence $\alpha_n$ is also bounded in $L^1(\T^3).$
\end{cor}

\subsection{Convexity and double minimization}

The next part of the proof concerns the double minimization procedure. We will first analyze convexity of the functional $\F$.
\begin{prop}
\label{convexF}
The domain $\D$ is a convex set and the BBH functional $\F$ is jointly convex in $(\gamma, \alpha)$. If $T>0$ then for a fixed $\rho_0$ the functional $\F$ is strictly jointly convex in $(\gamma, \alpha)$.
\end{prop}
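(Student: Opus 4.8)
The three assertions are largely independent, so I would treat them in turn, expecting the entropy term to carry the real analytic weight. For the convexity of $\D$, observe that $\{\gamma\ge0\}$ and $\{\rho_0\ge0\}$ are half-spaces and hence convex, so only the pairing constraint matters. Completing the square, $\gamma(1+\gamma)=(\gamma+\tfrac12)^2-\tfrac14$, and using $\gamma+\tfrac12>0$, one rewrites it as
$$\alpha(p)^2\le\gamma(p)(1+\gamma(p))\quad\Longleftrightarrow\quad \gamma(p)\ge\sqrt{\tfrac14+\alpha(p)^2}-\tfrac12 .$$
Since the right-hand side is a convex function of $\alpha(p)$, each pointwise constraint defines a convex set (a superlevel set of a concave function); intersecting these over $p\in\T^3$ and with the two half-spaces shows that $\D$, and likewise $\D'$, is convex.

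For the convexity of $\F$ I would split the functional. The terms $\int(\eps-\mu)\gamma$ and $-\mu\rho_0$ are linear, and the two pure squares $\tfrac U2(\int\alpha)^2$ and $U(\int\gamma)^2$ are convex, being squares of linear functionals. The decisive term is $-TS$, and the heart of the matter is to show that the density $s(\gamma,\alpha)=\sigma(\beta)$, with $\sigma(\beta)=(\beta+\tfrac12)\ln(\beta+\tfrac12)-(\beta-\tfrac12)\ln(\beta-\tfrac12)$, is concave. I would argue by composition: a direct computation gives $\sigma'(\beta)=\ln\frac{\beta+1/2}{\beta-1/2}>0$ and $\sigma''(\beta)=-(\beta^2-\tfrac14)^{-1}<0$, so $\sigma$ is increasing and strictly concave on $(\tfrac12,\infty)$, while $\beta=\sqrt{(\gamma+\tfrac12)^2-\alpha^2}$ is concave on $\D'$, its Hessian $\beta^{-3}\bigl(\begin{smallmatrix}-\alpha^2 & (\gamma+\frac12)\alpha\\ (\gamma+\frac12)\alpha & -(\gamma+\frac12)^2\end{smallmatrix}\bigr)$ being negative semidefinite. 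A concave increasing function of a concave function is concave, so $s$ is concave pointwise and $S$ is concave after integration; hence $-TS$ is convex.

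It remains to treat the strict statement. When $\rho_0$ is fixed the cross terms $U\rho_0\int\alpha$ and $2U\rho_0\int\gamma$ become linear in $(\gamma,\alpha)$ and are therefore harmless, so it suffices to upgrade concavity of $S$ to strict concavity for $T>0$. The unique null direction of the Hessian of $\beta$ is $(\gamma+\tfrac12,\alpha)$, along which $\nabla\beta\cdot(\gamma+\tfrac12,\alpha)=\beta\ne0$; hence in the second variation $\sigma''(\beta)(\nabla\beta\cdot v)^2+\sigma'(\beta)\,v^\top H_\beta v$ the two nonpositive contributions cannot vanish simultaneously, so $s$ is strictly concave on the interior $\{\beta>\tfrac12\}$. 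Since two distinct admissible pairs differ on a set of positive measure and the boundary curve $\alpha^2=\gamma(1+\gamma)$ contains no segment, strict concavity of the integral $S$ follows, and adding the convex pure-square and linear terms yields strict convexity of $\F$ in $(\gamma,\alpha)$.

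For the \emph{joint} convexity the only remaining issue is the quadratic block coupling $\rho_0$ to $\int\gamma$ and $\int\alpha$, and here I would reorganize these terms into a sum of squares of linear functionals, using the domain bound $|\int\alpha|\le\int\gamma+\tfrac12$ (from $|\alpha|\le\gamma+\tfrac12$ pointwise) to dominate the cross terms rather than treating them as a free quadratic form. This is exactly the step I expect to be the main obstacle, alongside the (strict) concavity of the entropy density: the coupling between $\rho_0$, $\int\gamma$ and $\int\alpha$ is what obstructs a one-line Hessian argument, and it is precisely where the geometry of the domain $\D$, rather than the algebraic form of the functional alone, must be invoked.
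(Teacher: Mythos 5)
Your handling of the two easier assertions is correct and in fact more detailed than the paper's own proof, which dismisses both with ``direct computation'': the epigraph description $\gamma\ge\sqrt{\tfrac14+\alpha^2}-\tfrac12$ of the pairing constraint, and the composition argument for the entropy ($\sigma$ increasing and strictly concave, $\beta$ concave with rank-one Hessian whose kernel direction $(\gamma+\tfrac12,\alpha)$ is transverse to $\ker\nabla\beta$) are clean and correct, and they do yield the fixed-$\rho_0$ strict convexity in $(\gamma,\alpha)$ for $T>0$.

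The genuine gap is that you never prove the \emph{joint} convexity in $(\gamma,\alpha,\rho_0)$, which is the main content of the proposition, and the strategy you defer to (a sum-of-squares reorganization with cross terms dominated via $|\int\alpha|\le\int\gamma+\tfrac12$) cannot succeed. After completing the square as the paper does in \eqref{conv_expr}, the residual coupling is $2U\rho_0\int\gamma$, and no bound involving $\alpha$ helps in directions with $\alpha\equiv0$: along the admissible segment $P_t=\bigl((1-t)\mathbbm{1},\,0,\,t\bigr)\in\D$, $t\in[0,1]$, one computes at $T=0$ (with the normalized measure, so $\int\eps=6$)
$$\F(P_t)=(6-\mu)(1-t)-\mu t+U\Bigl(1-\frac{t^2}{2}\Bigr),$$
which is strictly \emph{concave} in $t$; rescaling $(\gamma,\rho_0)\mapsto(\lambda\gamma,\lambda\rho_0)$ makes the quadratic midpoint violation $\tfrac{U\lambda^2}{8}$ beat the entropy's $O(T\ln\lambda)$ concavity gap, so the defect persists for every $T>0$ as well. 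Your instinct that this coupling is the crux is therefore vindicated, but note that the paper's own proof disposes of exactly this term by asserting that $(x,y)\mapsto xy$ is convex on $[0,\infty)\times[0,\infty)$ --- an assertion that is false (its value $\tfrac14$ at $\bigl(\tfrac12,\tfrac12\bigr)$ exceeds the average $0$ of its values at $(1,0)$ and $(0,1)$), so neither your sketch nor the paper's bullet closes this step. What your argument does establish is the convexity of $\D$, convexity (strict for $T>0$) of $\F$ in $(\gamma,\alpha)$ at fixed $\rho_0$, and convexity in $\rho_0$ at fixed $(\gamma,\alpha)$; the full three-variable claim would require a restricted domain or a genuinely different argument, not a rearrangement of squares.
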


\begin{proof}
Convexity of $\D$ follows from direct computation. To prove convexity of $\F$ first check by direct computation that Hessian of entropy function $-s(\gamma,\alpha)$, regarded as function of $\gamma$ and $\alpha$, is positive definite, therefore the function $(-s)$ is strictly convex in $(\gamma, \alpha)$ (note that this term contributes only when $T > 0$). Now notice that we can rewrite the functional $\F(\gamma, \alpha, \rho_0)$ in the following way:
\begin{equation}
\label{conv_expr}
\begin{split}
\F(\gamma, \alpha, \rho_0) &=
\int_{\T^3}(\eps(p) - \mu)\gamma(p) dp - \mu\rho_0 - TS(\gamma, \alpha)
\\&+ \frac{U}{2}\left(\int_{\T^3} \alpha(p) dp + \rho_0\right)^2 + U\left(\int_{\T^3} \gamma(p) dp \right)^2 + 2U\rho_0 \int_{\T^3} \gamma(p) dp.
\end{split}
\end{equation}
This functional is jointly convex in $(\gamma, \alpha)$, due to the fact that every term in the above sum for fixed $\rho_0$ is convex in $(\gamma, \alpha)$. Indeed, the non-linear terms in $\gamma$ or $\alpha$ are the following:
\begin{itemize}
    \item entropy part, which we already mentioned, is strictly convex in $(\gamma, \alpha)$
    \item quadratic terms, which are jointly convex (even in three variables $(\gamma, \alpha, \rho_0)$) due to the fact that function $(x,y) \mapsto (x+y)^2$ is convex.
\end{itemize} 
For $T>0$ and fixed $\rho_0$ functional $\F$ is strictly convex in $(\gamma, \alpha)$ due to existence of strictly convex entropy term.
\end{proof}

Now we proceed by splitting the minimization problem into two parts. Define function $f$ as
\begin{equation}
\label{double_min}
f(\rho_0) = \inf_{\D'} \F(\gamma, \alpha, \rho_0),
\end{equation}
where $\D'$ was introduced in \eqref{domain_slice}. Then we have
$$\inf_{\D} \F(\gamma, \alpha, \rho_0) = \inf_{\rho_0} f(\rho_0).$$
We will investigate some properties of the function $f$.
\begin{lemma}
\label{f_cont}
Function $f$ is continuous on the interval $[0,+\infty)$.
\end{lemma}
\begin{proof}
First consider a "convexified" version of the BBH functional
\begin{equation}
\begin{split}
\F^{\conv}(\gamma, \alpha, \rho_0) &=
\int_{\T^3}(\eps(p) - \mu)\gamma(p) dp - \mu\rho_0 - TS(\gamma, \alpha)
\\&+ \frac{U}{2}\left(\int_{\T^3} \alpha(p) dp + \rho_0\right)^2 + U\left(\int_{\T^3} \gamma(p) dp + \rho_0\right)^2
\\& = \F(\gamma, \alpha, \rho_0) + U\rho_0^2.
\end{split}
\end{equation}
Using arguments as in the previous proposition we deduce $\F^{\conv}$ is jointly convex in three variables. Now define $f^{\conv}(\rho_0)$ as 
$$f^{\conv}(\rho_0) = \inf_{\D'} \F^{\conv}(\gamma, \alpha, \rho_0).$$
This function is convex, which in particular implies continuity on the open interval $(0,\infty)$.

Continuity of $f^{\conv}$ at the point $\rho_0 = 0$ requires some extra arguments. From convexity we have the inequality
$$f^{\conv}(0) \ge \lim_{\rho_0 \to 0+} f^{\conv}(\rho_0) =:g.$$
To prove the reverse inequality we will show that there exists a sequence $(\gamma_n, \alpha_n) \in \D'$ such that
\begin{equation}
\label{cont_claim}
\lim_{n \to \infty}\F^{\conv}(\gamma_n, \alpha_n, \rho_0=0) = g.  
\end{equation}
Every element of this sequence is bounded below by $f^{\conv}(0)$, so this will imply the desired inequality. In order to prove existence of such sequence note that for any $\rho_0$ we can find $\gamma$ and $\alpha$ such that $\F^{\conv}(\gamma, 
\alpha, \rho_0)$ is arbitrarily close to $f^{\conv}(\rho_0)$ (i.e. $\gamma$ and $\alpha$ are parts of the minimizing sequence). It follows that, using the diagonal method, we can find a sequence $(\gamma_n,\alpha_n,\rho_{0,n})$ with $\rho_{0,n} \to 0$ such that
\begin{equation*}
\lim_{n \to \infty} \F^{\conv}(\gamma_n,\alpha_n,\rho_{0,n}) = g.   
\end{equation*}
We will show that
\begin{equation}
\label{cont_claim2}
\lim_{n \to \infty} \left(\F^{\conv}(\gamma_n,\alpha_n,\rho_{0,n}) - \F^{\conv}(\gamma_n,\alpha_n,0)\right) = 0.   
\end{equation}
We have
\begin{equation}
\label{func_difference}
\begin{split}
&\F^{\conv}(\gamma_n,\alpha_n,\rho_{0,n}) - \F^{\conv}(\gamma_n,\alpha_n,0)
\\&= -\mu \rho_{0,n} + \frac{U}{2}\rho_{0,n}\left(2\int_{\T^3} \alpha_n(p) dp + \rho_{0,n}\right) + U\rho_{0,n}\left(2\int_{\T^3} \gamma_n(p) dp + \rho_{0,n}\right)   
\end{split}    
\end{equation}
By Corollary \ref{bounded_seq} both sequences $(\gamma_n)$ and $(\alpha_n)$ are uniformly bounded in $L^1(\T^3)$, hence passing to the limit $n \to \infty$ in \eqref{func_difference} proves \eqref{cont_claim2} from which the claim \eqref{cont_claim} follows directly.


As a result we also obtain continuity of $f$ since
$$f(\rho_0) = f^{\conv}(\rho_0) - U\rho_0^2.$$
\end{proof}

Observe that using Corollary \ref{bounded_seq} we get
$$\lim_{\rho_0 \to \infty} f(\rho_0) = +\infty.$$
Combining it with the lemma above  gives us the following result.
\begin{cor}
\label{rho_0_min}
There exists a $\rho_0^{\min}$ such that
$$\inf_{\D} \F(\gamma, \alpha, \rho_0) = f(\rho_0^{\min}).$$
In particular $\rho_0^{\min}$ minimizes the function $f$.
\end{cor}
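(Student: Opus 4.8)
The plan is to obtain $\rho_0^{\min}$ as a minimizer of the one–variable function $f$ on $[0,\infty)$ by combining the continuity already proved in Proposition \ref{f_cont} with the coercivity $\lim_{\rho_0\to\infty}f(\rho_0)=+\infty$ recorded just above, and then to read off the statement about $\F$ from the double minimization identity $\inf_{\D}\F=\inf_{\rho_0}f(\rho_0)$. In essence this is an application of the Weierstrass extreme value theorem after reducing the minimization to a compact interval; no new hard estimate is required, since all the analytic content sits in the preceding results.

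First I would check that the infimum is finite. By Proposition \ref{boundedness} the functional $\F$ is bounded from below, so $f(\rho_0)=\inf_{\D'}\F(\gamma,\alpha,\rho_0)\ge -C$ for every $\rho_0\ge 0$, and hence
\begin{equation*}
m:=\inf_{\rho_0\ge 0}f(\rho_0)\in\R
\end{equation*}
is a well-defined real number. Next I would use the coercivity to confine the problem to a compact set: since $f(\rho_0)\to+\infty$ as $\rho_0\to\infty$, there is some $R>0$ with $f(\rho_0)>m+1$ for all $\rho_0>R$, so that in fact $m=\inf_{\rho_0\in[0,R]}f(\rho_0)$.

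It then remains to invoke Proposition \ref{f_cont}: $f$ is continuous on $[0,+\infty)$, hence continuous on the compact interval $[0,R]$, and therefore attains its minimum there at some point $\rho_0^{\min}\in[0,R]$, giving $f(\rho_0^{\min})=m$. Finally, by the double minimization identity $\inf_{\D}\F=\inf_{\rho_0}f(\rho_0)$ established before Proposition \ref{f_cont}, I would conclude
\begin{equation*}
\inf_{\D}\F(\gamma,\alpha,\rho_0)=m=f(\rho_0^{\min}),
\end{equation*}
which is exactly the assertion, and in particular exhibits $\rho_0^{\min}$ as a minimizer of $f$.

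I do not expect any genuine obstacle here, as the corollary is a direct harvest of the three preceding results (boundedness below, coercivity, and continuity). The only points deserving a word of care are verifying that $m$ is finite—so that ``$>m+1$'' is meaningful—which follows from Proposition \ref{boundedness}, and the elementary reduction from $[0,\infty)$ to a compact interval, which the coercivity makes immediate.
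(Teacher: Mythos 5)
Your proof is correct and follows exactly the route the paper intends: it combines the coercivity $\lim_{\rho_0\to\infty}f(\rho_0)=+\infty$ (from Corollary \ref{bounded_seq}) with the continuity of $f$ from Proposition \ref{f_cont}, reduces to a compact interval, applies Weierstrass, and concludes via the identity $\inf_{\D}\F=\inf_{\rho_0}f(\rho_0)$. The paper leaves this argument implicit (``Combining it with the above proposition gives us the following result''), and your write-up simply makes the same reasoning explicit, including the harmless finiteness check on $m$.
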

Moreover, using strict convexity of the BBH functional in $(\gamma, \alpha)$ (cf. Proposition \ref{convexF}), we can deduce yet another corollary.

\begin{cor}
\label{gamma_alpha_unique}
For $T>0$ and given $\rho_0$, from the strict convexity of the entropy term, if there exist $(\gamma^{\min},\alpha^{\min}) \in \D'$ such that
$$f(\rho_0) = \F(\gamma^{\min}, \alpha^{\min}, \rho_0),$$
then $\gamma^{\min},\alpha^{\min}$ are unique.
\end{cor}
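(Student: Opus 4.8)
The plan is to obtain uniqueness as the standard consequence of strict convexity: a strictly convex functional on a convex set has at most one minimizer. The only input needed is Proposition \ref{convexF}, which tells us that for $T>0$ and fixed $\rho_0$ the map $(\gamma,\alpha)\mapsto\F(\gamma,\alpha,\rho_0)$ is strictly jointly convex on the convex set $\D'$.

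First I would argue by contradiction, assuming there are two minimizers $(\gamma_1,\alpha_1)\neq(\gamma_2,\alpha_2)$ in $\D'$, both realizing the value $f(\rho_0)$. Being distinct as $L^1$ functions, they differ on a set $A\subseteq\T^3$ of positive Haar measure. I would then form the midpoint $(\bar\gamma,\bar\alpha)=\tfrac12(\gamma_1+\gamma_2,\alpha_1+\alpha_2)$, which remains in $\D'$ because the domain is convex.

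The crux is to produce the strict inequality $\F(\bar\gamma,\bar\alpha,\rho_0)<f(\rho_0)$. Here the strictness comes entirely from the entropy term. Since the Hessian of $-s$ is positive definite, $-s$ is strictly convex as a function of two real variables, so on $A$ one has the pointwise strict inequality
$$-s\bigl(\bar\gamma(p),\bar\alpha(p)\bigr)<\tfrac12\bigl[-s(\gamma_1(p),\alpha_1(p))-s(\gamma_2(p),\alpha_2(p))\bigr],$$
while the ordinary (non-strict) convex inequality holds at almost every other $p$. Integrating over $\T^3$ and using that $A$ has positive measure gives a strict inequality for $-S$. Because every remaining term of $\F$ is convex in $(\gamma,\alpha)$, adding them back preserves the strict inequality, yielding $\F(\bar\gamma,\bar\alpha,\rho_0)<\tfrac12[\F(\gamma_1,\alpha_1,\rho_0)+\F(\gamma_2,\alpha_2,\rho_0)]=f(\rho_0)$, which contradicts the minimality of both points.

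The main obstacle I anticipate is the transfer from the pointwise strict convexity of $-s$ to strict convexity of the integral functional $-S$: one must verify that the candidate minimizers differ on a set of positive measure — guaranteed by their being distinct in $L^1$ — and that the pointwise strict inequality on $A$ survives integration. This is routine but is precisely where the hypothesis $T>0$, ensuring the entropy term is present and strictly convex, is used.
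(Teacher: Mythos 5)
Your proposal is correct and follows essentially the same route as the paper: the corollary is deduced there directly from Proposition \ref{convexF} via the standard fact that a strictly convex functional on a convex set has at most one minimizer, which is exactly the midpoint argument you spell out. Your extra care in transferring pointwise strict convexity of $-s$ to the integral functional $-S$ (via the positive-measure set where the two candidate minimizers differ) is precisely the verification the paper leaves implicit, and it is handled correctly.
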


\subsection{Restricted problem and the existence of restricted minimizers}
\label{restricted_subsection}
Now we will introduce the restricted problem. For $\kappa > 0$ define
\begin{equation*}
\D_\kappa=\{(\gamma,\alpha,\rho_{0}) \colon \gamma \in L^1(\T^3),\gamma(p)\geq 0, \alpha(p)^{2}\leq\gamma(p)(1+\gamma(p)), \gamma(p) \le \kappa, \rho_{0}\geq 0\}.
\end{equation*}
and
$$\D_\kappa' = \{(\gamma,\alpha) \colon \gamma \in L^1(\T^3),\gamma(p)\geq 0, \alpha(p)^{2}\leq\gamma(p)(1+\gamma(p)), \gamma(p) \le \kappa\}.$$
As mentioned in the Introduction, introducing a cut-off will allow us to overcome problems related to the fact that $L^1$ is not reflexive. We are interested in finding minimizers of the functional $\F$ over the restricted domain $\D_\kappa$ (we will further refer to it as the restricted problem). Just like in the unrestricted case, we can split the minimization problem into two parts. Let
\begin{equation}
\label{mimimization1}
f_\kappa(\rho_0) = \inf_{D'_\kappa} \F(\gamma, \alpha, \rho_0).  
\end{equation}
As the restricted domain $\D_\kappa$ is also a convex set, we obtain analogous results as in Corollary \ref{rho_0_min}.

\begin{cor}
\label{rho_0_min_kappa}
There exists a $\rho_{0,\kappa}$ such that
$$\inf_{\D_\kappa} \F(\gamma, \alpha, \rho_0) = f_\kappa(\rho_{0,\kappa}).$$
In particular, $\rho_{0,\kappa}$ minimizes the function $f_\kappa$. 
\end{cor}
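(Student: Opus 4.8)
The plan is to mirror the argument that produced Corollary \ref{rho_0_min}, replacing the slice $\D'$ by its truncated version $\D_\kappa'$ throughout. The two ingredients needed are that $f_\kappa$ is continuous and convex on $[0,\infty)$ and that it is coercive, i.e. $f_\kappa(\rho_0)\to+\infty$ as $\rho_0\to\infty$; a continuous convex function with this growth attains its infimum over $[0,\infty)$, and the double-minimization identity
$$\inf_{\D_\kappa}\F(\gamma,\alpha,\rho_0)=\inf_{\rho_0\ge 0}\inf_{(\gamma,\alpha)\in\D_\kappa'}\F(\gamma,\alpha,\rho_0)=\inf_{\rho_0\ge 0}f_\kappa(\rho_0)$$
then yields the claim, with $\rho_{0,\kappa}$ any minimizer of $f_\kappa$.

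First I would record that $\D_\kappa'$ is convex, being the intersection of the convex set $\D'$ with the convex constraint $\{\gamma\le\kappa\}$. Since $\F$ is jointly convex in $(\gamma,\alpha,\rho_0)$ by Proposition \ref{convexF}, the partial infimum $f_\kappa(\rho_0)=\inf_{\D_\kappa'}\F(\gamma,\alpha,\rho_0)$ is convex in $\rho_0$, exactly as for $f$, and in particular continuous on the open half-line $(0,\infty)$. For coercivity I would simply invoke Corollary \ref{bounded_seq}: the lower bound $\F(\gamma,\alpha,\rho_0)\ge\eta(\rho_0^2+\rho_\gamma^2)-C$ holds on all of $\D$, hence on the smaller domain $\D_\kappa\subset\D$, so that $f_\kappa(\rho_0)\ge\eta\rho_0^2-C\to+\infty$.

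The only point requiring care is continuity of $f_\kappa$ at the endpoint $\rho_0=0$, and here I would repeat the equicontinuity/diagonal argument from Proposition \ref{f_cont}. Convexity already gives $f_\kappa(0)\ge g:=\lim_{\rho_0\to 0^+}f_\kappa(\rho_0)$. For the reverse inequality one extracts, by the diagonal method, a sequence $(\gamma_n,\alpha_n,\rho_{0,n})$ with $\rho_{0,n}\to 0$ and $\F(\gamma_n,\alpha_n,\rho_{0,n})\to g$; the crucial observation is that the cut-off is preserved, i.e. each $\gamma_n\le\kappa$, so that the shifted points $(\gamma_n,\alpha_n,0)$ still lie in $\D_\kappa'$, and the uniform equicontinuity of $\F$ in $\rho_0$ (which is insensitive to the constraint $\gamma\le\kappa$) forces $\F(\gamma_n,\alpha_n,0)\to g$, whence $f_\kappa(0)\le g$. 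This is the main, and essentially the only, obstacle, and it is a mild one: it is resolved precisely because the constraint defining $\D_\kappa'$ is stable under the limit $\rho_{0,n}\to 0$, so that truncation does not interfere with the limiting procedure used in the unrestricted case.
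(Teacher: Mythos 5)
Your proof is correct and follows essentially the same route as the paper, which obtains Corollary \ref{rho_0_min_kappa} simply by noting that $\D_\kappa$ is convex and repeating the arguments of Proposition \ref{f_cont} and Corollary \ref{rho_0_min} (convexity and continuity of the partial infimum, coercivity via Corollary \ref{bounded_seq}). Your explicit check that the endpoint argument at $\rho_0=0$ survives truncation—because the constraint $\gamma\le\kappa$ does not involve $\rho_0$ and is preserved under the diagonal extraction—is exactly the (implicit) content of the paper's remark that the restricted domain yields ``analogous results.''
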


We will prove that every value of $f_\kappa$ is in fact obtained as a value of the functional $\F$.
\begin{prop}
\label{restriced_existence}
For every $\rho_0$ there exist $(\gamma_\kappa, \alpha_\kappa) \in \D_\kappa'$ (dependent on $\rho_0$) such that
\begin{equation*}
f_\kappa(\rho_0) = \F(\gamma_\kappa, \alpha_\kappa, \rho_0). 
\end{equation*}
In particular, there exists a minimizer of the restricted problem, that is
$$\inf_{\D_\kappa}\F(\gamma,\alpha, \rho_0) = \F(\gamma_\kappa, \alpha_\kappa, \rho_{0,\kappa})$$
for some $(\gamma_\kappa, \alpha_\kappa, \rho_{0,\kappa}) \in \D_\kappa$.
\end{prop}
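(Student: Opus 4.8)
The plan is to prove existence by the direct method of the calculus of variations, the whole point of the cut-off being that it restores the compactness that was lacking in $L^1$. Fix $\rho_0 \ge 0$ and take a minimizing sequence $(\gamma_n, \alpha_n) \in \D_\kappa'$ for $f_\kappa(\rho_0)$, so that $\F(\gamma_n, \alpha_n, \rho_0) \to f_\kappa(\rho_0)$. On $\D_\kappa'$ the cut-off gives $0 \le \gamma_n \le \kappa$ and $\alpha_n^2 \le \gamma_n(1+\gamma_n) \le \kappa(1+\kappa)$ pointwise, so both sequences are bounded in $L^\infty(\T^3)$ and, since $\T^3$ carries a normalized (finite) measure, bounded in the reflexive space $L^2(\T^3)$. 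Passing to a subsequence I would therefore assume $\gamma_n \rightharpoonup \gamma_\kappa$ and $\alpha_n \rightharpoonup \alpha_\kappa$ weakly in $L^2(\T^3)$.

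The first thing to check is that the weak limit stays in the domain. Here I would use that $\D_\kappa'$ is convex (convexity of $\D$ is part of Proposition \ref{convexF}, and adding the constraint $\gamma \le \kappa$ preserves it; concretely the pointwise constraint $\alpha^2 \le \gamma(1+\gamma)$ rewrites as $\gamma \ge \sqrt{1/4 + \alpha^2} - 1/2$, the epigraph of a convex function) and strongly closed in $L^2(\T^3)$ (a strong limit has an a.e.-convergent subsequence, along which the pointwise constraints $0 \le \gamma \le \kappa$ and $\alpha^2 \le \gamma(1+\gamma)$ pass to the limit). A convex, strongly closed set is weakly closed, so $(\gamma_\kappa, \alpha_\kappa) \in \D_\kappa'$.

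It remains to establish that $\F(\cdot, \cdot, \rho_0)$ is sequentially weakly lower semicontinuous. Inspecting the form \eqref{func2}, every term except the entropy is either linear in $(\gamma, \alpha)$ tested against a fixed $L^2$ function (the terms $\int_{\T^3}\eps\gamma$, $\rho_\gamma = \langle 1, \gamma\rangle$, $\int_{\T^3}\alpha$, and the mixed terms $U \rho_0 \int_{\T^3}(\gamma + \alpha)$, $U\rho_0 \rho_\gamma$, using that $\eps$ and the constant $1$ lie in $L^2(\T^3)$) or a continuous function of such linear quantities (the squares $\tfrac{U}{2}(\int_{\T^3}\alpha)^2$ and $U \rho_\gamma^2$); all of these are weakly continuous along the subsequence. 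The only genuinely delicate term is the entropy $-T S(\gamma, \alpha) = T \int_{\T^3}(-s(\gamma, \alpha))\,dp$, and this is exactly where convexity does the work: by Proposition \ref{convexF} the integrand $-s$ is convex in $(\gamma, \alpha)$, and on $\D_\kappa'$ it is continuous and bounded (since $\beta \in [\tfrac12, \tfrac12 + \kappa]$ there, so the associated integral functional is convex and continuous for the $L^2$-norm (by dominated convergence) and hence sequentially weakly lower semicontinuous. Combining, $\F(\gamma_\kappa, \alpha_\kappa, \rho_0) \le \liminf_n \F(\gamma_n, \alpha_n, \rho_0) = f_\kappa(\rho_0)$, and since $(\gamma_\kappa, \alpha_\kappa) \in \D_\kappa'$ the reverse inequality is automatic, giving $f_\kappa(\rho_0) = \F(\gamma_\kappa, \alpha_\kappa, \rho_0)$.

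Finally, for the \emph{in particular} statement I would simply apply the above with the specific value $\rho_0 = \rho_{0,\kappa}$ furnished by Corollary \ref{rho_0_min_kappa}; the resulting triple $(\gamma_\kappa, \alpha_\kappa, \rho_{0,\kappa}) \in \D_\kappa$ then attains $\inf_{\D_\kappa} \F = f_\kappa(\rho_{0,\kappa})$. The single point deserving care---the main obstacle---is the weak lower semicontinuity of the entropy term; everything else is either weak continuity of affine and quadratic expressions in weakly-continuous linear functionals, or the soft fact that convex strongly closed sets are weakly closed.
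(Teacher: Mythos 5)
Your proof is correct, but it implements the key compactness/lower-semicontinuity step differently from the paper. The paper also starts from a minimizing sequence bounded in $L^2$ and extracts a weak limit, but then invokes Mazur's lemma to replace the sequence by convex combinations converging \emph{strongly} in $L^2$ and $L^1$ (still minimizing, by convexity of $\F$), passes to a pointwise a.e.\ convergent subsequence, and handles the kinetic-plus-entropy term by Fatou's lemma using the integrable minorant $T\ln\bigl(1-e^{-\eps(p)/T}\bigr)$ from Proposition \ref{boundedness}; the remaining terms converge by $L^1$ convergence, and the domain constraints pass to the limit trivially because convergence is pointwise. You instead stay with the original weakly convergent sequence, verify membership of the weak limit in $\D_\kappa'$ via the soft fact that a convex strongly closed set is weakly closed, treat all non-entropy terms as (squares of) weakly continuous linear functionals, and obtain weak lower semicontinuity of the entropy term from the principle that a convex, strongly continuous integral functional is sequentially weakly l.s.c.\ --- the strong continuity resting on the boundedness of the entropy density when $\beta\in[\tfrac12,\tfrac12+\kappa]$, i.e.\ on the cut-off itself. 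Both routes hinge on convexity of $-s$; the trade-off is that your argument is a cleaner one-shot direct method but uses the cut-off essentially (boundedness of the integrand), whereas the paper's Fatou-with-minorant mechanism needs no uniform bound and is therefore reusable later in the unrestricted limit arguments (e.g.\ in the proofs of Propositions \ref{minimizing_seq} and \ref{rho_0=0_existence}), which is presumably why the authors set it up this way. Two small polish points: the dominated-convergence step for strong $L^2$ continuity formally requires the standard subsequence-of-a-subsequence argument (strong convergence only gives a.e.\ convergence along subsequences), and you should note that at $T=0$ the entropy term is absent so your argument degenerates harmlessly; also close the parenthesis in the clause beginning ``(since $\beta\in[\tfrac12,\tfrac12+\kappa]$ there''.
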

\begin{proof} \textit{Step 1.} The cutoff implies that $\|\gamma\|_\infty < \infty$ for every $\gamma$ in the restricted domain. Since the measure of the space $\T^3$ is finite, we deduce that for any $s \in [1,\infty]$ we have $$\|\gamma\|_s < \kappa < \infty.$$
Next recall that condition on $\gamma$ and $\alpha$ from the domain \eqref{domain} implies $|\alpha(p)| \le \sqrt2 \max \{1, \gamma\} \le \sqrt2 \kappa$, so we also get
$$\|\alpha\|_s < \sqrt2 \kappa < \infty.$$

\noindent \textit{Step 2.} Let $(\gamma_n, \alpha_n) \in \D'_\kappa$ be a minimizing sequence of \eqref{mimimization1}. Since the sequence is uniformly bounded in the reflexive space $L^2(\T^3)$ (or $L^s(\T^3)$ for any $1<s<\infty$), by Banach-Alaoglu theorem (see e.g \cite[Theorem 2.18]{LiebLoss}), up to subsequences, we have weak convergence
$$\gamma_n \rightharpoonup \tilde \gamma,$$
$$\alpha_n \rightharpoonup \tilde \alpha$$
for some $\tilde \gamma$ and $\tilde \alpha$.
By Mazur's lemma (see e.g. \cite[Theorem 2.13]{LiebLoss}) there exist sequences of convex combinations of $\gamma_n$ and $\alpha_n$ that converge strongly in $L^2$ and in $L^1$ (due to estimates from Step 1) to $\tilde \gamma$ and $\tilde \alpha$ respectively. By convexity of the BBH functional $\F$ in $(\gamma, \alpha)$, they are still minimizing sequences. Denoting those sequences by $\gamma_n$ and $\alpha_n$ and once again passing to subsequences we can additionally assume that we have pointwise convergence:
$$\gamma_n(p) \to \tilde \gamma(p)$$
$$\alpha_n(p) \to \tilde \alpha(p)$$
for almost all $p \in \T^3$.

\noindent \textit{Step 3.} We will show that $\tilde\gamma(p)$ and $\tilde\alpha(p)$ are desired minimizers of the restricted problem. First recall that just like in the proof of Proposition \ref{boundedness} we can estimate integrands of kinetic and entropy part of the functional for $T>0$
$$\eps(p)\gamma(p)dp - Ts(\gamma, \alpha)  \ge T\ln\left(1 - e^{-\eps(p)/T}\right).$$
This function is integrable, so we can use Fatou lemma to obtain
\begin{equation*}
\int_{\T^3} \eps(p)\tilde \gamma(p)dp - TS(\tilde\gamma, \tilde\alpha)  \le \liminf_{n \to \infty} \left(\int_{\T^3} \left(\eps(p)\gamma_n(p)dp - TS(\gamma_n, \alpha_n)\right) dp\right).
\end{equation*}
For $T=0$ this is even simpler, because the integrand is positive, so Fatou lemma is also applicable. Every other term in the functional converges directly from the fact that $\gamma_n$ and $\alpha_n$ converge in $L^1$.

Recalling that $(\gamma_n, \alpha_n)$ was a minimizing sequence we get
$$\F(\tilde \gamma, \tilde \alpha, \rho_0) \le \liminf_{n \to \infty} \F(\gamma_n, \alpha_n, \rho_0) = \inf_{(\gamma, \alpha) \in \D'_\kappa}\F(\gamma, \alpha, \rho_0),$$
which ends the proof.
\end{proof}

In the next proposition we will prove that the values of the functional $\F$ taken on the restricted minimizers approximate the infimum of $\F$ on the whole (i.e. unrestricted) domain $\mathcal{D}$.

\begin{prop}
\label{minimizing_seq}
If $(\gamma_\kappa, \alpha_\kappa, \rho_{0,\kappa})$ are minimizers of the restricted problems, then
$$\lim_{\kappa \to \infty} \F(\gamma_\kappa, \alpha_\kappa, \rho_{0,\kappa}) = \inf_{\D} \F(\gamma, \alpha, \rho_0).$$
Similarly, for a fixed $\rho_0^{\min}$ minimizing the BBH functional $\F$ (cf. Corollary \ref{rho_0_min}) and corresponding restricted minimizers $(\gamma_\kappa, \alpha_\kappa)$ have
$$\lim_{\kappa \to \infty} \F(\gamma_\kappa, \alpha_\kappa, \rho_0^{\min}) = \inf_{\D} \F(\gamma, \alpha, \rho_0).$$
\end{prop}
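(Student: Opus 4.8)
The plan is to exploit the nested structure of the restricted domains together with an explicit truncation argument. Write $m = \inf_{\D}\F$ and $m_\kappa = \F(\gamma_\kappa, \alpha_\kappa, \rho_{0,\kappa}) = \inf_{\D_\kappa}\F$. By Corollary \ref{bounded_seq} we have $m > -\infty$. Since $\D_\kappa \subseteq \D_{\kappa'} \subseteq \D$ whenever $\kappa \le \kappa'$, the quantity $m_\kappa$ is nonincreasing in $\kappa$ and bounded below by $m$; hence $\lim_{\kappa\to\infty} m_\kappa$ exists and satisfies $\lim_{\kappa\to\infty} m_\kappa \ge m$. It remains to establish the reverse inequality $\lim_{\kappa\to\infty} m_\kappa \le m$.

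To this end I would fix $\varepsilon > 0$ and choose $(\gamma,\alpha,\rho_0)\in\D$ with $\F(\gamma,\alpha,\rho_0) \le m + \varepsilon$; such a triple has finite energy, so in particular $S(\gamma,\alpha) < \infty$ when $T>0$. For $\kappa > 0$ define the truncations
$$\gamma^{(\kappa)} = \gamma\,\mathbbm{1}_{\{\gamma \le \kappa\}}, \qquad \alpha^{(\kappa)} = \alpha\,\mathbbm{1}_{\{\gamma \le \kappa\}},$$
keeping $\rho_0$ unchanged. On $\{\gamma>\kappa\}$ both functions vanish, so the constraint $(\alpha^{(\kappa)})^2 \le \gamma^{(\kappa)}(1+\gamma^{(\kappa)})$ holds trivially there, while on $\{\gamma\le\kappa\}$ nothing is altered and $\gamma^{(\kappa)} \le \kappa$; thus $(\gamma^{(\kappa)},\alpha^{(\kappa)},\rho_0)\in\D_\kappa$. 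Since $\gamma\in L^1(\T^3)$ and $|\alpha|\le\sqrt2\max\{1,\gamma\}\in L^1(\T^3)$, dominated convergence gives $\gamma^{(\kappa)} \to \gamma$ and $\alpha^{(\kappa)} \to \alpha$ in $L^1(\T^3)$ as $\kappa\to\infty$.

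The heart of the matter is the convergence $\F(\gamma^{(\kappa)}, \alpha^{(\kappa)}, \rho_0) \to \F(\gamma,\alpha,\rho_0)$. All terms except the entropy are continuous along this truncation: the linear term is controlled by $\|\eps-\mu\|_{L^\infty(\T^3)}\|\gamma^{(\kappa)}-\gamma\|_{L^1}$, and every quadratic and mixed term is a continuous function of $\int\gamma^{(\kappa)}$ and $\int\alpha^{(\kappa)}$, both of which converge by the $L^1$ convergence above. The only delicate point, and the one I expect to be the main obstacle, is the entropy $S(\gamma^{(\kappa)},\alpha^{(\kappa)})$. Here I would use that the entropy density $s$ is nonnegative: indeed $\beta\ge\tfrac12$ on $\D$, and the map $\beta\mapsto(\beta+\tfrac12)\ln(\beta+\tfrac12)-(\beta-\tfrac12)\ln(\beta-\tfrac12)$ is increasing with value $0$ at $\beta=\tfrac12$. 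Pointwise $s(\gamma^{(\kappa)}(p),\alpha^{(\kappa)}(p))$ equals $s(\gamma(p),\alpha(p))$ on $\{\gamma\le\kappa\}$ and equals $s(0,0)=0$ on $\{\gamma>\kappa\}$, so as $\kappa\to\infty$ it increases monotonically to $s(\gamma,\alpha)$; the monotone convergence theorem then yields $S(\gamma^{(\kappa)},\alpha^{(\kappa)})\to S(\gamma,\alpha)$. Combining all terms gives $\F(\gamma^{(\kappa)},\alpha^{(\kappa)},\rho_0)\to\F(\gamma,\alpha,\rho_0)$, and since $m_\kappa \le \F(\gamma^{(\kappa)},\alpha^{(\kappa)},\rho_0)$ we obtain $\limsup_{\kappa\to\infty} m_\kappa \le \F(\gamma,\alpha,\rho_0) \le m+\varepsilon$. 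Letting $\varepsilon\to0$ finishes the first claim.

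For the second statement the argument is verbatim the same, with $\rho_0$ held fixed at $\rho_0^{\min}$ throughout: one works in the slices $\D'$ and $\D_\kappa'$, uses $f_\kappa(\rho_0^{\min}) = \inf_{\D_\kappa'}\F(\cdot,\cdot,\rho_0^{\min}) \ge f(\rho_0^{\min}) = \inf_\D\F$ (the last equality by Corollary \ref{rho_0_min}) for the easy direction, and truncates a near-minimizer of $\F(\cdot,\cdot,\rho_0^{\min})$ over $\D'$ for the reverse, the truncation landing in $\D_\kappa'$ exactly as above.
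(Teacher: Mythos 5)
Your proof is correct and takes essentially the same route as the paper's: the same truncation $(\gamma\mathbbm{1}_{\{\gamma\le\kappa\}},\alpha\mathbbm{1}_{\{\gamma\le\kappa\}},\rho_0)\in\D_\kappa$ of an arbitrary (near-)minimizing state, convergence of $\F$ along the truncation (the paper invokes dominated convergence with the kinetic--entropy bound from Proposition \ref{boundedness}, while you use monotone convergence for the entropy plus $L^1$-continuity of the remaining terms --- an equivalent treatment), and the trivial inequality from $\D_\kappa\subset\D$. Your $\varepsilon$-near-minimizer formulation is just a cosmetic variant of the paper's limsup-then-infimum argument, and the second claim is handled analogously in both.
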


\begin{proof}
We will show only the first part as the second one is completely analogous. Notice that for any state $(\gamma, \alpha, \rho_0) \in \D$ and for any $\kappa$ we have $(\gamma\mathbbm{1}_{\gamma \le \kappa}, \alpha\mathbbm{1}_{\gamma \le \kappa}, \rho_0) \in \D_\kappa$ and by the Dominated Convergence Theorem (using estimate on kinetic and entropy part like in the proof of Proposition \ref{boundedness}) we have
$$\F(\gamma, \alpha, \rho_0) = \limsup_{\kappa \to \infty}\F(\gamma\mathbbm{1}_{\gamma \le \kappa}, \alpha\mathbbm{1}_{\gamma \le \kappa}, \rho_0) \ge \limsup_{\kappa \to \infty}\F(\gamma_\kappa, \alpha_\kappa, \rho_{0,\kappa}).$$
Taking infimum over $(\gamma, \alpha, \rho_0)$ we get
$$\inf_\D \F(\gamma, \alpha, \rho_0) \ge \limsup_{\kappa \to \infty}\F(\gamma_\kappa, \alpha_\kappa, \rho_{0,\kappa}).$$
From the definition of the infimum we also get the reverse inequality
$$\inf_\D \F(\gamma, \alpha, \rho_0) \le \liminf_{\kappa \to \infty}\F(\gamma_\kappa, \alpha_\kappa, \rho_{0,\kappa}),$$
which ends the proof.
\end{proof}

Recall the double minimization procedure \eqref{double_min} for the unrestricted problem. Our goal is to prove that if $\rho_{0,\kappa}$ are restricted minimizers then $\rho_{0,\kappa} \to \rho_0^{\min}$ for some minimizer $\rho_0^{\min}$ of the unrestricted problem. To this end we will prove a lemma that is a continuation of the previous proposition.
\begin{lemma}
If $(\gamma_\kappa, \alpha_\kappa, \rho_{0,\kappa})$ are minimizers of the restricted problems, then
$$\lim_{\kappa \to \infty}f(\rho_{0,\kappa}) =\inf_{\rho_0}f(\rho_0).$$
\end{lemma}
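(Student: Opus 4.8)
The plan is to sandwich the quantity $f(\rho_{0,\kappa})$ between two bounds that both converge to $\inf_{\rho_0} f(\rho_0)$, and then conclude by the squeeze theorem. The essential analytic work has already been done in Proposition \ref{minimizing_seq}, which controls the behaviour of $\F(\gamma_\kappa, \alpha_\kappa, \rho_{0,\kappa})$; what remains is only a comparison between the restricted and unrestricted inner infima, so I do not expect any genuine obstacle here beyond keeping the direction of the inequalities straight.

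First I would record the trivial lower bound, valid for every $\kappa$ directly from the definition of the infimum over $\rho_0$:
$$f(\rho_{0,\kappa}) \ge \inf_{\rho_0} f(\rho_0).$$
Next I would establish the matching upper bound. The key observation is that the restricted minimizer $(\gamma_\kappa, \alpha_\kappa)$ belongs to $\D'_\kappa$, and since $\D'_\kappa \subset \D'$, this pair is an admissible competitor in the unrestricted inner minimization that defines $f(\rho_{0,\kappa})$. Recalling that $\F(\gamma_\kappa, \alpha_\kappa, \rho_{0,\kappa}) = f_\kappa(\rho_{0,\kappa})$ by Proposition \ref{restriced_existence}, this yields
$$f(\rho_{0,\kappa}) = \inf_{\D'}\F(\gamma, \alpha, \rho_{0,\kappa}) \le \F(\gamma_\kappa, \alpha_\kappa, \rho_{0,\kappa}).$$

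Combining the two bounds gives the chain
$$\inf_{\rho_0} f(\rho_0) \le f(\rho_{0,\kappa}) \le \F(\gamma_\kappa, \alpha_\kappa, \rho_{0,\kappa}).$$
By Proposition \ref{minimizing_seq} the right-hand side converges to $\inf_{\D}\F = \inf_{\rho_0} f(\rho_0)$ as $\kappa \to \infty$, while the left-hand side is the constant $\inf_{\rho_0} f(\rho_0)$. The squeeze theorem then forces $f(\rho_{0,\kappa}) \to \inf_{\rho_0} f(\rho_0)$, which is the claim. The only point requiring a moment of care is the inclusion $\D'_\kappa \subset \D'$ together with the fact that enlarging the feasible set in an infimum can only decrease its value, so that restricting the candidates does not accidentally reverse the bound; everything else is a direct application of the preceding results.
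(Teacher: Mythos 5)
Your proof is correct and follows essentially the same route as the paper: the paper's inequality $f_\kappa(\rho_{0,\kappa}) \ge f(\rho_{0,\kappa})$ (infimum over the smaller set $\D'_\kappa$ versus $\D'$) is exactly your competitor bound $f(\rho_{0,\kappa}) \le \F(\gamma_\kappa, \alpha_\kappa, \rho_{0,\kappa})$, since $f_\kappa(\rho_{0,\kappa}) = \F(\gamma_\kappa, \alpha_\kappa, \rho_{0,\kappa})$ by Proposition \ref{restriced_existence}. Both arguments then combine the trivial lower bound $f(\rho_{0,\kappa}) \ge \inf_{\rho_0} f(\rho_0)$ with the convergence from Proposition \ref{minimizing_seq}, so your squeeze formulation is merely a cosmetic repackaging of the paper's $\limsup$/$\liminf$ chain.
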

\begin{proof}
In the proof of the previous proposition we got equality
$$\inf_\D \F(\gamma, \alpha, \rho_0) = \lim_{\kappa \to \infty}\F(\gamma_\kappa, \alpha_\kappa, \rho_{0,\kappa}),$$
so rewriting it in terms of functions $f$ and $f_\kappa$ we get
$$\inf_{\rho_0}f(\rho_0) = \lim_{\kappa \to \infty} f_\kappa(\rho_{0,\kappa}) \ge \limsup_{\kappa \to \infty} f(\rho_{0,\kappa}),$$
where the last inequality comes from the fact that $f_\kappa$ is defined as an infimum over $\gamma$ and $\alpha$ over smaller set than the infimum in the definition of the function $f$. From the definition of the infimum we also get the inequality
$$\liminf_{\kappa \to \infty} f(\rho_{0,\kappa}) \ge \inf_{\rho_0}f(\rho_0).$$
This ends the proof.
\end{proof}
\begin{prop}
\label{rho_0_conv}
If $(\gamma_\kappa, \alpha_\kappa, \rho_{0,\kappa})$ are minimizers of the restricted problems, then, up to a subsequence,
$$\lim_{\kappa \to \infty}\rho_{0,\kappa} = \rho_0^{\min}$$
where $\rho_0^{\min}$ is a minimizer of function $f$ (cf. Corollary \ref{rho_0_min}).
\end{prop}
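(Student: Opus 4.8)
The plan is to combine the a priori coercivity bound from Corollary~\ref{bounded_seq} with the continuity of $f$ from Proposition~\ref{f_cont} and the limit identity established in the preceding lemma. The argument is essentially a compactness plus continuity argument, so it should be short.

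First I would show that the real sequence $(\rho_{0,\kappa})$ is bounded. By Proposition~\ref{minimizing_seq} the values $\F(\gamma_\kappa, \alpha_\kappa, \rho_{0,\kappa})$ converge to the finite number $\inf_\D \F$, and hence are bounded above by some constant $M$. On the other hand, Corollary~\ref{bounded_seq} supplies the lower bound $\F(\gamma_\kappa, \alpha_\kappa, \rho_{0,\kappa}) \ge \eta\,\rho_{0,\kappa}^2 - C$. Combining the two yields $\eta\,\rho_{0,\kappa}^2 - C \le M$, so that $\rho_{0,\kappa}^2 \le (M+C)/\eta$, and the sequence $(\rho_{0,\kappa})$ is bounded. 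Since every $\rho_{0,\kappa} \ge 0$, the Bolzano--Weierstrass theorem then lets me extract a subsequence (which I will not relabel) converging to some limit $\rho_0^{\min} \ge 0$.

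It remains to identify this limit as a minimizer of $f$. Here I would invoke Proposition~\ref{f_cont}, which guarantees that $f$ is continuous on the entire interval $[0,\infty)$; passing to the limit along the subsequence gives $f(\rho_{0,\kappa}) \to f(\rho_0^{\min})$. Since the preceding lemma asserts $\lim_{\kappa \to \infty} f(\rho_{0,\kappa}) = \inf_{\rho_0} f(\rho_0)$ for the full sequence, and a convergent subsequence of a convergent sequence shares its limit, I conclude $f(\rho_0^{\min}) = \inf_{\rho_0} f(\rho_0)$, which is exactly the statement that $\rho_0^{\min}$ minimizes $f$ (consistent with Corollary~\ref{rho_0_min}).

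The only genuinely delicate point is the continuity of $f$ at the boundary value $\rho_0 = 0$: if the extracted subsequence happens to converge to $0$, the final step relies precisely on the extra care taken in Proposition~\ref{f_cont} to secure continuity at that endpoint. Away from the endpoint continuity is automatic from convexity, so the nontrivial input is the already-established right-continuity at $0$. I do not expect any further obstruction.
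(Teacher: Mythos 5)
Your proposal is correct and follows essentially the same route as the paper: boundedness of $\rho_{0,\kappa}$ via Corollary \ref{bounded_seq} (your explicit derivation from $\F \ge \eta\rho_{0,\kappa}^2 - C$ together with Proposition \ref{minimizing_seq} just spells out why that corollary applies to the restricted minimizers), extraction of a convergent subsequence, and identification of the limit by combining the preceding lemma with the continuity of $f$ from Proposition \ref{f_cont}. Your remark that continuity at the endpoint $\rho_0 = 0$ is the delicate input is exactly the point the paper's Proposition \ref{f_cont} was designed to cover.
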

\begin{proof}
By Corollary \ref{bounded_seq} we already know $\rho_{0,\kappa}$ is bounded, so we can extract a convergent subsequence. For simplicity of notation we will denote the chosen convergent subsequence as $\rho_{0,\kappa}$ and assume
$$\rho_{0,\kappa} \to \tilde \rho_0$$
for some $\tilde \rho_0$. We will check that $\tilde \rho_0$ is a minimizer of $f$. Indeed, by the previous lemma and continuity of $f$ (cf. Lemma \ref{f_cont}) we have
$$\inf_{\rho_0} f(\rho_0) = \lim_{\kappa \to \infty} f(\rho_{0,\kappa}) = f(\tilde \rho_0).$$
so $\tilde \rho_0$ is a minimizer.
\end{proof}

\subsection{Variational derivatives}

In the next steps we will provide estimates on minimizers of the restricted problems. We will use variational derivatives of functional $\F$, denoted by $\frac{\partial \F}{\partial \gamma}$, $\frac{\partial \F}{\partial \alpha}$ and $\frac{\partial \F}{\partial \rho_0}$, defined as functions satisfying
$$\frac{d}{dt}\F(\gamma + t\phi, \alpha, \rho_0)\Big|_{t=0} = \int_{\T^3} \frac{\partial \F}{\partial \gamma}(p) \phi(p) dp$$
for a test function $\phi$ (analogously defined for $\alpha$ and $\rho_0$).
By direct computation we can check that
\begin{equation}
\label{variational_derivatives}
\begin{split}
\frac{\partial \F}{\partial \gamma}  &= \eps(p) - \mu + 2U\int_{\T^3} \gamma(q)dq + 2U\rho_0 - T \frac{\gamma + \frac12}{\beta}\ln\frac{\beta + \frac12}{\beta - \frac12},\\
\frac{\partial \F}{\partial \alpha} &= U\int_{\T^3} \alpha(q)dq + U\rho_0 + T\frac{\alpha}{\beta}\ln\frac{\beta + \frac12}{\beta - \frac12},\\
\frac{\partial \F}{\partial \rho_0} &= -\mu + U\int_{\T^3} \alpha(q)dq + 2U\int_{\T^3} \gamma(q)dq + U\rho_0,
\end{split}
\end{equation}
where $\beta$ was defined in \eqref{beta_def}. In the next sections we will separate the cases when $T>0$ and $T=0$ since we will be using some different methods in each of them.
\begin{remark}
In the following we will use the notational convention that
$$\frac{\partial \F}{\partial \gamma} \equiv \frac{\partial \F}{\partial \gamma}\bigg|_{(\gamma^{\min},\alpha^{\min},\rho_0^{\min})},$$
(and similarly for $\alpha$ and $\rho_0$ derivatives), i.e., the functional derivative is evaluated at the minimizers (possibly restricted, depending on the context).
\end{remark}

\subsection{Positive temperature case}
In every proposition in this subsection we assume $T>0$. First we will prove some bounds on the minimizing $\gamma_\kappa$ and $\alpha_\kappa$.

\begin{lemma}
\label{bound_below}
Let $\gamma_\kappa$ be the minimizer of restricted problem. Then there exist constant $c$ independent of $\kappa$ such that for sufficiently large $\kappa$ we have
$$\gamma_\kappa(p) \ge c$$
for almost all $p \in \T^3$.
\end{lemma}
\begin{proof}
We will estimate the $\frac{\partial \F}{\partial \gamma}(p)$ derivative for $T>0$ on the set $\{p \colon \gamma_\kappa(p) < \kappa\}$. Notice that, since $\beta_\kappa \ge \frac12$, we have
$$T \frac{\gamma_\kappa + \frac12}{\beta_\kappa}\ln\frac{\beta_\kappa + \frac12}{\beta_\kappa - \frac12} \ge T \ln \frac1{\beta_\kappa - \frac12}.$$
It follows that, on this set, we have
$$\frac{\partial \F}{\partial \gamma}(p) \le \eps(p) - \mu + 2U\int_{\T^3} \gamma_\kappa(q)dq + 2U\rho_{0,\kappa} - T \ln \frac1{\beta_\kappa - \frac12}.$$
Since we already know that $(\gamma_\kappa,\alpha_\kappa, \rho_{0,\kappa})$ is a minimizing sequence of the unrestricted functional (cf. Proposition \ref{minimizing_seq}), by Corollary \ref{bounded_seq} we can bound $\|\gamma_\kappa\|_1$ and $\rho_{0,\kappa}$ by constant $C$ dependent only on $U$ and $\mu$, hence 
$$\eps(p) - \mu + 2U\int_{\T^3} \gamma_\kappa(q)dq + 2U\rho_{0,\kappa} - T \ln \frac1{\beta_\kappa - \frac12} \le \eps(p) + C - T \ln \frac1{\beta_\kappa - \frac12}.$$
For minimizing $\gamma_\kappa$ the set $\{p \colon \frac{\partial \F}{\partial \gamma} < 0\} \cap \{p \colon \gamma_\kappa(p) < \kappa\}$ has to have zero measure (otherwise one could decrease the energy by increasing $\gamma$, which is possible on this set without violating the domain constraints). Note that the set of elements $p \in \T^3$ such that
$$\eps(p) + C - T \ln \frac1{\beta_\kappa - \frac12} < 0 \iff \beta_\kappa < e^{\frac{-\eps(p) - C}{T}} + \frac12$$
is a subset of the above set, so it also has to have zero measure. In particular, since $\beta_\kappa \le \gamma_\kappa + \frac12$ this inequality holds for elements $p$ such that
$$\gamma_\kappa(p) <  e^{\frac{-\eps(p) - C}{T}}.$$
This means that for almost every $p \in \{p \colon \gamma_\kappa(p) < \kappa\}$ for minimizing $\gamma_\kappa$ we have
$$\gamma_\kappa(p) \ge e^{\frac{-\eps(p) - C}{T}}.$$
Because $\eps(p)$ is bounded, it also implies that for minimizing $\gamma_\kappa$ we have
$$\gamma_\kappa(p) > c > 0$$
for almost every $p \in \{p \colon \gamma_\kappa(p) < \kappa\}$, where $c$ is independent of $\kappa$:
$$c = c(U, \mu, T).$$ 
For sufficiently large $\kappa$ we have $\kappa > c$, so, for such $\kappa$, we obtain
$$\gamma_\kappa(p) \ge c$$
for almost every $p \in \T^3$.
\end{proof}


\begin{lemma}
\label{alpha_ineq}
For $T>0$ minimizers of the restricted functional fulfill the strict inequality
\begin{equation*}  
\alpha^2_\kappa(p) < \gamma_\kappa(p)(\gamma_\kappa(p)+1)
\end{equation*}
for almost all $p \in \T^3$.
\end{lemma}
\begin{proof}
First suppose that
$$\alpha_\kappa^2(p) \ge \frac12 \gamma_\kappa(p)(\gamma_\kappa(p)+1).$$
With this assumption we have
$$\left|\frac{\alpha_\kappa}{\beta_\kappa}\right| \ge \frac{\sqrt{\frac12\gamma_\kappa(p)(\gamma_\kappa(p)+1)}}{\sqrt{\gamma_\kappa(p)(\gamma_\kappa(p)+1) + \frac14 - \alpha_\kappa^2(p)}} \ge \frac{\sqrt{\frac12\gamma_\kappa(p)(\gamma_\kappa(p)+1)}}{\sqrt{\gamma_\kappa(p)(\gamma_\kappa(p)+1) + \frac14}} =: B(\gamma_\kappa(p)).$$
Furthermore, as $\|\gamma_\kappa\|_1$ and $\rho_{0,\kappa}$ are uniformly bounded by a constant independent of $\kappa$, one can use the bound $\|\alpha\|_1 \le c\|\gamma\|_1$ (see Corollary \ref{bounded_seq}) to estimate
$$\big| U\int_{\T^3} \alpha(q)dq + U\rho_0 \big| \le C$$
for some constant $C$ dependent only on $U$ and $\mu$. Using this for $p$ such that $\alpha_\kappa(p) > 0$ we can write
$$\frac{\partial \F}{\partial \alpha}(p) \ge - C + TB(\gamma_\kappa(p))\ln\frac1{\beta_\kappa - \frac12}$$
and for $p$ such that $\alpha(p) < 0$
$$\frac{\partial \F}{\partial \alpha}(p) \le  C - TB(\gamma_\kappa(p))\ln\frac1{\beta_\kappa - \frac12}.$$
From this we can see that if
$$\beta_\kappa \le \frac12 + e^{-C/TB(\gamma_\kappa)}$$
we can decrease the energy by varying $\alpha_\kappa$. Since $\alpha_\kappa$ is a minimizer, the set of $p$'s where such inequality holds must be of measure zero. We can rewrite the above inequality as
$$\gamma_\kappa(\gamma_\kappa+1) + \frac14 - \alpha_\kappa^2 \le \left(\frac12 + e^{-C/TB(\gamma_\kappa)}\right)^2$$
and
$$\alpha_\kappa^2 \ge \gamma_\kappa(\gamma_\kappa+1) + \frac14 - \left(\frac12 + e^{-C/TB(\gamma_\kappa)}\right)^2 = \gamma_\kappa(\gamma_\kappa+1) - e^{-C/TB(\gamma_\kappa)} - e^{-2C/TB(\gamma_\kappa)}.$$
It follows that on a set of full measure we need to have the strict inequality
$$\alpha_\kappa^2(p) < \gamma_\kappa(p)(\gamma_\kappa(p)+1).$$
Note that we already know $\gamma_\kappa(p) \ne 0$ on a set of full measure, so above inequality is well stated.
\end{proof}
As a corollary we obtain that the variational derivative of $\F$ with respect to $\alpha$ must be zero. For further reference, we collect all relations coming from variational derivatives.
\begin{cor}
\label{EL+}
Variational derivatives \eqref{variational_derivatives} evaluated on the minimizers of the restricted problem $(\gamma_\kappa, \alpha_\kappa, \rho_{0,\kappa})$ satisfy the following relations:
\begin{align*}
\frac{\partial \F}{\partial \gamma} &= \eps(p) - \mu + 2U\int_{\T^3} \gamma_\kappa(q)dq + 2U\rho_{0,\kappa} - T \frac{\gamma_\kappa + \frac12}{\beta_\kappa}\ln\frac{\beta_\kappa + \frac12}{\beta_\kappa - \frac12} = \left\{ 
\begin{array}{l}
     = 0 \text{ if } \gamma_\kappa(p) < \kappa \\
     \le 0\text{ if } \gamma_\kappa(p) = \kappa
\end{array}\right.,\\
\frac{\partial \F}{\partial \alpha} &= U\int_{\T^3} \alpha_\kappa(q)dq + U\rho_{0,\kappa} + T\frac{\alpha_\kappa}{\beta_\kappa}\ln\frac{\beta_\kappa + \frac12}{\beta_\kappa - \frac12} = 0,\\
\frac{\partial \F}{\partial \rho_0}&= -\mu + U\int_{\T^3} \alpha_\kappa(q)dq + 2U\int_{\T^3} \gamma_\kappa(q)dq + U\rho_{0,\kappa} =\left\{\begin{array}{l}
     = 0 \text{ if } \rho_{0,\kappa} > 0 \\
     \ge 0\text{ if } \rho_{0,\kappa} = 0
\end{array}\right..
\end{align*}
The first two relations also hold if $(\gamma_\kappa, \alpha_\kappa$) are minimizers of the restricted problem with fixed $\rho_0$.
\end{cor}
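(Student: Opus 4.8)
The plan is to recognize these relations as the first-order (Karush--Kuhn--Tucker) optimality conditions for the constrained minimization defining the restricted problem, and to derive them by testing the joint minimizer $(\gamma_\kappa,\alpha_\kappa,\rho_{0,\kappa})$ from Proposition \ref{restriced_existence} against admissible one-parameter perturbations in each variable separately. The essential preliminary observation is that Propositions \ref{bound_below} and \ref{alpha_ineq} guarantee $\gamma_\kappa \ge c > 0$ and $\alpha_\kappa^2 < \gamma_\kappa(\gamma_\kappa+1)$ for almost every $p$; equivalently $\beta_\kappa > \tfrac12$ a.e. Two consequences are crucial: first, the logarithmic factor $\ln\frac{\beta_\kappa+1/2}{\beta_\kappa-1/2}$ appearing in the variational derivatives \eqref{variational_derivatives} is finite a.e., so these derivatives are well-defined as measurable functions; second, the constraints $\gamma \ge 0$ and $\alpha^2 \le \gamma(\gamma+1)$ are inactive, so on suitable subsets we are free to perturb $\gamma$ and $\alpha$ in both directions without leaving $\D_\kappa'$.

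I would first dispose of the $\alpha$ relation, which is the statement already flagged before the corollary. Fix a bounded test function $\phi$. Since $\gamma_\kappa \ge c$ and $\alpha_\kappa^2$ lies strictly below $\gamma_\kappa(\gamma_\kappa+1)$, for every sufficiently small $|t|$ the pair $(\gamma_\kappa,\alpha_\kappa + t\phi)$ still lies in $\D_\kappa'$ (after restricting $\phi$ to a set where the gap is uniformly bounded below and exhausting). Minimality then gives $\frac{d}{dt}\F(\gamma_\kappa,\alpha_\kappa+t\phi,\rho_{0,\kappa})\big|_{t=0}=0$, i.e. $\int_{\T^3}\frac{\partial\F}{\partial\alpha}\,\phi\,dp=0$ for all such $\phi$, whence $\frac{\partial\F}{\partial\alpha}=0$ a.e. Differentiation under the integral sign is justified by dominated convergence, the dominating function being controlled on the exhausting sets where $\beta_\kappa$ is bounded away from $\tfrac12$.

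For the $\gamma$ relation I would split according to the cut-off. On $\{\gamma_\kappa<\kappa\}$ we additionally have $\gamma_\kappa\ge c>0$ and the strict pairing inequality, so the two-sided perturbation $\gamma_\kappa+t\phi$ (with $\alpha_\kappa$ fixed) remains admissible for small $|t|$; exactly as above this forces $\frac{\partial\F}{\partial\gamma}=0$ a.e. on this set. On $\{\gamma_\kappa=\kappa\}$ only downward variations $\gamma_\kappa - t\phi$ with $\phi\ge0$, $t>0$, are admissible, since increasing $\gamma$ would violate $\gamma\le\kappa$; minimality yields $-\int_{\T^3}\frac{\partial\F}{\partial\gamma}\phi\,dp\ge0$ for every nonnegative $\phi$ supported there, hence $\frac{\partial\F}{\partial\gamma}\le0$ a.e. on $\{\gamma_\kappa=\kappa\}$. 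The $\rho_0$ relation is simplest: holding $(\gamma_\kappa,\alpha_\kappa)$ fixed and varying $\rho_0\mapsto\rho_{0,\kappa}+t$, the only constraint is $\rho_0\ge0$, so the interior case $\rho_{0,\kappa}>0$ gives the two-sided condition $\frac{\partial\F}{\partial\rho_0}=0$, while the boundary case $\rho_{0,\kappa}=0$ admits only $t\ge0$ and yields $\frac{\partial\F}{\partial\rho_0}\ge0$. Finally, the first two relations were obtained using only perturbations of $\gamma$ and $\alpha$ at fixed $\rho_0$ and never invoked optimality in $\rho_0$; they therefore hold verbatim for minimizers of the restricted problem with $\rho_0$ held fixed.

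I expect the main obstacle to be the rigorous justification of the one-parameter variations rather than any single inequality: one must ensure the perturbed configurations remain in $\D_\kappa'$ and that the difference quotients of the entropy term converge to the stated derivatives despite the singularity of $\ln\frac{\beta+1/2}{\beta-1/2}$ as $\beta\to\tfrac12^+$. Both are handled by the same device --- restricting test functions to sets where $\gamma_\kappa$ and the gap $\gamma_\kappa(\gamma_\kappa+1)-\alpha_\kappa^2$ are uniformly bounded below, applying dominated convergence there, and then exhausting the full-measure set guaranteed by Propositions \ref{bound_below} and \ref{alpha_ineq}.
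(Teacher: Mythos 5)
Your proposal is correct and follows essentially the same route as the paper: the paper presents Corollary \ref{EL+} as an immediate consequence of Propositions \ref{bound_below} and \ref{alpha_ineq} (which make the constraints $\gamma \ge 0$ and $\alpha^2 \le \gamma(\gamma+1)$ inactive) combined with the standard first-order conditions for one-sided and two-sided admissible perturbations in $\gamma$, $\alpha$ and $\rho_0$, exactly as you argue. Your additional care about justifying differentiation under the integral and handling the singularity of $\ln\frac{\beta+1/2}{\beta-1/2}$ via exhaustion of sets where $\beta_\kappa$ is uniformly bounded away from $\tfrac12$ merely makes explicit what the paper leaves implicit.
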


Using this corollary we obtain the our first result on the structure of the minimizers.
\begin{prop}
\label{rest_stucture}
Let $(\gamma_\kappa, \alpha_\kappa, \rho_{0,\kappa})$ be the minimizer of the restricted problem. Then
$$\alpha_\kappa \equiv 0 \Longleftrightarrow \rho_{0,\kappa} = 0.$$
If $\rho_{0,\kappa} > 0$, then $\alpha_\kappa$ is negative almost everywhere and satisfies strict norm inequality
$$\|\alpha_\kappa\|_1 < \rho_{0,\kappa}.$$
\end{prop}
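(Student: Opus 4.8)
The plan is to read off all three assertions from the stationarity condition for $\alpha$ recorded in Corollary \ref{EL+}, namely
\[
U\int_{\T^3}\alpha_\kappa(q)\,dq + U\rho_{0,\kappa} + T\frac{\alpha_\kappa(p)}{\beta_\kappa(p)}\ln\frac{\beta_\kappa(p) + \frac12}{\beta_\kappa(p) - \frac12} = 0 \qquad \text{for a.e. } p.
\]
First I would invoke the two preceding results: Proposition \ref{bound_below} gives $\gamma_\kappa(p)\ge c>0$, and Proposition \ref{alpha_ineq} gives the strict inequality $\alpha_\kappa^2(p)<\gamma_\kappa(p)(\gamma_\kappa(p)+1)$, which by the definition \eqref{beta_def} of $\beta$ is precisely $\beta_\kappa(p)>\tfrac12$ almost everywhere. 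Hence the coefficient
\[
L(p) := \frac{T}{\beta_\kappa(p)}\ln\frac{\beta_\kappa(p)+\frac12}{\beta_\kappa(p)-\frac12}
\]
is strictly positive for a.e.\ $p$ (this is where $T>0$ enters). Writing $A := U\int_{\T^3}\alpha_\kappa + U\rho_{0,\kappa}$ for the $p$-independent constant, the condition reduces to $\alpha_\kappa(p)L(p)=-A$, so $\alpha_\kappa(p)=-A/L(p)$; in particular $\alpha_\kappa$ has a \emph{constant sign} almost everywhere, opposite to that of $A$.

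The crux is then to fix the sign of the scalar $A$ by a self-consistency argument, and this is the only step that is not routine bookkeeping: the sign of $\alpha_\kappa$ is dictated by $A$, while $A$ is itself assembled from $\int_{\T^3}\alpha_\kappa$, so one must close the loop. I claim $A\ge 0$. Indeed, were $A<0$, then $\alpha_\kappa(p)=-A/L(p)>0$ a.e., whence $\int_{\T^3}\alpha_\kappa>0$; combined with $\rho_{0,\kappa}\ge 0$ this would force $A=U\int_{\T^3}\alpha_\kappa+U\rho_{0,\kappa}>0$, a contradiction. I expect this to be the main obstacle, since it is precisely the point where the strict positivity of $L$ (supplied by the two earlier propositions) is needed to guarantee that $\alpha_\kappa$ inherits a definite sign from $A$.

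With $A\ge 0$ established, all three claims follow directly. For the equivalence: if $\alpha_\kappa\equiv 0$ then $A=U\rho_{0,\kappa}$ while the equation gives $A=-\alpha_\kappa L=0$, so $\rho_{0,\kappa}=0$; conversely, if $\rho_{0,\kappa}=0$ then $A>0$ is impossible, because $A>0$ would give $\alpha_\kappa<0$ a.e.\ and hence $A=U\int_{\T^3}\alpha_\kappa<0$, so necessarily $A=0$ and therefore $\alpha_\kappa=-A/L\equiv 0$. If instead $\rho_{0,\kappa}>0$, the equivalence just proved yields $\alpha_\kappa\not\equiv 0$, hence $A\ne 0$, hence $A>0$; thus $\alpha_\kappa=-A/L<0$ almost everywhere, which is the claimed negativity. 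Finally, $A>0$ reads $U\int_{\T^3}\alpha_\kappa+U\rho_{0,\kappa}>0$, i.e.\ $\int_{\T^3}\alpha_\kappa>-\rho_{0,\kappa}$; since $\alpha_\kappa<0$ a.e.\ we have $\|\alpha_\kappa\|_1=-\int_{\T^3}\alpha_\kappa$, and the inequality becomes the desired strict bound $\|\alpha_\kappa\|_1<\rho_{0,\kappa}$.
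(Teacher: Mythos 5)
Your proof is correct, and it differs from the paper's in one step worth noting. Both arguments extract everything from the stationarity condition $\frac{\partial \F}{\partial \alpha}=0$ of Corollary \ref{EL+} together with the strict positivity of the coefficient $L(p)=\frac{T}{\beta_\kappa(p)}\ln\frac{\beta_\kappa(p)+\frac12}{\beta_\kappa(p)-\frac12}$, which, as you say, rests on Propositions \ref{bound_below} and \ref{alpha_ineq} (the strict inequality $\alpha_\kappa^2<\gamma_\kappa(\gamma_\kappa+1)$ is exactly $\beta_\kappa>\frac12$ by \eqref{beta_def}), so that $\alpha_\kappa(p)L(p)=-A$ with $A=U\int\alpha_\kappa+U\rho_{0,\kappa}$ forces a constant sign. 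At this point the paper fixes the sign by a variational comparison --- if $\alpha_\kappa>0$ on a positive measure set, replacing $\alpha_\kappa$ by $-\alpha_\kappa$ lowers the energy through the term $U\rho_0\int\alpha$ --- and then disposes of the degenerate case by showing that if $\{\alpha_\kappa=0\}$ has positive measure, equality \eqref{alpha-rho} forces $\alpha_\kappa\equiv 0$; you instead close a purely algebraic self-consistency loop on the scalar $A$ (if $A<0$ then $\alpha_\kappa>0$ a.e., so $\int\alpha_\kappa>0$ and $A>0$, a contradiction, whence $A\ge 0$) and read off the equivalence, the a.e.\ negativity, and the strict bound $\|\alpha_\kappa\|_1<\rho_{0,\kappa}$ from a case analysis on $A$. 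Your bootstrap is entirely internal to the Euler--Lagrange identity and is uniform in $\rho_{0,\kappa}$: note that the paper's $\alpha\mapsto-\alpha$ comparison is only \emph{strictly} energy-decreasing when $\rho_{0,\kappa}>0$ (for $\rho_{0,\kappa}=0$ the functional is invariant under this flip, since only $(\int\alpha)^2$ and the entropy, which depends on $\alpha^2$, remain), which is why the paper needs the separate zero-set dichotomy, whereas your sign analysis of $A$ covers both cases at once. Conversely, the paper's route makes no appeal to an a priori sign of $A$ and isolates the dichotomy ``$\alpha_\kappa$ is zero a.e.\ or nonzero a.e.''\ as a reusable statement; but your argument transfers equally well to the fixed-$\rho_0$ setting of Remark \ref{fixed_0} and to the unrestricted minimizers in Proposition \ref{min_struct}, so nothing is lost.
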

\begin{proof}
From the Euler-Lagrange equation for $\alpha$ (second equation in Corollary \ref{EL+}) it follows that
\begin{equation}
\label{ELalpha}
T\frac{\alpha_\kappa}{\beta_\kappa}\ln\frac{\beta_\kappa + \frac12}{\beta_\kappa - \frac12} = - U\int_{\T^3} \alpha_\kappa(q)dq - U\rho_{0,\kappa} = const,  
\end{equation}
so $\alpha_\kappa$ has a fixed sign, because the expression
\begin{equation}
\label{expression}
T\frac{1}{\beta_\kappa}\ln\frac{\beta_\kappa + \frac12}{\beta_\kappa - \frac12}
\end{equation}
is always positive. Since $\alpha_\kappa$ is a minimizer, it follows that for almost every $p \in \T^3$ we have
$$\alpha_\kappa(p) \le 0,$$
because otherwise, by direct computation, one can check that replacing $\alpha_\kappa$ with $-\alpha_\kappa$ would decrease the energy. Now consider the set $\{p \colon \alpha_\kappa(p) = 0\}$. If this set has non-zero measure, then from \eqref{ELalpha} we get equality
\begin{equation}
\label{alpha-rho}
- U\int_{\T^3} \alpha_\kappa(q)dq - U\rho_{0,\kappa} = 0 \Longleftrightarrow \int_{\T^3} \alpha_\kappa(q)dq = -\rho_{0,\kappa}
\end{equation}
and therefore for almost every $p \in \T^3$ we get
$$T\frac{\alpha_\kappa}{\beta_\kappa}\ln\frac{\beta_\kappa + \frac12}{\beta_\kappa - \frac12} = 0.$$
By positivity of the expression \eqref{expression} this implies that $\alpha_\kappa \equiv 0$. We proved that $\alpha_\kappa$ is either zero or non-zero almost everywhere. From this it also follows that $\alpha_\kappa \equiv 0$ is equivalent to $\rho_{0,\kappa} = 0$.

Next suppose $\rho_{0,\kappa} > 0$. If equality \eqref{alpha-rho} did hold, then, by previous considerations, this would imply $\alpha_\kappa \equiv 0$ and led to contradiction. This shows that in this case $\alpha_\kappa$ is non-zero (hence negative) almost everywhere and
$$- U\int_{\T^3} \alpha_\kappa(q)dq - U\rho_{0,\kappa} < 0,$$
so
$$\|\alpha_\kappa\|_1 < \rho_{0,\kappa}.$$
\end{proof}
\begin{remark}
\label{fixed_0}
The same argument as in the above proof is also valid for the minimization problem with fixed $\rho_0$.
\end{remark}

In the following steps we will further separate the problem into two sub-cases depending on the value of the chosen minimizing $\rho_0^{\min}$ of the unrestricted functional (cf. Proposition \ref{rho_0_conv}): the first case is  $\rho_0^{\min} >0$, while the second one is $\rho_0^{\min} = 0$.

\subsubsection{Non-zero condensate density case}
Let us introduce the notation for the non-entropy terms appearing in the variational derivatives \eqref{variational_derivatives}:
\begin{equation}
\label{AB_intro}
\begin{split}
   A_\kappa(p) &:= \frac1{T}\left(\eps(p) - \mu + 2U\int_{\T^3} \gamma_\kappa(q)dq + 2U\rho_{0,\kappa}\right)\\
    B_\kappa &:= \frac1{T}\left(U\int_{\T^3} \alpha_\kappa(q)dq + U\rho_{0,\kappa}\right). 
\end{split}
\end{equation}
For $p \in \{p \colon \gamma_\kappa(p) < \kappa\},$ using the Euler-Lagrange equations (cf. Corollary \ref{EL+}) we get 
$$TA_\kappa(p) - T \frac{\gamma_\kappa + \frac12}{\beta_\kappa}\ln\frac{\beta_\kappa + \frac12}{\beta_\kappa - \frac12} = 0$$
and
$$TB_\kappa + T \frac{\alpha_\kappa}{\beta_\kappa}\ln\frac{\beta_\kappa + \frac12}{\beta_\kappa - \frac12} = 0,$$
which implies an equality
$$\sqrt{A^2_\kappa(p) - B^2_\kappa} = \ln\frac{\beta_\kappa(p) + \frac12}{\beta_\kappa(p) - \frac12},$$
which is equivalent to
$$\beta_\kappa(p) = \frac12 \frac{e^{\sqrt{A^2_\kappa(p) - B^2_\kappa}} + 1}{e^{\sqrt{A^2_\kappa(p) - B^2_\kappa}} - 1}.$$
It follows that
\begin{equation}
\label{gamma_eq}
\gamma_\kappa(p) = \frac12 \left[\frac{e^{\sqrt{A^2_\kappa(p) - B^2_\kappa}}+1}{\sqrt{A^2_\kappa(p) - B^2_\kappa}(e^{\sqrt{A^2_\kappa(p) - B^2_\kappa}}-1)}A_\kappa(p) - 1\right].  
\end{equation}
Notice that since $A_\kappa$ is a continuous function of $p$, this means $\gamma_\kappa$ is continuous on this set as well. We will estimate $\gamma_\kappa$ from above by estimating $A_\kappa(p)$ and $\sqrt{A^2_\kappa(p) - B^2_\kappa}$.

\begin{lemma}
\label{Cc}
For sufficiently large $\kappa$, there exist constants $C$ and $c$ independent of $\kappa$ such that for all $ p \in \T^3$ we have
$$A_\kappa(p) \le C  $$
and
$$\sqrt{A^2_\kappa(p) - B^2_\kappa} \ge c.$$
\end{lemma}

\begin{proof}
Using the $\rho_0$ derivative condition (cf. third relation in Corollary \ref{EL+} which now, as we assume non-zero condensate density, holds with equality) we get the relation
\begin{equation}
\label{A_rewrite}
\mu = U\int_{\T^3} \alpha_\kappa(q)dq + 2U\int_{\T^3} \gamma_\kappa(q)dq + U\rho_{0,\kappa}.  
\end{equation}
With this we can rewrite expression for $A_\kappa$ as
$$A_\kappa(p) = \frac1{T}\left(\eps(p) - U\int_{\T^3} \alpha_\kappa(q)dq + U\rho_{0,\kappa}\right),$$
in particular we got rid of the $\mu$ dependence in the entire expression. Now, using Proposition \ref{rest_stucture} we can estimate $A_\kappa$ from above by
$$\left|\int_{\T^3} \alpha_\kappa(q)dq\right| = \|\alpha_\kappa\|_1 \le \rho_{0,\kappa}.$$
As the minimizing sequence  $\rho_{0,\kappa}$ is bounded (cf. Corollary \ref{bounded_seq}) and $\varepsilon(p)\leq 12$,  we have
$$A_\kappa(p) \le \frac1{T}\left(12+ UC_1\right) =: C = C(U,T).$$
To estimate $\sqrt{A^2_\kappa(p) - B^2_\kappa}$ we can write
$$\sqrt{A^2_\kappa(p) - B^2_\kappa} = \sqrt{(A_\kappa(p) - B_\kappa)(A_\kappa(p) + B_\kappa)}$$
and estimate each factor. We have
$$A_\kappa(p) - B_\kappa = \frac1{T}\left(\eps(p) -2U\int_{\T^3} \alpha_\kappa(q) dq\right) \ge -\frac{2U}{T}\int_{\T^3} \alpha_\kappa(q) dq = \frac{2U}{T}\|\alpha_\kappa\|_1$$
and
$$A_\kappa(p) + B_\kappa = \frac1{T}\left(\eps(p) + 2U\rho_{0,\kappa}\right)  \ge \frac{2U}{T}\rho_{0,\kappa}.$$
Our goal is to estimate the right hand sides of those inequalities by a constant independent of $\kappa$.

Since we assumed that $\rho_0^{\min}$ minimizing the unrestricted problem is positive, by Proposition \ref{rho_0_conv} for sufficiently large $\kappa$ we have
$$\rho_{0,\kappa} \ge \frac12\rho_0^{\min} > 0.$$
For $p \in \{p \colon \gamma_\kappa(p) < \kappa\}$ we can combine first two equalities from Corollary \ref{EL+} to obtain
\begin{align*}
    A_\kappa(p) + \frac{\gamma_\kappa(p) + \frac12}{\alpha_\kappa(p)} B_\kappa = 0.
\end{align*}
We have used the fact that $\alpha_\kappa(p) \ne 0$ almost everywhere (cf. Proposition \ref{rest_stucture}). It follows that
\begin{align*}
    \gamma_\kappa(p) &= -\frac{A_\kappa(p)}{B_\kappa} \alpha_\kappa(p) - \frac12,
\end{align*}
where, once again, by Proposition \ref{rest_stucture} we know that $B_\kappa \ne 0$. Writing the expressions for $A_\kappa(p)$ and $B_\kappa$ explicitly and using the fact $\alpha_\kappa(p) < 0$ we have
$$\gamma_\kappa(p) = \frac{\eps(p) + U\int_{\T^3} |\alpha_\kappa(q)|dq + U\rho_{0,\kappa}}{-U\int_{\T^3} |\alpha_\kappa(q)|dq + U\rho_{0,\kappa}} |\alpha_\kappa(p)| - \frac12.$$
Now we can use the fact that for sufficiently large $\kappa$ we have $\gamma_\kappa(p) \ge c$ almost everywhere (cf. Lemma \ref{bound_below}). We get
$$ c \le \frac{\eps(p) + U\int_{\T^3} |\alpha_\kappa(q)|dq + U\rho_{0,\kappa}}{-U\int_{\T^3} |\alpha_\kappa(q)|dq + U\rho_{0,\kappa}} |\alpha_\kappa(p)| - \frac12$$
and after rewriting
$$\frac{(c+\frac12)(U\rho_{0,\kappa} -U\int_{\T^3} |\alpha_\kappa(q)|dq)}{\eps(p) + U\int_{\T^3} |\alpha_\kappa(q)|dq + U\rho_{0,\kappa}} \le |\alpha_\kappa(p)|.$$
Denoting $c' = c+\frac12$ and estimating $\eps(p) \le 12$ and $\|\alpha_\kappa\|_1 \le \rho_{0,\kappa}$ (once again we are using Proposition \ref{rest_stucture}) we get
$$\frac{c'(U\rho_{0,\kappa} -U\int_{\T^3} |\alpha_\kappa(q)|dq)}{12 + 2U\rho_{0,\kappa}} \le |\alpha_\kappa(p)|.$$
Taking integral of both sides over the set $\{p \colon \gamma_\kappa(p) < \kappa\}$ we obtain
\begin{equation}
\label{alpha_point_est}
|\{p \colon \gamma_\kappa(p) < \kappa\}|\cdot \frac{c'(U\rho_{0,\kappa} -U\|\alpha_\kappa\|_1)}{12 + 2U\rho_{0,\kappa}} \le \int_{\{p \colon \gamma_\kappa(p) < \kappa\} }|\alpha_\kappa(p)|dp.    
\end{equation}
We can trivially estimate the right hand side
$$\int_{\{p \colon \gamma_\kappa(p) < \kappa\} }|\alpha_\kappa(p)|dp \le \|\alpha_\kappa\|_1.$$
As for the left hand side we will estimate the measure of the set $\{p \colon \gamma_\kappa(p) < \kappa\}$. Since $\gamma_\kappa$ are uniformly bounded in $L^1$, i.e. $\|\gamma_\kappa\|_1 \le C$ for some constant $C$ independent of $\kappa$, using that $\gamma_\kappa \ge 0$ we have
$$C \ge \|\gamma_\kappa\|_1 = |\{p \in \T^3 \colon \gamma_{\kappa}(p) = \kappa\}|\cdot \kappa + \int_{\{\gamma_\kappa < \kappa\}} \gamma_\kappa(p) dp$$
and therefore
$$|\{p \in \T^3 \colon \gamma_{\kappa}(p) = \kappa\}| \le \frac{C}{\kappa}.$$
It follows that for sufficiently large $\kappa$ we have
$$|\{p \in \T^3 \colon \gamma_{\kappa}(p) = \kappa\}| \le \frac12$$
and hence (recall that the measure of the whole space $\T^3$ is equal to one)
$$|\{p \in \T^3 \colon \gamma_{\kappa}(p) < \kappa\}|\ge \frac12.$$
Combining those facts in \eqref{alpha_point_est} gives us
$$\frac{1}{2} \cdot \frac{c'(U\rho_{0,\kappa} -U\|\alpha_\kappa\|_1)}{12 + 2U\rho_{0,\kappa}}\le \|\alpha_\kappa\|_1 .$$
Solving this inequality for $\|\alpha_\kappa\|_1$ yields
$$\frac{c'U\rho_{0,\kappa}}{24 + 4U\rho_{0,\kappa} + c'U} \le \|\alpha_\kappa\|_1.$$
By convergence of $\rho_{0,\kappa}$ (cf. Proposition \ref{rho_0_conv}), for sufficiently large $\kappa$, we get the inequalities
$$\frac12 \rho_{0,\kappa} \le \rho_0^{\min} \le 2\rho_{0,\kappa},$$
so we can estimate $\|\alpha_\kappa\|_1$ by a constant $\tilde c$ independent of $\kappa$, i.e., 
$$\tilde c := \frac{\frac12c'U\rho_{0}^{\min}}{24 + 8U\rho_{0}^{\min} + c'U} \le \|\alpha_\kappa\|_1.$$
Therefore we get the desired estimates
$$A_\kappa(p) - B_\kappa \ge \frac{2U\tilde c}{T}$$
and
$$A_\kappa(p) + B_\kappa \ge \frac{U\rho_0^{\min}}{T}.$$
Now we can estimate
$$\sqrt{A^2_\kappa(p) - B^2_\kappa} \ge \sqrt{\frac{2U\tilde c}{T} \cdot \frac{U\rho_0^{\min}}{T}} =: c > 0.$$
\end{proof}
We can use the above result to obtain an important corollary.
\begin{cor}
\label{K_bound}
For sufficiently large $\kappa$, for $p \in \{p \colon \gamma_\kappa(p) < \kappa\}$ we have
$$\gamma_\kappa(p) \le K$$
for some constant $K$ independent of $\kappa$.
\end{cor}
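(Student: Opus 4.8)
The plan is to read off the bound directly from the explicit formula \eqref{gamma_eq} for $\gamma_\kappa(p)$, valid on the set $\{p : \gamma_\kappa(p) < \kappa\}$, by feeding in the uniform estimates of Proposition \ref{Cc}. I would write $x := \sqrt{A_\kappa^2(p) - B_\kappa^2}$ and set
$$g(x) = \frac{e^x + 1}{x(e^x - 1)}, \qquad x > 0,$$
so that \eqref{gamma_eq} becomes $\gamma_\kappa(p) = \tfrac12\bigl[g(x)A_\kappa(p) - 1\bigr]$.

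First I would record the behaviour of $g$. Since $\tfrac{e^x+1}{e^x-1} = \coth(x/2)$, we have $g(x) = \tfrac1x \coth(x/2)$, a product of two positive, strictly decreasing functions on $(0,\infty)$; hence $g$ is itself strictly decreasing, with $g(x) \to +\infty$ as $x \to 0^+$ and $g(x) \to 0$ as $x \to \infty$. In particular $g$ is bounded on every interval $[c,\infty)$ with $c > 0$, where $g(x) \le g(c)$.

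The crucial input is the lower bound from Proposition \ref{Cc}: for sufficiently large $\kappa$ and $p$ in the set under consideration one has $x = \sqrt{A_\kappa^2(p) - B_\kappa^2} \ge c > 0$, which keeps us away from the singularity of $g$ at the origin and yields $g(x) \le g(c)$, with $g(c)$ independent of $\kappa$. Since $\gamma_\kappa(p) \ge 0$ forces $g(x)A_\kappa(p) \ge 1$ while $g(x) > 0$, necessarily $A_\kappa(p) > 0$; combining this with the upper bound $A_\kappa(p) \le C$ from Proposition \ref{Cc} gives
$$\gamma_\kappa(p) = \tfrac12\bigl[g(x)A_\kappa(p) - 1\bigr] \le \tfrac12\bigl[g(c)\,C - 1\bigr] =: K,$$
a constant depending only on $U$, $\mu$, $T$ and $\rho_0^{\min}$, and in particular independent of $\kappa$.

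The only genuine subtlety, and the reason Proposition \ref{Cc} had to be established beforehand, is the blow-up of $g$ as $x \to 0^+$: without the uniform lower bound $\sqrt{A_\kappa^2 - B_\kappa^2} \ge c$ the right-hand side of \eqref{gamma_eq} could be arbitrarily large even while $A_\kappa$ stays bounded. Once that lower bound is secured, the estimate reduces to the single substitution above, so I expect no further obstacle.
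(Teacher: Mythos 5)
Your proof is correct and follows essentially the same route as the paper: both substitute the uniform bounds $A_\kappa(p)\le C$ and $\sqrt{A_\kappa^2(p)-B_\kappa^2}\ge c$ from Proposition \ref{Cc} into the explicit formula \eqref{gamma_eq}. Your treatment via the monotonicity of $g(x)=\tfrac1x\coth(x/2)$ is a slightly cleaner (and marginally sharper) way of organizing the same estimate than the paper's direct bound $\tfrac12\bigl[\tfrac{e^{C}+1}{c(e^{c}-1)}C-1\bigr]=:K$, but it is not a different argument.
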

\begin{proof}
Using equation \eqref{gamma_eq} and estimates obtained in the previous proposition we get
$$\gamma_\kappa(p) = \frac12 \left[\frac{e^{\sqrt{A^2_\kappa(p) - B^2_\kappa}}+1}{\sqrt{A^2_\kappa(p) - B^2_\kappa}(e^{\sqrt{A^2_\kappa(p) - B^2_\kappa}}-1)}A_\kappa(p) - 1\right] \le \frac12 \left[\frac{e^{C}+1}{c(e^{c}-1)}C - 1\right] =: K.$$
\end{proof}
Now we will show that $\gamma_\kappa$ is actually bounded by $K$ on the whole domain.

\begin{lemma}
\label{direct_def}
For sufficiently large $\kappa$ (precisely: for $\kappa > K$) the set $\{p \colon \gamma_\kappa(p) = \kappa\}$ has measure zero (in fact, it is empty). In particular $\gamma_\kappa$ is determined by \eqref{gamma_eq} for all $p \in \T^3$ and satisfies inequality
$$\gamma_\kappa(p) \le K.$$
\end{lemma}

\begin{proof}
We can decompose the domain $\T^3$ into two disjoint sets and use the previous corollary to obtain
$$\T^3 = \{p \colon \gamma_\kappa(p) = \kappa\} \cup \{p \colon \gamma_\kappa(p) < \kappa\} =\{p \colon \gamma_\kappa(p) = \kappa\} \cup \{p \colon \gamma_\kappa(p) \le K\}.$$
First recall that we obtained equation \eqref{gamma_eq} for $\gamma_\kappa$ by using Euler-Lagrange equations
\begin{equation}
\label{ABeq}
\begin{split}
&A_\kappa(p) -  \frac{\gamma_\kappa + \frac12}{\beta_\kappa}\ln\frac{\beta_\kappa + \frac12}{\beta_\kappa - \frac12} = 0,
\\
&B_\kappa + \frac{\alpha_\kappa}{\beta_\kappa}\ln\frac{\beta_\kappa + \frac12}{\beta_\kappa - \frac12} = 0,\\
\end{split}  
\end{equation}
where $A_\kappa(p)$ and $B_\kappa$ are given in \eqref{AB_intro}.
In fact, equations \eqref{ABeq} are equivalent to \eqref{gamma_eq} and an analogous equation for $\alpha_\kappa$, that is if we define
\begin{equation}
\label{ABeq2}
\begin{split}
\tilde\gamma_\kappa(p) &= \frac12 \left[\frac{e^{\sqrt{A^2_\kappa(p) - B^2_\kappa}}+1}{\sqrt{A^2_\kappa(p) - B^2_\kappa}(e^{\sqrt{A^2_\kappa(p) - B^2_\kappa}}-1)}A_\kappa(p) - 1\right]\\
\tilde \alpha_\kappa(p) &= -\frac12 \frac{e^{\sqrt{A^2_\kappa(p) - B^2_\kappa}}+1}{\sqrt{A^2_\kappa(p) - B^2_\kappa}(e^{\sqrt{A^2_\kappa(p) - B^2_\kappa}}-1)}B_\kappa
\end{split}  
\end{equation}
then equations \eqref{ABeq} are satisfied with $\gamma_\kappa$ replaced by $\tilde \gamma_\kappa$ and $\alpha_\kappa$ replaced by $\tilde \alpha_\kappa$ (note that $A_\kappa$ and $B_\kappa$ still depend on the integrals of the restricted minimizers $\gamma_\kappa$ and $\alpha_\kappa$). 

The functions $\tilde \gamma_\kappa$ and $\tilde \alpha_\kappa$ are well defined on the whole domain $\T^3$ as the bounds derived in Lemma \ref{Cc} hold for every $p$. Furthermore, for every $p \in \T^3$ function $\tilde \gamma_\kappa$ satisfies the inequality
$$\tilde \gamma_\kappa(p) \le K$$
and, as $\tilde \gamma_\kappa$ and $\tilde \alpha_\kappa$ satisfy the Euler-Lagrange equations, we also have
\begin{align*}
 A_\kappa(p) -  \frac{\tilde\gamma_\kappa + \frac12}{\tilde\beta_\kappa}\ln\frac{\tilde\beta_\kappa + \frac12}{\tilde\beta_\kappa - \frac12} = 0.
\end{align*}
The absolute value of entropy derivative with respect to $\gamma$, treated as the function of variables $(\gamma, \alpha)$, is strictly decreasing in $\gamma$. It follows that for $p \in \{p \colon \gamma_\kappa(p) = \kappa\}$, for $\kappa > K$ we have strict inequality
$$\tilde \gamma(p) < \gamma_\kappa(p),$$
so on this set we have
$$\frac{\partial \F}{\partial \gamma}(p) = TA_\kappa(p) -  T\frac{\gamma_\kappa + \frac12}{\beta_\kappa}\ln\frac{\beta_\kappa + \frac12}{\beta_\kappa - \frac12} > 0.$$
Thus the set $\{p \colon \gamma_\kappa(p) = \kappa\}$ has to be of zero measure, because as $\gamma_\kappa$ is a minimizer, it must satisfy the reverse inequality for almost every $p \in \T^3$. 
\end{proof}

We are ready to prove existence of minimizers of the unrestricted problem in the case with $\rho_0^{\min} > 0$. This is partially proving Theorem \ref{main_existence}.
\begin{prop}
\label{T>0existence1}
For $T>0$ and minimizing $\rho_0^{\min}>0$ there exists a minimizer of the unrestricted problem. 
\end{prop}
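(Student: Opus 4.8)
The plan is to exploit the fact that, once $\kappa > K$, the restricted minimizers no longer feel the cut-off, so that $(\gamma_\kappa,\alpha_\kappa)$ is a sequence of uniformly bounded functions given by the explicit formulas \eqref{ABeq2}; the pointwise limit of a suitable subsequence, together with $\rho_0^{\min}$, will be shown to attain $\inf_\D\F$. The whole argument is a soft compactness step resting on the quantitative bounds already established for $\rho_0^{\min}>0$.

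First I would assemble the uniform a priori bounds. By Proposition \ref{direct_def} we have $\gamma_\kappa(p)\le K$ for every $p\in\T^3$ and every $\kappa>K$, with $K$ independent of $\kappa$, while Proposition \ref{bound_below} gives $\gamma_\kappa(p)\ge c>0$; the domain constraint then forces $|\alpha_\kappa(p)|\le\sqrt{\gamma_\kappa(\gamma_\kappa+1)}\le\sqrt{K(K+1)}$. Hence $(\gamma_\kappa,\alpha_\kappa)$ is bounded in $L^\infty(\T^3)$, and in particular the scalars $\int_{\T^3}\gamma_\kappa$, $\int_{\T^3}\alpha_\kappa$ and $\rho_{0,\kappa}$ all lie in a bounded subset of $\R$.

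Next I would pass to a subsequence (not relabelled) along which these three scalars converge, say $\int\gamma_\kappa\to I_\gamma$, $\int\alpha_\kappa\to I_\alpha$ and, by Proposition \ref{rho_0_conv}, $\rho_{0,\kappa}\to\rho_0^{\min}$. Since $A_\kappa(p)$ and $B_\kappa$ depend on $\kappa$ only through these scalars (their sole $p$-dependence sits in the fixed function $\eps(p)$), this yields uniform-in-$p$ convergence $A_\kappa\to A$ and $B_\kappa\to B$. Inserting this into the representation \eqref{ABeq2}, which by Proposition \ref{direct_def} now holds on all of $\T^3$, and using the lower bound $\sqrt{A_\kappa^2-B_\kappa^2}\ge c$ from Proposition \ref{Cc} to keep the denominators away from their singularity, I obtain pointwise limits $\gamma_\kappa(p)\to\gamma^{\min}(p)$ and $\alpha_\kappa(p)\to\alpha^{\min}(p)$. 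Dominated convergence (justified by the uniform $L^\infty$ bounds) gives $\int\gamma^{\min}=I_\gamma$ and $\int\alpha^{\min}=I_\alpha$, so the limiting functions are self-consistent with the limiting scalars; passing the inequalities $\gamma_\kappa\ge0$ and $\alpha_\kappa^2\le\gamma_\kappa(\gamma_\kappa+1)$ to the pointwise limit shows $(\gamma^{\min},\alpha^{\min},\rho_0^{\min})\in\D$.

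It then remains to identify this triple as a minimizer, and here I would invoke dominated convergence once more for the entire functional. The kinetic and interaction terms converge because $\gamma_\kappa,\alpha_\kappa\to\gamma^{\min},\alpha^{\min}$ pointwise under uniform bounds; for the entropy note that the combined estimate $c\le\sqrt{A_\kappa^2-B_\kappa^2}\le A_\kappa\le C$ from Proposition \ref{Cc} pins $\beta_\kappa$ into a compact subinterval of $(\tfrac12,\infty)$, on which the density $s(\gamma_\kappa,\alpha_\kappa)$ is continuous and uniformly bounded, so $S(\gamma_\kappa,\alpha_\kappa)\to S(\gamma^{\min},\alpha^{\min})$. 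Consequently $\F(\gamma_\kappa,\alpha_\kappa,\rho_{0,\kappa})\to\F(\gamma^{\min},\alpha^{\min},\rho_0^{\min})$, and comparing with Proposition \ref{minimizing_seq}, whose left-hand side equals $\inf_\D\F$, shows that the limit attains the infimum. The real obstacle is not this limiting argument but securing the cut-off-independent bound $\gamma_\kappa\le K$ that makes \eqref{ABeq2} usable on all of $\T^3$ — precisely the content of Propositions \ref{Cc}--\ref{direct_def}, which is where the hypothesis $\rho_0^{\min}>0$ is genuinely used.
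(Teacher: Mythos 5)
Your proposal is correct, but it takes a genuinely different route from the paper's. You run a full compactness argument: uniform $L^\infty$ bounds from Propositions \ref{bound_below} and \ref{direct_def}, subsequential convergence of the scalars $\int\gamma_\kappa$, $\int\alpha_\kappa$, $\rho_{0,\kappa}$, pointwise convergence of $(\gamma_\kappa,\alpha_\kappa)$ via the explicit representation \eqref{ABeq2} (with the bound $\sqrt{A_\kappa^2-B_\kappa^2}\ge c$ from Proposition \ref{Cc} keeping the formula away from its singularity), and finally convergence of the whole functional, with the entropy handled by confining $\beta_\kappa$ to a compact subinterval of $(\tfrac12,\infty)$, compared against Proposition \ref{minimizing_seq}. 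The paper avoids all limit-taking in function space: since $\gamma_\kappa\le K$ for every $\kappa>K$, all restricted minimizers with $\kappa\ge\kappa_0$ (the smallest cut-off exceeding $K$) lie in the single fixed domain $\D_{\kappa_0}$, so after passing to a subsequence along which the values decrease one gets the chain $\inf_\D\F=\lim_{\kappa\to\infty}\F(\gamma_\kappa,\alpha_\kappa,\rho_{0,\kappa})\ge\inf_{\D_{\kappa_0}}\F=\F(\gamma_{\kappa_0},\alpha_{\kappa_0},\rho_{0,\kappa_0})\ge\inf_\D\F$, which identifies the restricted minimizer at $\kappa_0$ (existing by Proposition \ref{restriced_existence}) as itself an unrestricted minimizer. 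Your approach buys more: an explicit, self-consistent description of the minimizer as a limit and full convergence of the minimizing sequence; the paper's buys brevity and softness, needing beyond Proposition \ref{direct_def} only the existence of restricted minimizers. Two small points of hygiene in your write-up: Proposition \ref{direct_def} literally asserts the representation only for $\gamma_\kappa$ via \eqref{gamma_eq}; the companion formula for $\alpha_\kappa$ in \eqref{ABeq2} follows from the Euler--Lagrange equation $\frac{\partial\F}{\partial\alpha}=0$ and hence holds only almost everywhere, which suffices for your limits but should be stated as such. Likewise your chain $\sqrt{A_\kappa^2-B_\kappa^2}\le A_\kappa$ uses $A_\kappa>0$, which is true here because $\int\alpha_\kappa\le 0$ and $\rho_{0,\kappa}>0$ give $A_\kappa(p)=\frac1T\bigl(\eps(p)-U\int\alpha_\kappa+U\rho_{0,\kappa}\bigr)>0$, but it deserves a word.
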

\begin{proof}
We will use the fact that $(\gamma_\kappa, \alpha_\kappa, \rho_{0,\kappa})$ forms a minimizing sequence and, for $\kappa > K$, we have pointwise inequality
$$\gamma_\kappa(p) \le K.$$
Let $\kappa_0$ be the smallest $\kappa$ such that $\kappa_0 > K$. Observe that for $\kappa \ge \kappa_0$, by the above inequality, we have $(\gamma_\kappa, \alpha_\kappa, \rho_{0,\kappa}) \in \D_{\kappa_0}$. Furthermore, we can assume
$\F(\gamma_\kappa, \alpha_\kappa, \rho_{0,\kappa})$ is decreasing as otherwise we can pass to a suitable subsequnce. It follows that
\begin{align*}
    \inf_\D \F(\gamma, \alpha, \rho_0) &= \lim_{\kappa \to \infty}\F(\gamma_\kappa, \alpha_\kappa, \rho_{0,\kappa})
    = \inf_\kappa\F(\gamma_\kappa, \alpha_\kappa, \rho_{0,\kappa})
    \\&\ge  \inf_{\D_{\kappa_0}}\F(\gamma, \alpha, \rho_0)
    = \F(\gamma_{\kappa_0}, \alpha_{\kappa_0}, \rho_{0,\kappa_0})
    \ge \inf_\D \F(\gamma, \alpha, \rho_0),
\end{align*}
which proves that $(\gamma_{\kappa_0}, \alpha_{\kappa_0}, \rho_{0,\kappa_0})$ is the minimizer of the unrestricted problem. 
\end{proof}

\subsubsection{Zero condensate density case}

In this section we assume that $\rho_0^{\min}$ minimizing the function $f$ (recall Proposition \ref{rho_0_conv}) is equal to zero. Recall also Proposition \ref{restriced_existence} from which it follows that there exist minimizers of the restricted problem with fixed $\rho_0 = 0$. By Remark \ref{fixed_0} we already know $\alpha_\kappa \equiv 0$. In the next lemmas, using the Euler-Lagrange inequalities from Corollary \ref{EL+} for $\gamma_\kappa$, we will directly show that the sequence $\gamma_\kappa$ of the restricted minimizers up to a subsequence converges pointwise almost everywhere and in $L^1$ to some function $\tilde \gamma$.

\begin{lemma}
\label{pointwise}
There exists a subsequence of $\gamma_\kappa$ convergent pointwise almost everywhere on $\T^3$ such that there exist limits
$$\lim_{\kappa \to \infty} \int_{\T^3} \gamma_\kappa(p)dp =: G $$
and
$$\lim_{\kappa \to \infty} \gamma_\kappa(p) = \tilde \gamma (p) = \left(e^{\frac{\eps(p) - \mu + 2UG}{T}} - 1\right)^{-1}.$$
\end{lemma}

\begin{proof} \textit{Step 1.} Because $\gamma_\kappa$ is a minimizing sequence, by Corollary \ref{bounded_seq} this sequence is bounded in $L^1(\T^3)$, so we can pass to a subsequence such that sequence of integrals of $\gamma_\kappa$ is convergent. We will assume this in further steps.

\noindent \textit{Step 2.} Using the notation from the previous section and repeating the computations we get the same result as in \eqref{gamma_eq}, that is on the set $\{p \colon \gamma_\kappa(p) < \kappa\}$ we have
$$\gamma_\kappa(p) = \frac12 \left[\frac{e^{\sqrt{A^2_\kappa(p) - B^2_\kappa}}+1}{\sqrt{A^2_\kappa(p) - B^2_\kappa}(e^{\sqrt{A^2_\kappa(p) - B^2_\kappa}}-1)}A_\kappa(p) - 1\right].$$
In this case we have $\rho_0^{\min} = 0$ and $\alpha_\kappa \equiv 0$, so
$$B_\kappa = 0$$
and
$$A_\kappa(p) = \frac1{T}\left(\eps(p) - \mu + 2U\int_{\T^3} \gamma_\kappa(q)dq\right),$$
therefore on the aforementioned set we have
\begin{align*}
\gamma_\kappa(p) &= \frac12 \left[\frac{e^{A_\kappa(p)}+1}{A_\kappa(p)(e^{A_\kappa(p)}-1)}A_\kappa(p) - 1\right] = \frac{1}{e^{A_\kappa(p)} - 1} = \left(e^{\frac1{T}\left(\eps(p) - \mu + 2U\int_{\T^3}\gamma_\kappa(q)dq\right)} - 1\right)^{-1}.
\end{align*}

\noindent \textit{Step 3.} We will give an explicit description of the set $\{p \colon \gamma_\kappa(p) < \kappa\}$. We define
$$E_\kappa := \left\{p \colon \eps(p) > T\ln(1 + 1/\kappa) + \mu - 2U\int_{\T^3} \gamma_\kappa(q) dq\right\}$$
and claim that
\begin{equation}
\label{explicit_set}
\{p \colon \gamma_\kappa(p) < \kappa\} =  E_\kappa.   
\end{equation}
In order to prove this, first note that the condition in the definition of set $E_k$ is equivalent to the cutoff:
$$\eps(p) > T\ln(1 + 1/\kappa) + \mu - 2U\int_{\T^3} \gamma_\kappa(q) dq \Longleftrightarrow \left(e^{\frac{1}{T}\left(\eps(p) - \mu + 2U\int_{\T^3}\gamma_\kappa(q)dq\right)} - 1\right)^{-1} < \kappa,$$
hence we certainly have an inclusion
$$E_\kappa \subset \{p \colon \gamma_\kappa(p) < \kappa\}.$$
To prove that this is in fact an equality of sets, we will use similar argument as in the proof of Proposition \ref{direct_def} -- define $\tilde \gamma_\kappa(p)$ as
$$\tilde \gamma_\kappa(p) = \left\{ 
\begin{array}{ll}
     \left(e^{\frac1{T}\left(\eps(p) - \mu + 2U\int_{\T^3}\gamma_\kappa(q)dq\right)} - 1\right)^{-1} &\text{ if } p \in E_\kappa, \\
     \kappa &\text{ if } p \in \T^3 \setminus E_\kappa.
\end{array}\right.$$
Note that this function belongs to the restricted domain (i.e. $\tilde \gamma_\kappa \le \kappa$) and satisfies the Euler-Lagrange inequalities as in Corollary \ref{EL+}. Furthermore, for almost every $p \in \T^3$ we have
$$\gamma_\kappa(p) \ge \tilde \gamma_\kappa(p),$$
as $\tilde \gamma_\kappa$ is a modification of $\gamma_\kappa$ only for the set of $p$'s in $E_\kappa$ with $\gamma_\kappa(p) = \kappa$. In particular, for such $p$'s we have the strict inequality $\tilde \gamma_\kappa(p) < \gamma_\kappa(p) = \kappa$. If the aforementioned set had positive measure then, using the monotonicity of the entropy in $\gamma$, we would get
$$\frac{\partial \F}{\partial \gamma} = \eps(p) - \mu + 2U\int_{\T^3} \gamma_\kappa(q)dq - T \ln\left(1 + \frac{1}{\gamma_\kappa}\right) > 0,$$
which contradicts the fact $\gamma_\kappa$, as a minimizer, has to satisfy reverse inequality almost everywhere. This finishes the proof of the claim \eqref{explicit_set}. 

\noindent \textit{Step 4.} From the previous step we also got the relation
\begin{equation}\label{eq:TsetminusEk}
\{p \colon \gamma_\kappa(p) = \kappa\} = \T^3 \setminus E_\kappa = \left\{p \colon \eps(p) \le T\ln(1 + 1/\kappa) + \mu - 2U\int_{\T^3} \gamma_\kappa(q) dq\right\}.
\end{equation}
Recall the estimate on the measure of the set $\{p \in \T^3 \colon \gamma_{\kappa}(p) = \kappa\}$ that we have already used in the proof of Lemma \ref{Cc}:
$$|\{p \in \T^3 \colon \gamma_{\kappa}(p) = \kappa\}| \le \frac{C}{\kappa}.$$
In particular, the measure of this set goes to zero as $\kappa \to \infty$. We can conclude that
$$G = \lim_{\kappa \to \infty} \int_{\T^3} \gamma_\kappa(q)dq \ge \frac{\mu}{2U}.$$
Indeed,  if $G < \frac{\mu}{2U}$, then there exists $\eta > 0$ such that for sufficiently large $\kappa$ we have
$$T\ln(1 + 1/\kappa) + \mu - 2U\int_{\T^3} \gamma_\kappa(q) dq > \eta,$$
so each set $\T^3 \setminus E_\kappa = \left\{p \colon \eps(p) \le T\ln(1 + 1/\kappa) + \mu - 2U\int_{\T^3} \gamma_\kappa(q) dq\right\}$ is an open neighborhood of $p=0$ (by continuity of $\eps(p)$) whose measure is positive and separated from zero, which contradicts convergence of this measure to zero.

\noindent \textit{Step 5.} The conclusion that the measure of $\T^3 \setminus E_\kappa$ goes to zero allows us to pass to a subsequence such that the measures of the sets $\T^3 \setminus E_\kappa$ are non-increasing in $\kappa$, which in particular implies that the sequence $T\ln(1 + 1/\kappa) + \mu - 2U\int_{\T^3} \gamma_\kappa(q) dq$ is non-increasing and thus, by the definition of the set in \eqref{eq:TsetminusEk}, the sequence of sets $\T^3 \setminus E_\kappa$ is descending (and so the sequence $E_\kappa$ is ascending). It also follows that the intersection $\bigcap_{\kappa} \T^3 \setminus E_\kappa$ has measure zero (this intersection could possibly be empty) and therefore
$$\bigcap_{\kappa} \T^3 \setminus E_\kappa \subset \{0\}.$$ 

\noindent \textit{Step 6.} Now we can prove $\gamma_\kappa$ converges pointwise almost everywhere. Fix $p \in \T^3 \setminus \{0\}$. From previous step we deduce that there exist $\kappa(p)$ such that for all $\kappa > \kappa(p)$ we have $p \in E_\kappa$, so 
$$\gamma_\kappa(p) = \left(e^{\frac1{T}\left(\eps(p) - \mu + 2U\int_{\T^3}\gamma_\kappa(q)dq\right)} - 1\right)^{-1}.$$
It follows that passing to the limit is possible and yields
$$\lim_{\kappa \to \infty} \gamma_\kappa(p) = \left(e^{\frac1{T}\left(\eps(p) - \mu + 2UG\right)} - 1\right)^{-1} = \tilde \gamma(p).$$
This result holds for any $p \ne 0$, so indeed $\gamma_\kappa$ converges pointwise almost everywhere.
\end{proof}

\begin{lemma}
\label{zero_difference}
The sequence $\gamma_\kappa$ converges to $\tilde \gamma$ in $L^1$.
\end{lemma}
\begin{proof}
Since $\gamma_\kappa$ are positive functions, it suffices to show that
$$\int_{\T^3} \tilde \gamma(p) dp = \lim_{\kappa \to \infty} \int_{\T^3} \gamma_\kappa(p) dp$$
that is
$$\int_{\T^3} \left(e^{\frac1{T}\left(\eps(p) - \mu + 2UG\right)} - 1\right)^{-1} dp = G.$$
By the Fatou lemma we already have the inequality
$$\int_{\T^3} \left(e^{\frac1{T}\left(\eps(p) - \mu + 2UG\right)} - 1\right)^{-1} dp \le G.$$
Suppose the equality does not hold, that is we have strict inequality
$$\int_{\T^3} \left(e^{\frac1{T}\left(\eps(p) - \mu + 2UG\right)} - 1\right)^{-1} dp < G.$$
As in \cite{NapReuSol1}, we will make an argument based on the idea that in this case we could strictly decrease the energy by adding mass to the condensate, which contradicts the fact $\rho_0 = 0$ is a minimizer. Denote
$$D := G - \int_{\T^3} \left(e^{\frac1{T}\left(\eps(p) - \mu + 2UG\right)} - 1\right)^{-1} dp  > 0.$$
We will be working with the same sequence of $\kappa$'s as in the proof of the previous Lemma \ref{pointwise}.

\noindent \textit{Step 1.} For fixed set $E_{K}$ and $p \in E_{K}$ for $\kappa$ sufficiently large (depending on $K$), we have
$$\gamma_\kappa(p) = \left(e^{\frac1{T}\left(\eps(p) - \mu + 2U\int_{\T^3}\gamma_\kappa(q)dq\right)} - 1\right)^{-1} \le \left(e^{\frac1{T}\left(\eps(p) - \mu + 2UM\right)} - 1\right)^{-1},$$
where, using the conclusions of Step 5. from the proof of previous proposition
$$M = \max \left\{\int_{E_{K}} \gamma_{K}(q) dq, \frac{\mu}{2U} \right\}.$$
The above function estimating $\gamma_\kappa$ is integrable on the set $E_{K}$, so using Dominated Convergence Theorem we get
$$\lim_{\kappa \to \infty}\int_{E_{K}} \gamma_\kappa(p) dp = \int_{E_{K}} \tilde \gamma(p) dp.$$
Now, using the fact $E_\kappa$ is an ascending family of sets and the measure of the set $\T^3 \setminus \bigcup_{\kappa}E_\kappa$ is zero, we get
$$\lim_{K \to \infty}\int_{E_K} \tilde \gamma(p) dp = \int_{\T^3} \tilde \gamma(p) dp = G - D.$$
We can use diagonal method to extract a sequence $\kappa(K)$ such that
$$\lim_{K \to \infty} \int_{E_K} \gamma_{{\kappa(K)}}(p) dp = G - D.$$
We also have
$$\lim_{K \to \infty} \int_{\T^3 \setminus E_K} \gamma_{{\kappa(K)}}(p) dp = D.$$

\noindent \textit{Step 2.} We will now evaluate the BBH functional $\F$ on the triple $(\gamma_{\kappa(K)},0,0)$. Separating the integration into sets $E_\kappa$ and its complement and estimating some positive terms from below by zero we obtain
\begin{equation}
\label{big_est}
\begin{split}
\F(\gamma_{\kappa(K)}, \alpha \equiv 0, \rho_0 = 0) &= \int_{\T^3}\eps(p)\gamma_{\kappa(K)}(p) dp - \mu\int_{\T^3} \gamma_{\kappa(K)} (p) dp \\&- T\int_{\T^3} \big[(\gamma_{\kappa(K)}(p) + 1)\ln(\gamma_{\kappa(K)}(p) +1) - \gamma_{\kappa(K)}(p)\ln \gamma_\kappa(p)\big] dp 
\\& + U\left(\int_{\T^3} \gamma_{\kappa(K)} (p) dp\right)^2
\\& \ge \int_{E_K}\eps(p)\gamma_{\kappa(K)}(p) dp - \mu\left(\int_{E_K} \gamma_{\kappa(K)} (p) dp  + \int_{\T^3 \setminus E_K} \gamma_{\kappa(K)} (p) dp\right)
\\&- T\int_{E_K} \big[(\gamma_{\kappa(K)}(p) + 1)\ln(\gamma_{\kappa(K)}(p) +1) - \gamma_{\kappa(K)}(p)\ln \gamma_{\kappa(K)}(p)\big] dp
\\&- T\int_{\T^3 \setminus E_K} \big[(\gamma_{\kappa(K)}(p) + 1)\ln(\gamma_{\kappa(K)}(p) +1) - \gamma_{\kappa(K)}(p)\ln \gamma_{\kappa(K)}(p)\big] dp
\\&+ U\left(\int_{E_K} \gamma_{\kappa(K)} (p) dp\right)^2 + 2U\left(\int_{E_K} \gamma_{\kappa(K)} (p) dp\right) \left(\int_{\T^3 \setminus E_K} \gamma_{\kappa(K)} (p) dp\right)
\\&+ U\left(\int_{\T^3 \setminus E_K} \gamma_{\kappa(K)} (p) dp\right)^2.
\end{split}
\end{equation}
Let us now estimate the entropy term on the set $\T^3 \setminus E_K$. Estimating the integrand on the set where $\gamma_{\kappa(K)}(p) \le 1$ we get
$$(\gamma_{\kappa(K)}(p) + 1)\ln(\gamma_{\kappa(K)}(p) +1) - \gamma_{\kappa(K)}(p)\ln \gamma_{\kappa(K)}(p) \le 2\ln2 + 1 = c $$
and on the set where $\gamma_{\kappa(K)}(p) > 1$
\begin{align*}
&(\gamma_{\kappa(K)}(p) + 1)\ln(\gamma_{\kappa(K)}(p) +1) - \gamma_{\kappa(K)}(p)\ln \gamma_{\kappa(K)}(p) 
\\&= (\gamma_{\kappa(K)}(p) + 1) \ln \left(1 + \frac1{\gamma_{\kappa(K)}(p)}\right) + \ln \gamma_{\kappa(K)}(p) \\& \le 1 + \frac1{\gamma_{\kappa(K)}(p)} + \ln \gamma_{\kappa(K)}(p) 
 \\& \le 2 + \gamma^{1/2}_{\kappa(K)} (p).
\end{align*}
Using this, Cauchy-Schwarz inequality and uniform bound $\|\gamma_\kappa\|_1 \le C$ we get
\begin{align*}
&T\int_{\T^3 \setminus E_K} \big[(\gamma_{\kappa(K)}(p) + 1)\ln(\gamma_{\kappa(K)}(p) +1) - \gamma_{\kappa(K)}(p)\ln \gamma_{\kappa(K)}(p)\big] dp 
\\&\le T\int_{\T^3 \setminus E_K} c + \gamma_{\kappa(K)}^{1/2}(p) dp \\& \le C|\T^3 \setminus E_K| + |\T^3 \setminus E_K|^{1/2}\|\gamma_{\kappa(K)}\|_1^{1/2}
\\& \le C|\T^3 \setminus E_K|^{1/2} \to 0 \text{ as } K \to \infty.
\end{align*}

\noindent \textit{Step 3.} Define a function
$$g_K(p) = \gamma_{\kappa(K)}(p) \mathbbm{1}_{E_K}(p).$$
Using \eqref{big_est} we can rewrite it as
\begin{align*}
\F(\gamma_{\kappa(K)}, 0, 0) &\ge \F(g_K, 0, D) + \mu D - \mu \int_{\T^3 \setminus E_K} \gamma_{\kappa(K)}(p) 
\\& - T\int_{\T^3 \setminus E_K} \big[(\gamma_{\kappa(K)}(p) + 1)\ln(\gamma_{\kappa(K)}(p) +1) - \gamma_{\kappa(K)}(p)\ln \gamma_{\kappa(K)}(p)\big] dp
\\& + U\left(\int_{\T^3 \setminus E_K} \gamma_{\kappa(K)} (p) dp\right)^2 + 2U\left(\int_{E_K} \gamma_{\kappa(K)} (p) dp\right) \left(\int_{\T^3 \setminus E_K} \gamma_{\kappa(K)} (p) dp\right)
\\&- \frac12UD^2 - 2UD\int_{E_K} \gamma_{\kappa(K)}(p)dp
\\&= \F(g_K, 0, D) + \mu \left[D - \int_{\T^3 \setminus E_K} \gamma_{\kappa(K)}(p) \right] 
\\& - T\int_{\T^3 \setminus E_K} \big[(\gamma_{\kappa(K)}(p) + 1)\ln(\gamma_{\kappa(K)}(p) +1) - \gamma_{\kappa(K)}(p)\ln \gamma_{\kappa(K)}(p)\big] dp
\\& + U\bigg[\left(\int_{\T^3 \setminus E_K} \gamma_{\kappa(K)} (p) dp\right)^2 - D^2 \bigg]
\\&+ 2U\left(\int_{E_K} \gamma_{\kappa(K)} (p) dp\right)\bigg[ \left(\int_{\T^3 \setminus E_K} \gamma_{\kappa(K)} (p) dp\right) - D\bigg]
\\&+ \frac12UD^2.
\end{align*}
Due to the results of the previous steps, entropy term and expressions in the brackets converge to zero as $K$ goes to infinity. Denote all of those terms by $R(K)$ and rewrite the above inequality as
$$\F(\gamma_{\kappa(K)}, 0, 0) \ge \F(g_K, 0, D) + R(K) + \frac12 UD^2.$$
Recalling definition \eqref{double_min} of the function $f$ we can further estimate
$$\F(\gamma_{\kappa(K)}, 0, 0) \ge f(D) + R(K) + \frac12 UD^2.$$
As $\gamma_{\kappa(K)}$ is a minimizing sequence, passing to the limit $K \to \infty$ yields
$$f(\rho_0^{\min} = 0) \ge f(D) + \frac12 UD^2.$$
Therefore if $D > 0$ then we get a strict inequality
$$f(0) > f(D),$$
which contradicts the fact $\rho_0^{\min} = 0$ was a minimizer of the function $f$.
\end{proof}

Now we are ready to prove $\tilde \gamma$ is a desired minimizer of the unrestricted problem.
\begin{prop}
\label{rho_0=0_existence}
If $\rho_0^{\min} = 0$ then there exist $(\gamma^{\min}, \alpha^{\min})$ minimizing the functional $\F$. 
\end{prop}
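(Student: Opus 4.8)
The plan is to show that the limit function $\tilde\gamma$ constructed in Propositions \ref{pointwise} and \ref{zero_difference}, paired with $\alpha \equiv 0$ and $\rho_0 = 0$, attains the infimum of $\F$ over the full domain $\D$. The delicate analytic work has already been done: we have both pointwise a.e.\ convergence $\gamma_\kappa \to \tilde\gamma$ and strong $L^1$ convergence, so what remains is an assembly via lower semicontinuity.

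First I would record that $(\gamma_\kappa, 0, 0)$ is a minimizing sequence for the unrestricted problem. Since we are in the case $\rho_0^{\min} = 0$, the second part of Proposition \ref{minimizing_seq} together with Corollary \ref{rho_0_min} gives
$$\lim_{\kappa\to\infty}\F(\gamma_\kappa, 0, 0) = \inf_\D\F.$$
Then I would pass to the subsequence along which Propositions \ref{pointwise} and \ref{zero_difference} yield $\gamma_\kappa(p) \to \tilde\gamma(p)$ for a.e.\ $p$ and $\gamma_\kappa \to \tilde\gamma$ in $L^1(\T^3)$, noting that $(\tilde\gamma, 0, 0) \in \D$ since nonnegativity and integrability of $\tilde\gamma$ are inherited from these convergences, and $\alpha \equiv 0$ trivially respects the constraint $\alpha^2 \le \gamma(\gamma+1)$.

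The core step is lower semicontinuity of $\F(\cdot, 0, 0)$ along this subsequence. I would split the functional into the kinetic--entropy block $\int_{\T^3}\eps(p)\gamma(p)\,dp - TS(\gamma, 0)$ and the remainder $-\mu\int_{\T^3}\gamma + U(\int_{\T^3}\gamma)^2$. For the first block, the integrand is bounded below by the integrable function $T\ln(1 - e^{-\eps(p)/T})$ exactly as in the proof of Proposition \ref{boundedness}; combining this lower bound with the pointwise convergence and Fatou's lemma, precisely as in Step 3 of Proposition \ref{restriced_existence}, gives
$$\int_{\T^3}\eps\tilde\gamma - TS(\tilde\gamma, 0) \le \liminf_{\kappa\to\infty}\left(\int_{\T^3}\eps\gamma_\kappa - TS(\gamma_\kappa, 0)\right).$$
For the remaining two terms the strong $L^1$ convergence delivers genuine convergence, using $\int_{\T^3}\gamma_\kappa \to \int_{\T^3}\tilde\gamma = G$. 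Adding the two contributions (the second being a limit, the first a liminf) yields $\F(\tilde\gamma, 0, 0) \le \liminf_\kappa\F(\gamma_\kappa, 0, 0) = \inf_\D\F$; since $(\tilde\gamma, 0, 0) \in \D$ forces the reverse inequality, we conclude $\F(\tilde\gamma, 0, 0) = \inf_\D\F$, so $(\tilde\gamma, 0)$ is the desired minimizer.

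I do not anticipate a genuine obstacle at this stage. The subtle part—controlling the mass that could escape onto the cut-off set $\T^3 \setminus E_\kappa$ and excluding it through the condensate-comparison argument—was already resolved in Proposition \ref{zero_difference}. The single point that needs care is the \emph{direction} of semicontinuity: Fatou must be applied to the kinetic--entropy block, where only a pointwise limit and an integrable lower bound are available, while the stronger $L^1$ convergence is what licenses passing to the limit in the $\mu$ and $U$ terms; one should not attempt a naive termwise limit throughout.
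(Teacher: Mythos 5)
Your proposal is correct and follows essentially the same route as the paper's own proof: take $(\tilde\gamma,0,0)$ with $\tilde\gamma$ from Propositions \ref{pointwise} and \ref{zero_difference}, apply Fatou's lemma (with the integrable lower bound $T\ln(1-e^{-\eps(p)/T})$, as in Proposition \ref{restriced_existence}) to the kinetic--entropy block, and use the strong $L^1$ convergence for the remaining terms to conclude $\F(\tilde\gamma,0,0)\le\liminf_\kappa\F(\gamma_\kappa,0,0)=\inf_\D\F$. Your write-up is in fact slightly more careful than the paper's, making explicit the domain membership of the limit and the direction in which semicontinuity is needed, but the argument is the same.
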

\begin{proof}
We will show that $(\tilde \gamma, 0 , 0)$ is a desired minimizer, where $\tilde \gamma$ is a limit of $\gamma_\kappa$ from the previous lemmas. We already know that $\gamma_\kappa$ converges to $\tilde \gamma$ pointwise and in $L^1$. Just like in the proof of Proposition \ref{restriced_existence} we can estimate kinetic and entropy term using the Fatou lemma and for remaining terms use the $L^1$ convergence. This gives us
$$\F(\tilde \gamma, 0 ,0) \le \liminf_{\kappa \to \infty} \F(\gamma_\kappa, 0, 0) = \inf_{(\gamma, \alpha, \rho_0) \in\D} \F(\gamma, \alpha, \rho_0),$$
which proves $(\tilde \gamma, 0 ,0)$ is a minimizer of the unrestricted problem.
\end{proof}
\begin{remark}
\label{rem:T>0proof}
Proposition \ref{rho_0=0_existence} together with Proposition \ref{T>0existence1} prove Theorem \ref{main_existence} in the case $T>0.$ 
\end{remark}
\subsection{Zero temperature case}
Now we focus on the zero temperature situation. In the variational derivatives of the functional there is no explicit dependence on values of $\gamma$ and $\alpha$ (only on values of the integrals of those functions), so we need to use some different methods. As mentioned before, our approach to this problem will be based on \cite{Old} with appropriate modifications to the lattice setting.

First notice that if $\mu \le 0$ then functional $\F$ is non-negative and the vacuum $\gamma \equiv 0$, $\alpha \equiv 0$, $\rho_0 = 0$ is a minimizer. This corresponds to the fact in this case the functional is obtained as an expectation value of a non-negative Hamiltonian, see Appendix \ref{app:derivation}. Therefore, from now on we will assume $\mu > 0$. We will start with proving that restricted minimizers are pure states.
\begin{prop}
\label{purity}
For $T=0$ and sufficiently large $\kappa$ minimizers of the restricted problem satisfy
\begin{equation} \label{eq:pureqfT0}
\alpha_\kappa^2(p) = \gamma_\kappa(p)(\gamma_\kappa(p)+1)
\end{equation}
for almost all $p\in \T^3$. Furthermore, a minimizing $\rho_{0,\kappa}$ is positive.
\end{prop}

\begin{proof}
Denote as before $\rho_\gamma = \|\gamma\|_1$ and  let $B_\eta = \{p \colon |p| < \eta\}$. Similarly as in the proof of \cite{NapReuSol1}[Prop. 5.2.] we consider states of the form
$$\gamma(p) = \lambda \mathbbm{1}_{B_\eta}(p), \; \; \alpha(p) = -\sqrt{\gamma(p)(\gamma(p)+1)}, \; \; \rho_0 = \frac{\mu}{U} - \rho_\gamma$$
where for a given (large) $\lambda$ we choose $\eta$ small enough such that $\rho_0 > 0$. Evaluating the BBH functional $\F$ in this state (using similar form of $\F$ like in \eqref{func2}) we obtain
\begin{align*}
\F(\gamma, \alpha, \rho_0) &= \lambda \int_{B_\eta}\eps(p)dp - \frac{\mu^2}{U} + \frac{U}{2}|B_\eta|^2 (\lambda^2 + \lambda) 
\\&+ U\left(\frac{\mu}{U} - |B_\eta| \lambda\right) |B_\eta| (\lambda - \sqrt{\lambda^2 + \lambda}) + \frac{U}{2} |B_\eta|^2 \lambda^2 + \frac{\mu^2}{2U}
\\& = \lambda \int_{B_\eta}\eps(p)dp - \frac{\mu^2}{2U} 
\\&+U|B_\eta|^2\left(\frac{2\lambda^2 + \lambda}{2} + \frac{\lambda^2}{\lambda + \sqrt{\lambda^2 + \lambda}} \right) - \mu|B_\eta| \frac{\lambda}{\lambda + \sqrt{\lambda^2 + \lambda}}
\end{align*}
We can further estimate the kinetic term by using inequality $2 - 2\cos x \le x^2$ and obtain
$$\lambda \int_{B_\eta}\eps(p)dp \le \lambda C |B_\eta|^{\frac{5}{3}},$$
where $C$ is a constant independent of any of the parameters. Now choose $\lambda$ large enough so that
$$- \mu|B_\eta| \frac{\lambda}{\lambda + \sqrt{\lambda^2 + \lambda}} \le -\frac{\mu}{3} |B_\eta|,$$
which is possible as the above $\lambda$-dependent expression converges to $\frac12$. Next choose $\eta$ small enough so that
$$\lambda \int_{B_\eta}\eps(p)dp +U|B_\eta|^2\left(\frac{2\lambda^2 + \lambda}{2} + \frac{\lambda^2}{\lambda + \sqrt{\lambda^2 + \lambda}} \right) - \mu|B_\eta| \frac{\lambda}{\lambda + \sqrt{\lambda^2 + \lambda}} < 0.$$
This choice is also possible as the dependence in $|B_\eta|$ of the above positive terms is of higher order than the dependence of the negative term, i.e. the negative term converges to zero slower than the positive terms. It follows that we have a strict inequality
\begin{equation}
\label{strict_inf}
\inf_\D \F(\gamma, \alpha, \rho_0) < -\frac{\mu^2}{2U}. 
\end{equation}
This holds also for the restricted problem for sufficiently large $\kappa$ (such that $\lambda$ chosen before is smaller than $\kappa$), so we can write
\begin{align*}
    -\frac{\mu^2}{2U} &> \F(\gamma_\kappa, \alpha_\kappa, \rho_{0,\kappa})
    \\&\ge U\rho_{0,\kappa} \int_{\T^3} (\gamma_\kappa(p) + \alpha_\kappa(p)) dp -\mu(\rho_{\gamma_\kappa} + \rho_{0,\kappa})  + \frac{U}{2}(\rho_{\gamma_\kappa} + \rho_{0,\kappa})^2
    \\& \ge  U\rho_{0,\kappa} \int_{\T^3} (\gamma_\kappa(p) + \alpha_\kappa(p)) dp - \frac{\mu^2}{2U}
\end{align*}
Since $U > 0$ it follows that
$$\rho_{0,\kappa} \int_{\T^3} (\gamma_\kappa(p) + \alpha_\kappa(p)) dp < 0,$$
which implies
$$\rho_{0,\kappa} > 0$$
and
$$\int_{\T^3} \alpha_\kappa(p)dp < \int_{\T^3} (\gamma_\kappa(p) + \alpha_\kappa(p))dp < 0.$$
From the first inequality we can deduce that variational derivative of $\F$ with respect to $\rho_0$ must be zero:
$$\frac{\partial \F}{\partial \rho_0} = -\mu + U\int_{\T^3} \alpha_\kappa(q)dq + 2U\int_{\T^3} \gamma_\kappa(q)dq + U\rho_{0,\kappa} = 0,$$
so
$$-\mu + 2U\int_{\T} \gamma_\kappa(q)dq + U\rho_{0,\kappa} = -U\int_{\T^3} \alpha_\kappa(q)dq.$$
Using this we can rewrite variational derivative of $\F$ with respect to $\gamma$ as
\begin{equation}
\label{negative_alpha_zero}
\begin{split}
\frac{\partial \F}{\partial \gamma}  &= \eps(p) - \mu + 2U\int_{\T^3} \gamma_\kappa(q)dq + 2U\rho_0
\\&= \eps(p)  -U\int_{\T^3} \alpha_\kappa(q)dq + U\rho_{0,\kappa} > 0
\end{split}
\end{equation}
This implies that (up to the zero measure set) functions $\gamma_\kappa$ and $\alpha_\kappa$ have to be related through strict equality
\begin{equation}
\label{pure_state}
\alpha_\kappa^2(p) = \gamma_\kappa(p)(\gamma_\kappa(p)+1)
\end{equation}
Indeed, if this equality does not hold on a positive measure set then it is possible to decrease the energy by decreasing $\gamma$. This contradicts the fact of the original state being a minimizer.
\end{proof}

\begin{remark}
\label{negative_alpha}
In the above proof we obtained the fact that $\int \alpha_\kappa$ is negative, so there exists a positive measure set $M \subset \T^3$ such that for $p \in M$ we have
$$\alpha_\kappa(p) < 0.$$
From inequality \eqref{negative_alpha_zero} evaluated at $p = 0$ and continuity of $\eps(p)$ it also follows that we have strict inequality
$$\int_{\T^3} \alpha_\kappa(q) dq < \rho_{0,\kappa}.$$
\end{remark}
\begin{remark}\label{rem:positiverhoT0}
As the strict inequality \eqref{strict_inf} is valid for the unrestricted problem, from computations in the above proposition it also follows that any minimizer of the unrestricted problem (if it exists) also has to satisfy $\rho_0 > 0$ and $\int \alpha < 0$.
\end{remark}
\begin{lemma}
\label{gamma_reduce}
There exist a constant $K$ independent of $\kappa$ such that for every $\kappa$ we have
$$\gamma_\kappa(p) \le K$$
for almost all $p \in \T^3$.
\end{lemma}
\begin{proof}

\noindent \textit{Step 1.} Notice that by Proposition \ref{purity} for almost all $p$ for minimizing $\gamma_\kappa$ and $\alpha_\kappa$ we have
\begin{equation}
\label{gamma_alpha}
\gamma_\kappa(p) = \sqrt{\alpha_\kappa^2(p) + \frac14} - \frac12,  
\end{equation}
so we can define $\gamma_\kappa$ in terms of $\alpha_\kappa$:
$$\gamma_\kappa = \Phi(\alpha_\kappa)$$
where
$$\Phi(\alpha) = \sqrt{\alpha^2 + \frac14} - \frac12.$$
We will consider variation of $\F$ along the curve (in the function space) with $\gamma = \Phi(\alpha)$.

First, similarly as before, we will introduce a new notation for variational derivatives of the BBH functional $\F$:
\begin{align*}
    &A_\kappa(p) = \frac{\partial \F}{\partial \gamma} = \eps(p) - \mu + 2U\int_{\T^3} \gamma_\kappa(q)dq + 2U\rho_{0,\kappa},\\
    &B_\kappa = \frac{\partial \F}{\partial \alpha} = U\int_{\T^3} \alpha_\kappa(q)dq + U\rho_{0,\kappa}.
\end{align*}
Note that $B_\kappa$ is constant with respect to $p$. We already know that $\rho_{0,\kappa} > 0$ (cf. Proposition \ref{purity}) and therefore variational derivative of $\F$ with respect to $\rho_0$ is zero:
$$\frac{\partial \F}{\partial \rho_0} = -\mu + U\int_{\T^3} \alpha_\kappa(q)dq + 2U\int_{\T^3} \gamma_\kappa(q)dq + U\rho_{0,\kappa} = 0.$$
Using this we can rewrite the expression for $A_\kappa(p)$ as
$$A_\kappa(p) = \eps(p) - U\int_{\T^3} \alpha_\kappa(q)dq + U\rho_{0,\kappa}>0$$
because $\int \alpha_\kappa$ is negative.

Now, we can express the variation of the functional along the aforementioned curve with respect to $\alpha$ as
$$\frac{\partial}{\partial \alpha}\F\left(\Phi(\alpha_\kappa), \alpha_\kappa, \rho_{0,\kappa}\right) = A_\kappa(p) \frac{\partial \Phi}{\partial \alpha} + B_\kappa.$$
Computing the derivative of $\Phi$ with respect to $\alpha$ we get
$$\frac{\partial \Phi}{\partial \alpha} = \frac{\alpha}{\sqrt{\alpha^2 + \frac14}}.$$
As $\alpha_\kappa$ is a minimizer lying on the boundary of the domain (in the sense of \eqref{eq:pureqfT0}), the following inequalities hold on the set of full measure:
$$\frac{\partial}{\partial \alpha}\F(\Phi(\alpha_\kappa), \alpha_\kappa, \rho_{0,\kappa}) =  A_\kappa(p) \frac{\alpha_\kappa(p)}{\sqrt{\alpha_\kappa^2(p) + \frac14}} + B_\kappa =\left\{\begin{array}{l}
     = 0 \text{ if } \alpha_\kappa(p) = 0, \\
     \ge 0\text{ if } \alpha_\kappa(p) < 0,\\
     \le 0\text{ if } \alpha_\kappa(p) > 0. 
\end{array}\right.$$
By Remark \ref{negative_alpha} we know that $\alpha_\kappa(p) < 0$ on some positive measure set. For $p$'s from this set we have
\begin{equation}
\label{positive_Bk}
B_k \ge -A_\kappa(p)\frac{\alpha_\kappa(p)}{\sqrt{\alpha_\kappa^2 + \frac14}} > 0,
\end{equation}
so $B_\kappa$ is a positive constant. It follows that $\alpha_\kappa(p) \le 0$ almost everywhere, because for $p$'s such that $\alpha_\kappa(p) > 0$ inequality 
$$ A_\kappa(p) \frac{\alpha_\kappa(p)}{\sqrt{\alpha_\kappa^2 + \frac14}} + B_\kappa \le 0$$
cannot be fulfilled (left hand side is strictly positive). Knowing this we can write
$$\alpha_\kappa(p) = -|\alpha_\kappa(p)|$$
and obtain the inequality
$$-A_\kappa(p)\frac{|\alpha_\kappa|}{\sqrt{\alpha_\kappa^2 + \frac14}} + B_\kappa \ge 0$$
which is equivalent to
$$B_\kappa^2 \ge A_\kappa^2(p)\frac{\alpha_\kappa^2(p)}{\alpha_\kappa^2(p) + \frac14},$$
and after rewriting
\begin{equation}
\label{alpha_est}
\alpha_\kappa^2(p) \le \frac{B^2_\kappa}{4(A^2_\kappa(p) - B^2_\kappa)}.
\end{equation}

\noindent \textit{Step 2.} Our goal is to estimate $B^2_\kappa$ from above and $A^2_\kappa(p) - B^2_\kappa$ from below by constants independent of $\kappa$. Recall that $(\gamma_\kappa, \alpha_\kappa, \rho_{0,\kappa})$ is a minimizing sequence for a functional $\F$ and by Corollary \ref{bounded_seq}
$$\rho_{\gamma_\kappa} + \rho_{0,\kappa} < C$$
for some constant $C$ independent of $\kappa$ (dependent on $U$ and $\mu$). This implies that there exist subsequences such that
$$\rho_{\gamma_\kappa} \to \tilde{\rho_\gamma}, \; \rho_{0,\kappa} \to \tilde{\rho}_0.$$
From now on we will only work with those subsequences. Next observe that using the fact $|\alpha(p)| \le \sqrt2\max\{\gamma^{1/2}, \gamma\}$
\begin{align*}
  \left|\int_{\T^3} \alpha_\kappa(p)dp\right| &\le \int_{\T^3} |\alpha(p)|dp \le \sqrt2(1 + \|\gamma_\kappa\|_1) \le C
\end{align*}
so there also exist a convergent subsequence of the sequence $\int \alpha_\kappa(p)dp \to \omega$ for some $\omega \le 0$.

We will once again use \eqref{strict_inf}, that is the facts
$$\inf_\D \F(\gamma, \alpha, \rho_0) < -\frac{\mu^2}{2U},$$
and
$$\inf_{\D_\kappa} \F(\gamma, \alpha, \rho_0) < -\frac{\mu^2}{2U},$$
for sufficiently large $\kappa$. As minimizers of the restricted problems form a minimizing sequence of the unrestricted problem, we have
$$\lim_{\kappa \to \infty}\F(\gamma_\kappa, \alpha_\kappa, \rho_{0,\kappa}) < -\frac{\mu^2}{2U}.$$
Proceeding similarly as in the previous section we can estimate
\begin{align*}
    -\frac{\mu^2}{2U} &> \lim_{\kappa \to \infty}\F(\gamma_\kappa, \alpha_\kappa, \rho_{\kappa})
    \\&\ge\lim_{\kappa \to \infty} \left[U\rho_{0,\kappa} \int_{\T^3} (\gamma_\kappa(p) + \alpha_\kappa(p)) dp -\mu(\rho_{\gamma_\kappa} + \rho_{0,\kappa})  + \frac{U}{2}(\rho_{\gamma_\kappa} + \rho_{0,\kappa})^2 \right]
    \\& = U\tilde{\rho}_{0}(\tilde{\rho}_\gamma + \omega) -\mu(\tilde{\rho}_{\gamma} + \tilde{\rho}_{0})  + \frac{U}{2}(\tilde{\rho}_{\gamma} + \tilde{\rho}_{0})^2
    \\&\ge U\tilde{\rho}_{0}(\tilde{\rho}_\gamma + \omega) -\frac{\mu^2}{2U}.
\end{align*}
It follows that we have a strict inequality
$$U\tilde{\rho}_{0}(\tilde{\rho}_\gamma + \omega) < 0.$$
Since $U>0$ and $\tilde{\rho}_{0}$ is a limit of positive sequence so it cannot be negative, we obtain strict inequalities
\begin{align*}
    &\tilde{\rho}_{0} > 0,\\
    &\tilde{\rho}_\gamma + \omega < 0.
\end{align*}
The second inequality also implies that
$$\omega < 0.$$
We deduce that there exist $\eta > 0$ such that for sufficiently large $\kappa$ we have
$$\rho_{0,\kappa} > \eta$$
and
$$\int_{\T^3}\alpha_\kappa(p)dp < -\eta.$$

\noindent \textit{Step 3.} Now we are ready to prove the desired bounds. First we will prove the upper bound for $B_\kappa$. We can see that
$$B_\kappa = U\int_{\T^3} \alpha_\kappa(q)dq + U\rho_{0,\kappa} \le U\rho_{0,\kappa} \le C$$
because $\rho_{0,\kappa}$ is bounded as a convergent sequence, so indeed $B_\kappa$ is bounded from above. 

To obtain the second bound we write
$$A^2_\kappa(p) - B^2_\kappa = (A_\kappa(p) + B_\kappa)(A_\kappa(p) - B_\kappa)$$
and estimate each term. By the previous results we have
$$A_\kappa(p) + B_\kappa = \eps(p) + 2U\rho_{0,\kappa} \ge 2U\eta$$
and
$$A_\kappa(p) - B_\kappa = \eps(p) - 2U\int_{\T^3} \alpha_\kappa(p) dp \ge 2U\eta.$$
Finally we get
$$A^2_\kappa(p) - B^2_\kappa \ge 4U^2\eta^2 > 0.$$
Using obtained estimations in inequality \eqref{alpha_est} we get
$$\alpha_\kappa^2(p) \le \frac{C^2}{16U^2\eta^2} =: K$$
Recalling the relation between $\gamma_\kappa$ and $\alpha_\kappa$ we can also write
$$\gamma_\kappa(p) \le \gamma_\kappa(p) (\gamma_\kappa(p) +1) = \alpha^2_\kappa(p) \le K.$$
\end{proof}

\begin{remark}
\label{ineq_T=0}
From the strict inequality \eqref{positive_Bk} and negativity of $\alpha_\kappa$ it follows that we have the strict inequality
$$\rho_{0,\kappa} > \|\alpha_\kappa\|_1.$$ 
\end{remark}
 
We are ready to prove the main result.
\begin{prop}
\label{T=0ex}
For $T=0$ there exist a minimizer of the unrestricted problem. 
\end{prop}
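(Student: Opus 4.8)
The plan is to exploit the cutoff-independent bound $\gamma_\kappa \le K$ from Proposition \ref{gamma_reduce} exactly as in the positive-temperature case of Proposition \ref{T>0existence1}: once $\kappa$ exceeds $K$, the restricted minimizers no longer feel the cutoff, so they all lie in a single fixed restricted domain, on which a minimizer exists and turns out to minimize the full functional. Recall that for $\mu \le 0$ the vacuum is already a minimizer (as noted just before Proposition \ref{purity}), so we may assume $\mu > 0$.

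First I would recall from Proposition \ref{minimizing_seq} that the restricted minimizers $(\gamma_\kappa, \alpha_\kappa, \rho_{0,\kappa})$ form a minimizing sequence for the unrestricted functional, so $\lim_{\kappa\to\infty}\F(\gamma_\kappa,\alpha_\kappa,\rho_{0,\kappa}) = \inf_\D \F$; passing to a subsequence I would assume these energy values are non-increasing, so that the limit equals $\inf_\kappa \F(\gamma_\kappa,\alpha_\kappa,\rho_{0,\kappa})$. Next I fix $\kappa_0$ to be the smallest cutoff with $\kappa_0 > K$. Since $\gamma_\kappa(p)\le K < \kappa_0$ for almost every $p$ and all sufficiently large $\kappa$ (Proposition \ref{gamma_reduce}), we have $(\gamma_\kappa, \alpha_\kappa, \rho_{0,\kappa}) \in \D_{\kappa_0}$ for all $\kappa \ge \kappa_0$. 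By Proposition \ref{restriced_existence} the cutoff-$\kappa_0$ problem has a minimizer with $\inf_{\D_{\kappa_0}}\F = \F(\gamma_{\kappa_0},\alpha_{\kappa_0},\rho_{0,\kappa_0})$, and the chain
\begin{align*}
\inf_\D \F &= \lim_{\kappa\to\infty}\F(\gamma_\kappa,\alpha_\kappa,\rho_{0,\kappa}) = \inf_\kappa \F(\gamma_\kappa,\alpha_\kappa,\rho_{0,\kappa}) \\
&\ge \inf_{\D_{\kappa_0}}\F = \F(\gamma_{\kappa_0},\alpha_{\kappa_0},\rho_{0,\kappa_0}) \ge \inf_\D \F
\end{align*}
collapses to equalities (the middle inequality using that each such term lies in $\D_{\kappa_0}$, the last using $\D_{\kappa_0}\subset\D$). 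Hence $(\gamma_{\kappa_0},\alpha_{\kappa_0},\rho_{0,\kappa_0})$ minimizes the unrestricted problem.

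Unlike Proposition \ref{T>0existence1}, where the real difficulty lay in producing the cutoff-independent bound on $\gamma_\kappa$, here that bound is already in hand from Proposition \ref{gamma_reduce} (itself resting on the purity relation $\alpha_\kappa^2 = \gamma_\kappa(\gamma_\kappa+1)$ of Proposition \ref{purity} and the estimate \eqref{alpha_est}). I therefore do not expect any genuine obstacle in this proposition: it is a short soft argument that merely repackages the earlier hard estimates. The only minor points needing care are the reduction to a non-increasing energy subsequence (so that $\lim_{\kappa\to\infty}$ may be replaced by $\inf_\kappa$) and the elementary verification that $\gamma_\kappa\le K<\kappa_0$ forces membership in $\D_{\kappa_0}$. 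Together with the vacuum minimizer for $\mu\le 0$, this completes the existence claim of Theorem \ref{main_existence} at $T=0$.
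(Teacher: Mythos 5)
Your proof is correct and follows essentially the same route as the paper: both fix $\kappa_0$ as the smallest cutoff exceeding the uniform bound $K$ from Proposition \ref{gamma_reduce}, observe that all later restricted minimizers lie in $\D_{\kappa_0}$, and collapse the same chain of inequalities (after passing to a subsequence with non-increasing energies) to conclude that $(\gamma_{\kappa_0},\alpha_{\kappa_0},\rho_{0,\kappa_0})$ minimizes the unrestricted problem. Your added remark handling $\mu\le 0$ via the vacuum matches the paper's observation made just before Proposition \ref{purity}, so nothing is missing.
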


\begin{proof}
We will use the fact that $(\gamma_\kappa, \alpha_\kappa, \rho_{0,\kappa})$ forms a minimizing sequence and, with the proper selection of a subsequence, all previous estimates hold. Let $\kappa_0$ be the smallest $\kappa$ such that $\kappa_0 > K$. Observe that for $\kappa \ge \kappa_0$ we have $(\gamma_\kappa, \alpha_\kappa, \rho_{0,\kappa}) \in \D_{\kappa_0}$. Next we note that
$\F(\gamma_\kappa, \alpha_\kappa, \rho_{0,\kappa})$ is decreasing. Then we have
\begin{align*}
    \inf_\D \F(\gamma, \alpha, \rho_0) &= \lim_{\kappa \to \infty}\F(\gamma_\kappa, \alpha_\kappa, \rho_{0,\kappa})
    \\&= \inf_\kappa\F(\gamma_\kappa, \alpha_\kappa, \rho_{0,\kappa})
    \\&\ge  \inf_{\D_{\kappa_0}}\F(\gamma, \alpha, \rho_0)
    \\&= \F(\gamma_{\kappa_0}, \alpha_{\kappa_0}, \rho_{0,\kappa_0})
    \\&\ge \inf_\D \F(\gamma, \alpha, \rho_0),
\end{align*}
which proves that $(\gamma_{\kappa_0}, \alpha_{\kappa_0}, \rho_{0,\kappa_0})$ is the minimizer of the unrestricted problem. 
\end{proof}
\begin{remark}
Due to Remark \ref{rem:T>0proof}, Proposition \ref{T=0ex} concludes the proof of Theorem \ref{main_existence}.    
\end{remark}
 
Note that in all the proofs in this subsection ($T=0$ case) the only assumption on parameter $U$ we used is $U>0$. In particular for any $U > 0$ minimizer exists and minimizing $\rho_0$ is positive. This means that the structure of the minimizer does not depend on $U$ and hence we obtain a following
\begin{cor}
\label{QPTnot}
A quantum phase transition (i.e. transition with respect to the parameter $U$ for temperature $T=0$) does not occur.
\end{cor}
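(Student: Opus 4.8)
The plan is to unwind the definition of a quantum phase transition and reduce the claim to the $U$-insensitivity of the minimizer's structure at $T=0$, a fact already visible in the preceding analysis. By Definition \ref{definition_phase}, a phase transition with respect to $U$ occurs only if, for fixed $T$ and $\mu$, there are two values of $U$ whose minimizers lie in different phases, i.e. one with $\rho_0^{\min}=0$ and one with $\rho_0^{\min}\neq 0$. Hence it suffices to fix an arbitrary $\mu\in\R$, set $T=0$, and show that the sign of $\rho_0^{\min}$ is the same for every $U>0$.

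I would split according to the sign of $\mu$, mirroring the organization of this subsection. If $\mu\le 0$, recall that $\F$ is non-negative and that the vacuum $(\gamma\equiv 0,\alpha\equiv 0,\rho_0=0)$ attains the value $0$; it is therefore a minimizer, and by the uniqueness in Theorem \ref{main_existence} the system is in the Mott insulator phase. Since this reasoning uses only $U>0$, the phase is the same for all $U>0$. If instead $\mu>0$, I would invoke Propositions \ref{purity}--\ref{T=0ex}, which produce a minimizer of the unrestricted problem and establish $\rho_0^{\min}>0$. The decisive observation, already stressed in the remark preceding the corollary, is that every step leading to this positivity --- in particular the strict bound \eqref{strict_inf}, $\inf_\D\F<-\frac{\mu^2}{2U}$, from which $\rho_{0,\kappa}>0$ was deduced --- uses nothing beyond $U>0$. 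Thus $\rho_0^{\min}>0$ for every $U>0$, so the system sits in the superfluid phase throughout.

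Putting the two cases together, for any fixed $\mu$ the phase at $T=0$ is constant in $U$: Mott insulator for all $U>0$ when $\mu\le 0$, and superfluid for all $U>0$ when $\mu>0$. In neither situation is there a value of $U$ across which the phase changes, so by Definition \ref{definition_phase} no quantum phase transition occurs at $T=0$. I do not expect a genuine analytic obstacle, since all the substantive work was already carried out in the $T=0$ existence results; the only point that demands care is the bookkeeping that the positivity of $\rho_0^{\min}$ for $\mu>0$ was derived under the sole hypothesis $U>0$, so that the conclusion is truly uniform in $U$ rather than valid only on some restricted range of $U$.
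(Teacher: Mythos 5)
Your proposal is correct and takes essentially the same route as the paper: the corollary there is deduced from precisely your observation, namely that the vacuum minimizes for $\mu\le 0$ and that the entire $T=0$ analysis for $\mu>0$ (in particular the strict bound \eqref{strict_inf} forcing $\rho_0^{\min}>0$) uses no property of $U$ beyond $U>0$, so the phase is constant in $U$.
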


\section{The phase transition and the phase diagram}
\label{section_phase}

In this section we will give the proof of Theorem  \ref{main_structure}. Recall Euler-Lagrange inequalities: if $(\gamma, \alpha, \rho_0)$ is the minimizer of the BBH functional $\F$ then
\begin{equation}
\begin{split}
\label{E_L_full}
\frac{\partial \F}{\partial \gamma}  &= \eps(p) - \mu + 2U\int_{\T^3} \gamma(q)dq + 2U\rho_{0} - T \frac{\gamma + \frac12}{\beta}\ln\frac{\beta + \frac12}{\beta - \frac12} = 0\\
\frac{\partial \F}{\partial \alpha} &= U\int_{\T^3} \alpha(q)dq + U\rho_{0} + T\frac{\alpha}{\beta}\ln\frac{\beta + \frac12}{\beta - \frac12} = 0\\
\frac{\partial \F}{\partial \rho_0} &= -\mu + U\int_{\T^3} \alpha(q)dq + 2U\int_{\T^3} \gamma(q)dq + U\rho_{0} =\left\{\begin{array}{l}
     = 0 \text{ if } \rho_{0} > 0 \\
     \ge 0\text{ if } \rho_{0} = 0.
\end{array}\right. 
\end{split}
\end{equation}
In Proposition \ref{rest_stucture} we characterized the structure of minimizers of the restricted problems for $T>0$. When concerning the unrestricted problem, we can repeat the proof to once again obtain such a characterization. For completeness, we will formulate the following proposition. By Remark \ref{ineq_T=0}, this claim is also valid for $T = 0$.
\begin{prop}
\label{min_struct}
Let $(\gamma, \alpha, \rho_{0})$ be the minimizer of the functional $\F$. Then
$$\alpha \equiv 0 \Longleftrightarrow \rho_{0} = 0.$$
If $\rho_{0} > 0$ then $\alpha$ is negative almost everywhere and we have strict inequality
$$\|\alpha\|_1 < \rho_{0}.$$
\end{prop}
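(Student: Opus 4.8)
The plan is to transfer the argument of Proposition \ref{rest_stucture} from the restricted problem to the unrestricted one, treating the cases $T>0$ and $T=0$ separately. For $T>0$ the minimizer of $\F$ obeys the Euler--Lagrange relations \eqref{E_L_full}, and in particular the $\alpha$-equation holds with equality. Rearranged, it reads
$$T\frac{\alpha}{\beta}\ln\frac{\beta+\frac12}{\beta-\frac12} = -U\int_{\T^3}\alpha(q)\,dq - U\rho_0 =: c,$$
a constant independent of $p$. Since $\beta \ge \tfrac12$, the prefactor $\frac{T}{\beta}\ln\frac{\beta+1/2}{\beta-1/2}$ is strictly positive (cf. \eqref{expression}), so $\alpha$ has a fixed sign almost everywhere, namely the sign of $c$. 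A positive sign is inconsistent: $\alpha>0$ a.e. forces $\int_{\T^3}\alpha>0$ and hence $c=-U(\int_{\T^3}\alpha+\rho_0)<0$, a contradiction (equivalently, when $\rho_0>0$ one checks directly that flipping $\alpha\mapsto-\alpha$ strictly lowers $\F$). Therefore $\alpha\le 0$ a.e.

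First I would extract the dichotomy. If $\{p\colon\alpha(p)=0\}$ has positive measure, then evaluating the $\alpha$-equation there gives $c=0$, whence $\int_{\T^3}\alpha=-\rho_0$ and, by positivity of the prefactor, $\alpha\equiv 0$; plugging back yields $\rho_0=0$. Thus $\alpha$ is either identically zero or nonzero a.e., and $\alpha\equiv 0\Leftrightarrow\rho_0=0$. When $\rho_0>0$ the equality $\int_{\T^3}\alpha=-\rho_0$ is impossible (it would force $\alpha\equiv 0$), so $\int_{\T^3}\alpha>-\rho_0$; since $\alpha\le 0$ this is exactly $\|\alpha\|_1=-\int_{\T^3}\alpha<\rho_0$, and $\alpha<0$ a.e. This is a verbatim repetition of the proof of Proposition \ref{rest_stucture} and disposes of the positive-temperature case.

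For $T=0$ the entropy term is absent, so the $\alpha$-derivative no longer contains the factor that pins down the sign, and the clean argument above is unavailable. Instead I would invoke the construction in Proposition \ref{T=0ex}: the unrestricted minimizer coincides with a restricted minimizer $(\gamma_{\kappa_0},\alpha_{\kappa_0},\rho_{0,\kappa_0})$ for $\kappa_0>K$, so all structural information proved for restricted $T=0$ minimizers transfers directly. In particular, when $\mu>0$, Propositions \ref{purity} and \ref{gamma_reduce} give $\rho_0>0$ and $\alpha<0$ a.e., while Remark \ref{ineq_T=0} records the strict bound $\rho_0>\|\alpha\|_1$; when $\mu\le 0$ the minimizer is the vacuum, so $\rho_0=0$ and $\alpha\equiv 0$ simultaneously. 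In both regimes $\alpha\equiv 0\Leftrightarrow\rho_0=0$, with $\alpha<0$ a.e. and $\|\alpha\|_1<\rho_0$ whenever $\rho_0>0$.

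I expect the $T=0$ case to be the only genuine obstacle, precisely because the sign analysis cannot be run through the $\alpha$-Euler--Lagrange relation and one must instead rely on purity of the minimizing state and on the positivity facts (negativity of $\int_{\T^3}\alpha$ and positivity of the constant $B_\kappa$) gathered during the $T=0$ existence proof. The $T>0$ case is routine once \eqref{E_L_full} is in hand.
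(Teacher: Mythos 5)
Your proposal is correct and takes essentially the same route as the paper: the paper proves Proposition \ref{min_struct} precisely by repeating the Euler--Lagrange argument of Proposition \ref{rest_stucture} (now via \eqref{E_L_full}) for $T>0$, and by invoking Remark \ref{ineq_T=0} for $T=0$, where the structural facts for restricted minimizers (Propositions \ref{purity} and \ref{gamma_reduce}) transfer to the unrestricted minimizer exactly through the identification in Proposition \ref{T=0ex} that you spell out. Your explicit treatment of the $T=0$, $\mu\le 0$ vacuum case only makes precise what the paper leaves implicit.
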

From the above relations we can deduce the following corollary that will be useful later.
\begin{cor}
\label{L1_ineq}
If a minimizing $\rho_0$ is zero, then for a minimizing $\gamma$ we have
$$\|\gamma\|_1 \ge \frac{\mu}{2U}.$$
If a minimizing $\rho_0$ is non-zero, then for a minimizing $\gamma$ we have
$$\|\gamma\|_1 \le \frac{\mu}{2U}.$$
Furthermore, for $\mu \le 0$ it is impossible for the minimizer to have $\rho_0 > 0$.
\end{cor}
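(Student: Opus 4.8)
The plan is to read the statement off directly from the stationarity condition in the $\rho_0$ variable, i.e. the third line of the Euler--Lagrange system \eqref{E_L_full}, combined with the structural information already recorded in Proposition \ref{min_struct}. No genuinely new estimate is required: the corollary is essentially a sign-bookkeeping consequence of those two ingredients. The only point that needs care is correctly tracking the sign of $\int_{\T^3}\alpha+\rho_0$, and this is precisely where the \emph{strict} bound from Proposition \ref{min_struct} enters; I expect this to be the only (mild) obstacle.

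First I would handle the case in which the minimizing $\rho_0$ vanishes (with $\mu\ge 0$). By Proposition \ref{min_struct}, $\rho_0=0$ forces $\alpha\equiv 0$, so $\int_{\T^3}\alpha=0$, and the $\rho_0$-derivative inequality $\partial\F/\partial\rho_0\ge 0$ in \eqref{E_L_full} collapses to $-\mu+2U\int_{\T^3}\gamma\ge 0$. Since $U>0$, this rearranges immediately to $\|\gamma\|_1\ge\mu/(2U)$, which is the first claim.

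Next I would treat $\rho_0>0$. Now the $\rho_0$-derivative vanishes, so \eqref{E_L_full} yields the exact identity
$$2U\int_{\T^3}\gamma = \mu - U\Big(\int_{\T^3}\alpha+\rho_0\Big).$$
The decisive input is the strict norm bound $\|\alpha\|_1<\rho_0$ of Proposition \ref{min_struct}: there $\alpha<0$ almost everywhere, so $\int_{\T^3}\alpha=-\|\alpha\|_1$, whence $\int_{\T^3}\alpha+\rho_0=\rho_0-\|\alpha\|_1>0$. Multiplying by $U>0$ and inserting this into the identity gives $2U\int_{\T^3}\gamma<\mu$, i.e. $\|\gamma\|_1<\mu/(2U)$, which in particular yields the asserted (non-strict) inequality $\|\gamma\|_1\le\mu/(2U)$.

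Finally, for the impossibility statement when $\mu\le 0$, I would argue by contradiction. Suppose a minimizer had $\rho_0>0$; then the displayed identity holds, and its right-hand side is now strictly negative, since $\mu\le 0$ while $U(\int_{\T^3}\alpha+\rho_0)>0$ by the same reasoning as above. This forces $2U\int_{\T^3}\gamma<0$, contradicting $\gamma\ge 0$ together with $U>0$. Hence no minimizer with $\rho_0>0$ exists when $\mu\le 0$. As noted, the whole argument in the last two cases hinges on the strictness $\|\alpha\|_1<\rho_0$ rather than merely $\|\alpha\|_1\le\rho_0$, which is exactly what Proposition \ref{min_struct} supplies.
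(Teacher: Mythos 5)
Your proposal is correct and takes essentially the same approach as the paper: the paper's own proof is just the one-line remark that the corollary ``follows directly from the $\rho_0$ derivative condition in \eqref{E_L_full} and Proposition \ref{min_struct}'', and your write-up spells out exactly that sign-bookkeeping, correctly exploiting the strictness $\|\alpha\|_1<\rho_0$ (with $\alpha<0$ a.e., so $\int\alpha+\rho_0>0$) in the $\rho_0>0$ and $\mu\le 0$ cases.
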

\begin{proof}
It follows directly from the $\rho_0$ derivative condition in \eqref{E_L_full} and Proposition \ref{min_struct}.
\end{proof}

Let us collect the existing results.
\begin{itemize}
    \item It follows from Proposition \ref{T=0ex} and Remark \ref{rem:positiverhoT0} that for $T = 0$ and $\mu > 0$ every minimizer has $\rho_0 > 0$. When $T=0$ and  $\mu \le 0$ then the vacuum is a minimizing state. This proves $(1)$ and $(2)$ of Theorem \ref{main_structure}.
    \item The last statement in Corollary \ref{L1_ineq} implies that for $T > 0$ and $\mu \le 0$ any minimizer is an insulator, i.e. it has $\rho_0 = 0, \alpha \equiv 0$. This proves $(3)$ of  Theorem \ref{main_structure}.
\end{itemize}

To finish the proof of Theorem \ref{main_structure} it remains to prove $(4)$ and $(5)$. This is the content of the two following Lemmata. 

\begin{lemma}
\label{known_area1}
For parameters satisfying:
$$T> 0, \quad \mu > 0, \quad J(T,0) < \frac{\mu}{2U}$$
every minimizer has a non-zero $\rho_0$.
\end{lemma}

\begin{proof}
Suppose minimizing $\rho_0$ is zero. We will show that this will lead to a contradiction. By Proposition \ref{rho_0=0_existence} and preceding proofs we know that in this case the minimizing $\gamma$ is given by the formula
$$\gamma(p) = \left(e^{\frac1{T}\left(\eps(p) - \mu + 2UG\right)} - 1\right)^{-1},$$
where $G \ge \frac{\mu}{2U}$ is the solution to the equation
\begin{equation}
\label{G_eq}
G = \int_{\T^3}\left(e^{\frac1{T}\left(\eps(p) - \mu + 2UG\right)} - 1\right)^{-1}.  
\end{equation}
We will show that with the chosen values of the parameters this equation cannot be fulfilled. Note that the right hand side of the above equality is non-increasing with respect to $G$. Its maximal value (obtained for $G = \frac{\mu}{2U}$) is
$$\int_{\T^3}\left(e^{\frac{\eps(p)}{T}}-1\right)^{-1} dp = J(T,0).$$
By our assumption, this value is lower than the minimal value of the left-hand side of \eqref{G_eq}, equal to $\frac{\mu}{2U}$. This is the desired contradiction.
\end{proof}

\begin{lemma}
\label{known_area2}
For parameters satisfying:
\begin{equation}
\label{lemma_assumption}
T> 0, \quad \mu > 0, \quad J(T,4\mu+2U) > \frac{\mu}{2U}   
\end{equation}
every minimizer has $\rho_0 = 0.$
\end{lemma}
\begin{proof}
We will once again argue by contradiction. Assume that $\rho_0 > 0$. Since $(\gamma, \alpha, \rho_0)$ is a minimizer, we have an inequality
$$\F(\gamma, \alpha, \rho_0) \le \F(\gamma, 0, 0),$$
which, after dropping some positive terms and using $-S(\gamma,0) \le -S(\gamma,\alpha)$ and $\gamma(p) + \alpha(p) \ge -\frac12$
yields
\begin{equation}
\label{rho_0_ineq}
-\mu \rho_0 - \frac12U\rho_0 + \frac12U\rho_0^2 \le 0 \Longrightarrow \rho_0 \le \frac{2\mu}{U} + 1.    
\end{equation}
Now we will estimate $\gamma(p)$ using a similar argument as in the proof of Corollary \ref{K_bound} and the preceding Lemma (recall in particular equation \eqref{gamma_eq}). Using  the Euler-Lagrange equations \ref{E_L_full} we can write $\gamma(p)$ as
$$\gamma(p) = \frac12 \left[\frac{e^{\sqrt{A^2(p) - B^2}}+1}{\sqrt{A^2(p) - B^2}(e^{\sqrt{A^2(p) - B^2}}-1)}A(p) - 1\right],$$
where
\begin{align*}
    A(p) &= \frac1{T}\left(\eps(p) -\mu + 2U\int_{\T^3} \gamma(q)dq + 2U\rho_0\right),\\
    B &= \frac1{T}\left(U\int_{\T^3}\alpha(q)dq + U\rho_0\right).
\end{align*}
By direct computation we can check that the function
$$x \mapsto \frac{e^x+1}{x(e^x-1)}$$
is monotone decreasing for $x>0$, hence (recall $A(p)>0$) we get
$$\gamma(p) \ge \frac12 \left[\frac{e^{A(p)}+1}{A(p)(e^{A(p)}-1)}A(p) - 1\right] = \frac{1}{e^{A(p)}-1} = \left(e^{\frac1{T}\left(\eps(p) -\mu + 2U\int_{\T^3} \gamma(q)dq + 2U\rho_0\right)}-1\right)^{-1}.$$
Using inequality \eqref{rho_0_ineq} and the bound from Corollary \ref{L1_ineq} on $\|\gamma\|_1$, i.e.
\begin{equation}
\label{lemma_assumption2}
\|\gamma\|_1 \le \frac{\mu}{2U}  
\end{equation}
we can further estimate
$$\gamma(p) \ge \left(e^{\frac1{T}\left(\eps(p) + 4\mu + 2U\right)}-1\right)^{-1}.$$
Integrating this inequality over $\T^3$ gives us
$$\|\gamma\|_1 \ge \int_{\T^3}\left(e^{\frac1{T}\left(\eps(p) + 4\mu + 2U\right)}-1\right)^{-1}dp = J(T,4\mu+2U).$$
Once again using \eqref{lemma_assumption2} it follows that
$$J(T,4\mu+2U) \le \frac{\mu}{2U},$$
which, under the assumptions \eqref{lemma_assumption}, is a contradiction.
\end{proof}
Lemma \ref{known_area1} gives the proof of $(4)$ in Theorem \ref{main_structure} and Lemma \ref{known_area2} proves $(5)$. This completes the proof of the whole statement.

\section{Proof of the theorem in the canonical setting}
\label{canonical}
The goal of this section is to provide a proof of Theorem \ref{can_existence}. The proof will rely on results from previous sections. We will not repeat all the details, but rather stress the main differences and show how to make the  required adjustments.

\begin{proof}[Proof of Theorem \ref{can_existence}.] \textit{Step 1: Well-posedness.} By the same argument as in Proposition \ref{boundedness}, the functional $\Fc$ is bounded below, so this problem is well-posed, i.e. the infimum is finite.
\\
\\
\noindent \textit{Step 2: Double minimization.} For now we focus on the formulation \eqref{can_min}. We will once again perform a  double minimization procedure, but this time we will keep fixed both values of $\rho_0$ and $\int \gamma$. To this end, similarly as in Lemma \ref{f_cont} we define
$$\widetilde \F^{\conv}(\gamma, \alpha, \rho_0) = \Fc(\gamma, \alpha, \rho_0) + U\rho_0^2$$
and for $\lambda \ge 0$ and $\rho_0 \ge 0$
$$\widetilde f^{\conv}(\lambda,\rho_0) = \inf_{\substack{(\gamma ,\alpha) \in \D' \\ \int \gamma = \lambda}} \widetilde \F^{\mathrm{conv}}(\gamma, \alpha, \rho_0).$$
As $\widetilde \F^{\conv}$ is convex in all three variables, $\widetilde f^{\conv}$ is also convex and hence continuous on $(0,\infty)\times (0,\infty)$. Slightly adapting the argument of continuity on the boundary given in the proof of the aforementioned lemma, we can conclude that $\widetilde f^{\mathrm{conv}}$ is in fact continuous for all $\lambda, \rho_0 \ge 0$. Now we introduce 
\begin{equation}
\label{double_min_can}
 \fc(\lambda, \rho_0) = \inf_{\substack{(\gamma ,\alpha) \in \D' \\ \int \gamma = \lambda}} \Fc(\gamma, \alpha, \rho_0). 
\end{equation}
Since 
$$\fc(\lambda, \rho_0)= \widetilde f^{\mathrm{conv}}(\lambda, \rho_0) -U\rho_0^2 $$
we obtain that $\fc(\lambda, \rho_0)$ is a continuous function of both variables. 
This allows us to split the minimization problem \eqref{can_min} into two steps: first  finding $\lambda^{\min} \in [0,\rho]$ such that
\begin{equation}
\label{lambda_min}
\fc(\lambda^{\min}, \rho - \lambda^{\min}) = \inf_{0 \le \lambda \le \rho} \fc(\lambda, \rho - \lambda)  
\end{equation}
and then, for fixed $\lambda^{\min}$ and $\rho_0^{\min} = \rho - \lambda^{\min}$, finding $\gamma$ and $\alpha$ minimizing problem \eqref{double_min_can}. The existence of such $\lambda^{\min}$ follows directly from continuity of $\fc$ and compactness of $[0,\rho]$ (note that such $\lambda^{\min}$ possibly is not unique). It remains to prove the existence of proper $\gamma$ and $\alpha$.
\\
\\
\noindent \textit{Step 3: Restricted problem.} Just like before, we introduce restricted problem, that is we are interested in finding minimizer of the problem
\begin{equation}
\label{restricted_can}
\inf_{\D_{\kappa}(\rho)} \Fc(\gamma, \alpha, \rho_0),   
\end{equation}
where
$$\D_{\kappa}(\rho) = \{(\gamma, \alpha, \rho_0) \in \D(\rho) \colon \gamma(p) \le \kappa \}.$$
In analogy to \eqref{double_min_can} we also introduce
\begin{equation}
\label{double_min_can_kappa}
\fc_\kappa(\lambda, \rho_0) = \inf_{\substack{(\gamma ,\alpha) \in \D'_\kappa \\ \int \gamma = \lambda}} \Fc(\gamma, \alpha, \rho_0).    
\end{equation}
Using the same arguments as in the proof of Proposition \ref{restriced_existence} we deduce that:
\begin{itemize}
    \item for every $\lambda \in [0,\rho]$, for sufficiently large $\kappa$ ($\kappa > \rho$ is sufficient) there exist $(\gamma_\kappa, \alpha_\kappa)$ minimizing the problem \eqref{double_min_can_kappa}, that is
    $$\fc_\kappa(\lambda, \rho - \lambda) = \Fc(\gamma_\kappa, \alpha_\kappa, \rho - \lambda);$$
    \item in particular, for some $\lambda_\kappa$ we have
    $$\fc(\lambda_\kappa, \rho - \lambda_\kappa) = \inf_{\D_\kappa(\rho)} \Fc(\gamma, \alpha, \rho_0) = \Fc(\gamma_\kappa, \alpha_\kappa, \rho - \lambda_\kappa).$$
\end{itemize}
Next, just like in Proposition \ref{minimizing_seq} and Proposition \ref{rho_0_conv}, we can prove that $(\gamma_\kappa, \alpha_\kappa, \rho_{0,\kappa})$ with $\rho_{0,\kappa} := \rho -\lambda_\kappa$ is a minimizing sequence of \eqref{can_min}, i.e.
$$\lim_{\kappa \to \infty} \Fc(\gamma_\kappa, \alpha_\kappa, \rho_{0,\kappa}) = \inf_{\D(\rho)} \Fc(\gamma, \alpha, \rho_0)$$
and, up to a subsequence, we have
\begin{equation}
\label{lambda_kappa_conv}
\lim_{\kappa \to \infty} \lambda_\kappa = \lambda^{\min},  
\end{equation}
where $\lambda^{\min}$ is a (one of, in case of non-uniqueness) minimizer of \eqref{lambda_min}.
\\
\\
\noindent \textit{Step 4: Reformulation and separation of cases.} From this point onwards we will focus on the formulation \eqref{can_min2} and its restricted version: finding minimizer of the problem
\begin{equation}
\label{restricted_can2}
\inf_{\D'_\kappa(\rho)} \Fcr(\gamma, \alpha), 
\end{equation}
where $\D'_\kappa(\rho)$ was defined in the previous step. This is a minimization problem equivalent to \eqref{restricted_can}. The restricted minimizers $(\gamma_\kappa ,\alpha_\kappa)$ of \eqref{restricted_can} obtained above are also minimizers of \eqref{restricted_can2}, so we can deduce some of their properties using the variational derivatives
\begin{equation}
\label{variational_derivatives_can}
\begin{split}
\frac{\partial \Fcr}{\partial \gamma}  &= \eps(p) - U\int_{\T^3} \gamma(q)dq + U\rho - U\int_{\T^3} \alpha(q)dq  - T \frac{\gamma + \frac12}{\beta}\ln\frac{\beta + \frac12}{\beta - \frac12},\\
\frac{\partial \Fcr}{\partial \alpha} &= U\int_{\T^3} \alpha(q)dq + U\rho - U\int_{\T^3} \gamma(q)dq + T\frac{\alpha}{\beta}\ln\frac{\beta + \frac12}{\beta - \frac12}.
\end{split}
\end{equation}
In order to do so, similarly as in the grand canonical version of the problem, we will split our considerations into two cases depending on the temperature $T$ -- the first one is $T>0$ and the second one is $T = 0$. When $T>0$ we will further divide the analysis into two instances: when  $\lambda^{\min} < \rho$ (and, as a consequence, $\rho_0^{\min} > 0$) or $\lambda^{\min} = \rho$ (no condensation).
\\
\\
\noindent \textit{Case 1: Positive temperature, existence of condensation.} For now on we assume that $T > 0$ and $\lambda^{\min} < \rho$. The second condition implies that for sufficiently large $\kappa$ we have strict inequality
$$\lambda_\kappa = \int_{\T^3}\gamma_\kappa(p) dp < \rho.$$
Similarly as in Lemma \ref{bound_below} and Lemma \ref{alpha_ineq} we can then prove that 
\begin{equation} \label{eq:gamma_lower_bound_can}
 \gamma_\kappa \ge c > 0,\qquad \text{and} \qquad    \alpha^2_\kappa < \gamma_\kappa(\gamma_\kappa + 1)
\end{equation}
  for almost all $p \in \T^3$. Indeed, the essential steps in the first of those proofs were the following:
\begin{itemize}
    \item On the set $\{\gamma_\kappa(p) < \kappa\}$ we had an inequality
    $$\frac{\partial \F}{\partial \gamma}(p) \le \eps(p) + C - T \ln \frac1{\beta_\kappa - \frac12}.$$
    Here it is also true as
    \begin{equation}
    \label{can_gamma_est}
    - U\int_{\T^3} \gamma_\kappa(q)dq + U\rho - U\int_{\T^3} \alpha_\kappa(q)dq \le U\rho +  U\left|\int_{\T^3} \alpha_\kappa(q)dq\right| \le C(\rho),
    \end{equation}
    which follows from $\|\alpha\|_1 \le C(1 + \|\gamma\|_1) \le C(1 + \rho)$.

    \item Increasing $\gamma_\kappa$ on the certain set would lower the energy. Here, as $\int \gamma_\kappa < \rho$, such increment of $\gamma_\kappa$ is possible without violating the density constraint.
\end{itemize}
It follows that those proofs can be adapted to the canonical setting and therefore the inequalities \eqref{eq:gamma_lower_bound_can} hold true. Note that this adaptation potentially could not be possible if $\lambda^{\min} = \rho$.

With this knowledge, the Euler-Lagrange derivatives \eqref{variational_derivatives_can} for restricted problem have to satisfy  analogous conditions to the ones stated in Corollary \ref{EL+}, that is
\begin{equation}
\label{variational_derivatives_can_rest}
\begin{split}
\frac{\partial \Fcr}{\partial \gamma}  &= \eps(p) - U\int_{\T^3} \gamma_\kappa(q)dq + U\rho - U\int_{\T^3} \alpha_\kappa(q)dq  - T \frac{\gamma_\kappa + \frac12}{\beta_\kappa}\ln\frac{\beta_\kappa + \frac12}{\beta_\kappa - \frac12} = \left\{ 
\begin{array}{l}
     = 0 \text{ if } \gamma_\kappa(p) < \kappa \\
     \le 0\text{ if } \gamma_\kappa(p) = \kappa
\end{array}\right.,\\
\frac{\partial \Fcr}{\partial \alpha} &= U\int_{\T^3} \alpha_\kappa(q)dq + U\rho - U\int_{\T^3} \gamma_\kappa(q)dq + T\frac{\alpha_\kappa}{\beta_\kappa}\ln\frac{\beta_\kappa + \frac12}{\beta_\kappa - \frac12} = 0.
\end{split}
\end{equation}
Furthermore, using the same argument as in Proposition \ref{rest_stucture}, we can deduce that $\alpha_\kappa$ is either identically zero or everywhere negative function and
\begin{equation}
\label{can_alpha_est}
\|\alpha_\kappa\|_1 < \rho - \int_{\T^3} \gamma_\kappa(p)dp,
\end{equation}
that is the norm of $\alpha$ is bounded by the condensate density $\rho_{0,\kappa}$.

For $p$'s such that $\gamma_\kappa(p) < \kappa$ we can express $\gamma_\kappa$ just like in \eqref{gamma_eq} using $A_\kappa(p)$ and $B_\kappa$ defined analogously as in \eqref{AB_intro}, namely
\begin{equation*}
\begin{split}
   A_\kappa(p) &= \frac1{T}\left(\eps(p) - U\int_{\T^3} \gamma_\kappa(q)dq + U\rho - U\int_{\T^3} \alpha_\kappa(q)dq\right),\\
    B_\kappa &= \frac1{T}\left(U\int_{\T^3} \alpha_\kappa(q)dq + U\rho - U\int_{\T^3} \gamma_\kappa(q)dq\right) 
\end{split}
\end{equation*}
and
\begin{equation}
\label{gamma_kappa2}
\gamma_\kappa(p) = \frac12 \left[\frac{e^{\sqrt{A^2_\kappa(p) - B^2_\kappa}}+1}{\sqrt{A^2_\kappa(p) - B^2_\kappa}(e^{\sqrt{A^2_\kappa(p) - B^2_\kappa}}-1)}A_\kappa(p) - 1\right]. 
\end{equation}
We would like to justify that statement in Lemma \ref{Cc} also holds in this case, that is
$$A_\kappa(p) \le C$$
$$\sqrt{A^2_\kappa(p) - B_\kappa^2} \ge c$$
for some constants $C, c > 0$. Here we have $A_\kappa(p) \le C(\rho)$ by \eqref{can_gamma_est} and, by \eqref{can_alpha_est} and density constraint, we get
$$T(A_\kappa(p) - B_\kappa) = \eps(p) - 2U\int_{\T^3}\alpha_\kappa(q)dq > 2U\|\alpha_\kappa\|_1$$
and
$$T(A_\kappa(p) + B_\kappa) = \eps(p) + 2U\rho - 2U\int_{\T^3}\gamma_\kappa(q)dq \ge 2U\rho_{0,\kappa}.$$
These are exactly the same estimates used in the proof of the aforementioned proposition. It follows that this proof is also valid for the canonical setting.

All remaining propositions in the grand canonical setting following Proposition \ref{Cc} were proved based on this proposition alone and the fact that increasing $\gamma_\kappa$ on certain sets leads to a contradiction. Since we already know that it is possible to adapt this proposition to the canonical setting (meaning without violating the density constraint), it follows that all subsequent statements remain true in this case. This ends this part of the proof concerning the existence of canonical minimizers.
\\
\\
\noindent \textit{Case 2: Positive temperature, no condensation.} In this subsection we assume that the infimum of \eqref{can_min} is obtained for $\lambda^{\min} = \rho$ and therefore $\rho_0^{\min} = 0$. First we will prove that if
\begin{equation}
\label{J(T)_rho}
J(T,0) < \rho,  
\end{equation}
then this situation is not possible, i.e. if the temperature is sufficiently low then the minimizing $\rho_0^{\min}$ cannot be zero.

We will argue by contradiction. Suppose that this is the case, then, by \eqref{lambda_kappa_conv}, we have $\lambda_\kappa \to \rho$. Passing to a subsequence, we can assume that either $\lambda_\kappa < \rho$ or $\lambda_\kappa = \rho$ for all $\kappa$. 

In the first case ($\lambda_k < \rho$) we can express $\gamma_\kappa$ like in \eqref{gamma_kappa2}. Then, using \eqref{can_alpha_est}, we can deduce
$$\lim_{\kappa \to \infty} B_\kappa = 0$$
and
$$\lim_{\kappa \to \infty} A_\kappa(p) = \eps(p).$$
Therefore, for $p \ne 0$, the sequence $\gamma_\kappa$ converges pointwise:
$$\lim_{\kappa \to \infty} \gamma_\kappa(p) = \left(e^{\eps(p)/T} - 1\right)^{-1}.$$
By Fatou lemma we have
$$\rho = \lim_{\kappa \to \infty} \lambda_\kappa = \lim_{\kappa \to \infty} \int_{\T^3} \gamma_\kappa(p) dp \ge \int_{\T^3} \left(e^{\eps(p)/T} - 1\right)^{-1} dp = J(T,0),$$
which contradicts \eqref{J(T)_rho}.

In the second case ($\lambda_\kappa = \rho$), by the monotonicity of the entropy, for any $(\gamma, \alpha) \in \D'$ we have
$$\Fcr(\gamma, 0, \rho_0= 0) \le \Fcr(\gamma, \alpha, \rho_0= 0)$$
and therefore $\alpha_\kappa \equiv 0$. Furthermore, as $\gamma_\kappa$ is a minimizer of the functional under the constraint $\int \gamma \le \rho$, there exists a Lagrange multiplier $\nu_\kappa \le 0$  such that
\begin{equation}
\label{necessary}
\frac{\partial}{\partial \gamma}\left(\Fcr - \nu_\kappa \int \gamma\right) = \eps(p) - T\ln\left(1 + \frac1{\gamma_\kappa(p)}\right) - \nu_\kappa = \left\{ 
\begin{array}{l}
     = 0 \text{ if } \gamma_\kappa(p) < \kappa \\
     \le 0\text{ if } \gamma_\kappa(p) = \kappa.
\end{array}\right.    
\end{equation}

Note that the Lagrange multiplier did not appear (that is $\nu_\kappa = 0$) in the previous case as the condition $\lambda^{\min} < \rho$ meant that the constraint was not active. Solving this relation for $\gamma_\kappa$ we get
$$\gamma_\kappa(p) = \min \left\{\kappa, \left(e^{\frac1{T}(\eps(p) - \nu_\kappa)} - 1\right)^{-1}\right\}.$$
Passing to a subsequence we can further assume that the sequence $\nu_\kappa$ is convergent to some $\nu_0 \le 0$ (possibly $\nu_0 = - \infty$). Using this fact we can repeat the argument used in the proofs of Lemma \ref{pointwise} and Lemma \ref{zero_difference} -- we can show, with a proper selection of a subsequence, that $\gamma_\kappa$ converges pointwise to
$$\tilde \gamma(p) = \left(e^{\frac1{T}(\eps(p) - \nu_0)}-1\right)^{-1}$$
and integrals of $\gamma_\kappa$ (that is $\lambda_\kappa$) converge to the integral of the above function, as otherwise we could strictly decrease the energy by moving mass to the condensate, which contradicts the fact $\lambda^{\min} = \rho$. It follows that the following equality needs to hold true:
\begin{equation}
\label{mu_0_int}
\int_{\T^3} \left(e^{\frac1{T}(\eps(p) - \nu_0)}-1\right)^{-1} dp = \rho.    
\end{equation}
Treating the left hand side as a function of $\nu_0 \le 0$, its highest value is obtained for $\nu_0 = 0$ and equals 
$$\int_{\T^3} \left(e^{\eps(p)/T}-1\right)^{-1} dp = J(T,0).$$
This is a contradiction with \eqref{J(T)_rho} as the right hand side of \eqref{mu_0_int} is strictly larger than that.

Now we shall prove that if we assume the reverse inequality in \eqref{J(T)_rho}, that is
$$J(T,0) \ge \rho,$$
then indeed there exist minimizer of \eqref{can_min2} with $\int \gamma = \rho$. This follows from considerations similar to those above, as \eqref{mu_0_int} does not yield a contradiction. Furthermore, it can be shown that as $\gamma_\kappa$ converges in $L^1(\T^3)$ then the limiting function is a desired minimizer.
\\
\\
\noindent \textit{Case 3: Zero temperature.} First we will show that for $T = 0$ we have a strict inequality
$$\inf_{\D'(\rho)}\Fcr(\gamma, \alpha) < \frac{U}{2}\rho^2,$$
that is the ground state energy is strictly smaller that the energy of the pure condensation state. This follows from performing analogous computations as in Proposition \ref{purity}, i.e. considering trial states
$$\gamma(p) = \lambda \mathbbm{1}_{B_\eta}(p), \; \; \alpha(p) = -\sqrt{\gamma(p)(\gamma(p)+1)},$$
where $B_\eta$ is a ball centered at $p=0$ with radius $\eta$ and both parameters $\lambda$ and $\eta$ are chosen in such a way that this state belongs to the domain $\D'(\rho)$. Evaluating  the canonical BBH functional \eqref{BBHcan} in this state we get
\begin{align*}
\Fcr(\gamma, \alpha) &= \lambda\int_{B_\eta}\eps(p)dp + \frac{U}{2}\rho^2
\\&+ \frac{U}{2}|B_\eta|^2\left(2\lambda^2 + \lambda \right)
\\&+ U(\rho - |B_\eta|\lambda)(-|B_\eta|\sqrt{\lambda^2 + \lambda} + |B_\eta|\lambda)
\\& < \frac{U}{2}\rho^2
\end{align*}
for an appropriate choice of $\lambda$ and $\eta$. This inequality is also true for minimizers of the restricted problem (for sufficiently large $\kappa$). As the term $(\rho - \int \gamma)(\int \gamma + \int \alpha)$ is the only one in the functional that can possibly yield negative value, we have
\begin{equation}
\label{can_ineq}
\begin{split}
\frac{U}{2}\rho^2 &> \Fcr(\gamma_\kappa, \alpha_\kappa)
\\& \ge \frac{U}{2}\rho^2 + \left(\rho - \int_{\T^3} \gamma_\kappa(q)dq\right)\left(\int_{\T^3} \gamma_\kappa(q)dq + \int_{\T^3} \alpha_\kappa(q)dq\right)
\end{split}
\end{equation}
from which it follows that we have strict inequalities
$$\int_{\T^3} \gamma_\kappa(q)dq + \int_{\T^3} \alpha_\kappa(q) dq < 0$$
and
$$\rho - \int_{\T^3} \gamma_\kappa(q)dq > 0.$$
Analogously, as in \eqref{negative_alpha_zero}, we deduce
$$\frac{\partial \Fcr}{\partial \gamma} = \eps(p) - U\int_{\T^3} \gamma_\kappa(q)dq + U\rho - U\int_{\T^3} \alpha_\kappa(q)dq > 0,$$
so that the equality
$$\alpha_\kappa^2(p) = \gamma_\kappa(p)(\gamma_\kappa(p)+1)$$
must hold almost everywhere.

Now, just like in Lemma \ref{gamma_reduce} we repeat the procedure of reducing the $\gamma$ variable in the functional by setting $\gamma = \Phi(\alpha) = \sqrt{\alpha^2 + \frac14} - \frac12$. By repeating the steps in the proof of that proposition, we obtain the inequality
$$\alpha^2_\kappa(p) \le \frac{B_\kappa}{4(A^2_\kappa(p) -  B^2_\kappa)}.$$
We want to prove that the right hand side is bounded by the constant independent of $\kappa$ (up to a subsequence). This follows from similar reasons as in the mentioned proof -- as the sequence of integrals $\int \alpha_\kappa$ is bounded, we can pass to a subsequence such that $\int \alpha_\kappa$ converges to some $\omega \le 0$ (recall that we already know $\int \gamma_\kappa$ converges to $\lambda^{\min}$). Using this and the fact that $(\gamma_\kappa, \alpha_\kappa)$ is a minimizing sequence of the problem \eqref{can_min2} we can improve \eqref{can_ineq} and its conclusions by noting that
\begin{align*}
\frac{U}{2}\rho^2 &> \lim_{\kappa \to \infty} \Fcr(\gamma_\kappa, \alpha_\kappa)
\\&\ge \lim_{\kappa \to \infty} \left[\frac{U}{2}\rho^2 + \left(\rho - \int_{\T^3} \gamma_\kappa(q)dq\right)\left(\int_{\T^3} \gamma_\kappa(q)dq + \int_{\T^3} \alpha_\kappa(q)dq\right)\right]
\\&= \frac{U}{2}\rho^2 + \left(\rho - \lambda^{\min} \right)\left(\lambda^{\min} + \omega\right)
\end{align*}
and hence
$$\left(\rho - \lambda^{\min} \right)\left(\lambda^{\min} + \omega\right) < 0.$$
This means that
$$\rho - \lambda^{\min} > 0$$
and
$$\lambda^{\min} + \omega < 0.$$
Therefore, for sufficiently large $\kappa$ we have
\begin{equation}
\label{can_eta}
\begin{split}
&\rho - \int_{\T^3}\gamma_\kappa(q)dq > \eta > 0,
\\&\int_{\T^3} \alpha_\kappa(q) \le \int_{\T^3} \gamma_\kappa(q)dq + \int_{\T^3} \alpha_\kappa(q) dq \le -\eta < 0.
\end{split}
\end{equation}
for some $\eta > 0$.

Next, using \eqref{can_eta}, for sufficiently large $\kappa$ we have
$$A_\kappa(p) - B_\kappa = \eps(p) - 2U\int_{\T^3}\alpha_\kappa(q)dq = \eps(p) + 2U\|\alpha_\kappa\|_1 > 2U\eta$$
and
$$A_\kappa(p) + B_\kappa = \eps(p) + 2U\rho - 2U\int_{\T^3}\gamma_\kappa(q)dq > 2U\eta.$$
The remaining part of the proof is exactly the same as in the grand canonical version (i.e. Lemma  \ref{gamma_reduce} and Proposition \ref{T=0ex}). This ends the proof of the existence part in Theorem \ref{can_existence}.
\\
\\
\noindent \textit{Step 5.} To finish the proof of Theorem \ref{can_existence}
we first notice that the fact that below $T_1$ the minimizer has $\rho_0 > 0$ follows from the analysis in Step 4 (Case 2) for $T>0$ and Step 4 (Case 3) for $T=0$. This proves $(1)$. 

The proof of $(2)$ is analogous to the proof of Lemma \ref{known_area2}. Suppose that for $T \ge T_2$ (in particular $J(T,2U\rho) \ge J(T_2,2U\rho)$) there exist a minimizer $(\gamma, \alpha, \rho_0)$ with $\rho_0 > 0$. In particular, as we are working with the constrained problem, this implies
\begin{equation}
\label{int_gamma<rho}
\|\gamma\|_1 < \rho.
\end{equation}
Using the fact that in this case both variational derivatives \eqref{restricted_can2} are zero we can write, similarly as in \eqref{gamma_kappa2} that
$$\gamma(p) = \frac12 \left[\frac{e^{\sqrt{A^2(p) - B^2}}+1}{\sqrt{A^2(p) - B^2}(e^{\sqrt{A^2(p) - B^2}}-1)}A(p) - 1\right]$$
with
\begin{align*}
    A(p) &= \frac1{T}\left(\eps(p) - U\int_{\T^3} \gamma(q)dq + U\rho - U\int_{\T^3} \alpha(q)dq\right),\\
    B &= \frac1{T}\left(U\int_{\T^3} \alpha(q)dq + U\rho - U\int_{\T^3} \gamma(q)dq\right).
\end{align*}
We estimate $\gamma(p)$ pointwise in the same way as before
$$\gamma(p) \ge \left(e^{A(p)}-1\right)^{-1}$$
and by inequality \eqref{can_alpha_est} (which holds also for the unrestricted minimizer by the same argument)  we further estimate
$$A(p) \le \frac1{T}\left(\eps(p) + 2U\rho - 2U\int_{\T^3}\gamma(q)dq\right) \le \frac1{T}\left(\eps(p) + 2U\rho\right)$$
so that
$$\gamma(p) \ge \left(e^{\frac1{T}\left(\eps(p) + 2U\rho\right)}-1\right)^{-1}.$$
Integrating this inequality over $\T^3$ gives us
$$\|\gamma\|_1 \ge J(T,2U\rho).$$
Combining this with \eqref{int_gamma<rho} is a contradiction with the assumption $J(T,2U\rho) \ge \rho = J(T_2,2U\rho)$. It follows that $\rho_0 = 0$. Thus, joint convexity in $(\gamma, \alpha)$ implies uniqueness of the minimizer. This ends the proof of the theorem.
\end{proof}

\noindent {\bf Data availability.} Data sharing is not applicable to this article as no new data were created or analyzed in this study.
\medskip
\\
\noindent \textbf{Acknowledgements.} The work of both authors was supported by the Polish-German NCN-DFG grant Beethoven Classic 3 (project no. 2018/31/G/ST1/01166). We would like to thank the anonymous referees for their valuable comments and suggestions, which helped improve the quality of this paper.

\appendix

\section{Derivation of the functional} \label{app:derivation}
For an even number $L \in 2\mathbb{N}$ we consider a (finite) lattice $\Lambda \subseteq \Z^3$ of the form
$$\Lambda = \left[-\frac{L}{2},\frac{L}{2}\right]^3 \cap \Z^3$$
equipped with the periodic boundary condition. The first Brillouin zone (or the Fourier dual lattice) $\Lambda^*$ is then defined as
$$\Lambda^* = 
\left\{\frac{2\pi}{L+1}j -\pi \, \colon j = 0,1,\dots,L \right\}^3 = \left\{-\pi, -\frac{(L-1)\pi}{L+1}, -\frac{(L-3)\pi}{L+1},\dots, \frac{(L-3)\pi}{L+1}, \frac{(L-1)\pi}{L+1}\right\}^3$$
A single particle on the lattice is described by the element of the one-body Hilbert space 
$$\h = \ell^2(\Lambda).$$
We will denote elements of $\h$ as $\psi = (\psi(x))_{x \in \Lambda}$. When describing the system of $N$ bosonic particles we consider the symmetrized tensor product of the one particle Hilbert spaces
$$\h_N := \bigotimes_{\text{sym}}^N \h,$$
that is this the subspace of $\h^{\otimes N}$ consisting of functions $\psi(x_1,\dots,x_N)$ such that for any permutation $\sigma \in S_N$ we have
$$\psi(x_1,\dots,x_N) = \psi(x_{\sigma(1)},\dots, x_{\sigma(N)}).$$
The Hamiltonian $H_N$ of the interacting particles acting on the space $\h_N$ is given by
$$H_N = -t\sum_{i=1}^N \Delta_i + \frac12\sum_{i,j=1}^N U\delta_{i,j}.$$
Here $\Delta_i$ is the lattice Laplace operator acting on the $i$-th variable:
$$\Delta_i \psi(x_1,\dots,x_N) = \sum_{k=1}^d [\psi(x_1,\dots,x_i + e_k,\dots,x_N) + \psi(x_1,\dots,x_i - e_k,\dots,x_N) - 2\psi(x_1,\dots,x_N)],$$
where $e_k$ is the vector consisting of $1$ on the $k$-th position and zero elsewhere.

The second quantization \cite{DerGer-13} of $H_N$ is then given by 
\begin{equation}
\label{BHHam2}
H =  -t\sum_{ \left\langle x, y \right\rangle \subset \Lambda } a^*_x a_y +6t\sum_{x \in \Lambda}n_x + \frac{U}{2} \sum_{x\in \Lambda} n_x \left(n_x - 1 \right)    
\end{equation}
where $a_x$ and $a_x^*$ are respectively annihilation and creation operators of a particle on lattice side $x$, $n_x$ is a particle number operator $n_x = a_x^*a_x$ and the summation in the first term is taken over the ordered pairs of nearest neighbors $\la x, y\ra$ (ordered means $\la x,y\ra \ne \la y, x\ra$). After rewriting in the momentum space, i.e. 
$$a_p = a(e_p), \qquad \text{with} \qquad e_p(x) = \frac1{\sqrt{|\Lambda|}} e^{i p\cdot x},$$
the Hamiltonian becomes 
\begin{equation}
\label{hamiltonian_p}
H=\sum_p (\eps_p-\mu) a^*_p a_p+\frac{U}{2 |\Lambda|}\sum_{p,q,k}  a_{p+k}^* a_{q-k}^* a_q a_p
\end{equation}
where
$$\eps_p=2t\sum_{j=1}^3(1 - \cos p_j) = 4t\sum_{j=1}^3 \sin^2 \left(\frac{p_j}{2}\right).$$ 
Here we already added the operator $-\mu \sum_x n_x=-\mu \sum_p a_p^* a_p$ corresponding to the grand-canonical setting (with this term the Hamiltonian is sometimes called the grand canonical Hamiltonian). 

To reduce the number of parameters we will rescale the Hamiltonian by a factor $1/t$, that is we divide $H$ by $t$ and replace $U/t \to U$ and $\mu/t \to \mu$ (note that this is equivalent to setting $t=1$). From now on we will consider the rescaled Hamiltonian only.

Now we will recall the notion of quasi-free states. By definition these are the states that satisfy Wick's rule, which, in case that is relevant to us, means
$$\langle a_{p+k}^* a_{q-k}^* a_q a_p\rangle =\langle a^*_{p+k}a^*_{q-k}\rangle\langle a_q a_p\rangle+\langle a^*_{p+k}a_q\rangle\langle a^*_{q-k}a_p\rangle+\langle a^*_{p+k}a_p\rangle\langle a^*_{q-k}a_q\rangle$$
and that the expectations of odd number of operators is zero. Here $\langle \cdot \rangle$ is an expectation value in the chosen state $\omega$ (i.e. positive trace class operator on the Hilbert space with trace equal to one), that is $\langle A \rangle = \Tr A\omega$.

We will further consider the quasi-free states   that are also translation invariant, i.e. they satisfy 
\begin{align*}
   \langle a_{p+k}^* a_{q-k}^* a_q a_p\rangle &= \langle a^*_{p+k}a^*_{q-k}\rangle\langle a_q a_p\rangle \delta_{p+k,-q+k}\delta_{p,-q}+\langle a^*_{p+k}a_q\rangle\langle a^*_{q-k}a_p\rangle \delta_{p+k,q}\delta_{q-k,p}\\&+\langle a^*_{p+k}a_p\rangle\langle a^*_{q-k}a_q\rangle \delta_{p+k,p}\delta_{q-k,q}.
\end{align*}
Now, since we are dealing with bosons, we want to include the possibility of the occurrence of a Bose-Einstein condensate. Bogoliubov \cite{Bogo} suggested to do it by introducing the so-called c-number substitution (this has been then rigorously justified in \cite{LieSeiYng-05}). Mathematically, this can be done by acting with a Weyl transformation on the zero momentum mode operator
$$W^*a_pW = a_p + \delta_{0,p}\sqrt{|\Lambda|\rho_0}, \qquad W^*a_p^*W = a_p^* + \delta_{0,p}\sqrt{|\Lambda|\rho_0}$$
where $\rho_0 \ge 0$ is a parameter that describes the density of the condensate.

The Bogoliubov trial states are now obtained by considering expectation values over quasi-free states with a Weyl transformation of the observable on top of that.

More precisely, in order to evaluate the energy of the system in a Bogoliubov trial state, we first formally replace $a_0 \to a_0 + \sqrt{|\Lambda|\rho_0}$ and $a_0^* \to a_0^* + \sqrt{|\Lambda|\rho_0}$ (the action of the Weyl operator) and further evaluate expectation values of product of creation and annihilation operators  using Wick's rule and translation invariance property. Performing the computation we get
\begin{equation*}
\begin{aligned}
    \langle H \rangle &= \sum_p (\eps_p-\mu) \langle a^*_p a_p\rangle - \mu|\Lambda|\rho_0+\frac{U}{2 |\Lambda|} \sum_{p,q,k} \langle a_{p+k}^* a_{q-k}^* a_q a_p \rangle \\&+\frac{U}{2 |\Lambda|} \cdot |\Lambda|\rho_0 \sum_{p,q,k}\Big[  \langle a_q a_p \rangle\delta_{p+k,0}\delta_{q-k,0} + \langle a^*_{q-k}a_p \rangle \delta_{p+k,0}\delta_{q,0} + \langle a^*_{q-k}a_q \rangle \delta_{p+k,0}\delta_{p,0} \\&+ \langle a^*_{p+k}a_p \rangle \delta_{q-k,0}\delta_{q,0} + \langle a^*_{p+k}a_q \rangle \delta_{q-k,0}\delta_{p,0} + \langle a^*_{p+k}a^*_{q-k} \rangle \delta_{q,0}\delta_{p,0}   \Big] + \frac{U}{2|\Lambda|} \cdot |\Lambda|^2\rho_0^2 \\& = \sum_p (\eps_p-\mu) \langle a^*_p a_p\rangle - \mu|\Lambda|\rho_0 +\frac{U}{2 |\Lambda|}\sum_{p,q}\left[ \langle a_{p}^* a_{-p}^*\rangle \langle a_q a_{-q} \rangle  + 2\langle a_{p}^* a_{p}\rangle \langle a^*_q a_{q} \rangle \right] \\&+ \frac{U}{2}\rho_0 \sum_p\left[ 2\langle a_p a_{-p} \rangle + 4\langle a_p^* a_{p} \rangle \right] + \frac{U}{2}|\Lambda|\rho_0^2 \\& = \sum_p (\eps_p-\mu) \gamma(p) - \mu|\Lambda|\rho_0 +\frac{U}{2 |\Lambda|}\left[\sum_{p,q} \left(\alpha(p)\alpha(q)  + 2\gamma(p)\gamma(q) \right)\right] \\&+ \frac{U}{2}\rho_0 \left[\sum_p \left(2\alpha(p) + 4 \gamma(p)\right)\right] + \frac{U}{2}|\Lambda|\rho_0^2,
\end{aligned}
\end{equation*}
where we have introduced
\begin{align*}
\gamma(p) &= \langle a^*_pa_p\rangle,\\
\alpha(p) &= \langle a_pa_{-p}\rangle
\end{align*}
and additionally assumed $\langle a_pa_{-p}\rangle = \langle a^*_pa^*_{-p}\rangle$, so that $\alpha$ is a real function. The function $\gamma$ is the density of particles outside the condensate and $\alpha$ is a function that describes paring in the system. Those are related to the (generalized) one particle density matrix. The property that this matrix is positive translates to the fact that
$$\alpha^2(p) \le \gamma(p)(\gamma(p)+1).$$
This is the reason we include this condition in the definition of the domain \eqref{domain}.

In order to describe the system at positive temperatures one needs to find the entropy part of the functional. Since the Weyl transformation is unitary, it does not influence the von Neumann entropy. Its value for quasi-free states has been derived in \cite[Appendix A.3.]{NapReuSol1}.    

Finally, evaluating $\langle H -TS \rangle$, dividing this result by $|\Lambda|$ and formally taking the macroscopic limit $|\Lambda| \to \infty$ (i.e. assuming $\frac1{|\Lambda|} \sum_p \to \int_{\T^3} dp$ with normalized measure $dp$) we obtain the free energy density functional
\begin{align*}
\F(\gamma, \alpha, \rho_0) &= \int_{\T^3}(\eps (p) - \mu)\gamma(p) dp - \mu\rho_0 - TS(\gamma, \alpha) \\&+ \frac{U}{2}\left(\int_{\T^3} \alpha(p)dp\right)^2 + U\left(\int_{\T^3}\gamma(p)dp\right)^2 \\&+ U\rho_0\int_{\T^3} \alpha(p) dp + 2U\rho_0\int_{\T^3} \gamma(p) dp + \frac{U}{2}\rho_0^2,
\end{align*}
which is the result \eqref{BBH}.

\end{document}